\renewcommand{\Re}{\operatorname{Re}}
\renewcommand{\Im}{\operatorname{Im}}
\newcommand{\Tr}{\operatorname{Tr}}
\newcommand{\I}{\mathrm{i}}
\newcommand{\mc}[1]{\mathcal{#1}}
\newcommand{\mf}[1]{\mathfrak{#1}}
\newcommand{\wt}[1]{\widetilde{#1}}
\newcommand{\abs}[1]{\left\lvert#1\right\rvert}
\newcommand{\norm}[1]{\left\lVert#1\right\rVert}
\newcommand{\ud}{\,\mathrm{d}}
\newcommand{\Or}{\mathcal{O}}
\newcommand{\EE}{\mathbb{E}}
\newcommand{\NN}{\mathbb{N}}
\newcommand{\RR}{\mathbb{R}}
\newcommand{\CC}{\mathbb{C}}
\newcommand{\ZZ}{\mathbb{Z}}
\newtheorem{thm}{\protect\theoremname}
\newtheorem{lem}[thm]{\protect\lemmaname}
\newtheorem{cor}[thm]{\protect\corollaryname}
\newtheorem{defn}[thm]{\protect\definitionname}
\newtheorem*{problem*}{Problem}
\providecommand{\definitionname}{Definition}
\providecommand{\assumptionname}{Assumption}
\providecommand{\corollaryname}{Corollary}
\providecommand{\lemmaname}{Lemma}
\providecommand{\propositionname}{Proposition}
\providecommand{\remarkname}{Remark}
\providecommand{\theoremname}{Theorem}
\newcommand{\TOR}{\mathcal{T}}
\theoremstyle{definition}
\newcommand{\SP}{{Supplementary Information}}
\newcommand{\apq}[1]{\bar{#1}}
\begin{document}

\title{In Situ Quantum Analog Pulse Characterization via Structured Signal Processing}
\author{Yulong Dong}
\email[Electronic address: ]{dongyl@umich.edu}
\affiliation{Department of Electrical Engineering and Computer Science, University of Michigan, Ann Arbor, MI 48109, USA}
\author{Christopher Kang}
\affiliation{Department of Computer Science, University of Chicago, Chicago, IL 60637, USA}
\author{Murphy Yuezhen Niu}
\email[Electronic address: ]{murphyniu@ucsb.edu}
\affiliation{Google Quantum AI, Venice, California 90291, USA.}
\affiliation{Department of Computer Science, University of California, Santa Barbara, California, 93106, USA.}

\date{\today}


\begin{abstract}
Analog quantum simulators can directly emulate time-dependent Hamiltonian dynamics, enabling the exploration of diverse physical phenomena such as phase transitions, quench dynamics, and non-equilibrium processes. Realizing accurate analog simulations requires high-fidelity time-dependent pulse control, yet existing calibration schemes are tailored to digital gate characterization and cannot be readily extended to learn continuous pulse trajectories. We present a characterization algorithm for in situ learning of pulse trajectories by extending the Quantum Signal Processing (QSP) framework to analyze time-dependent pulses. By combining QSP with a logical-level analog-digital mapping paradigm, our method reconstructs a smooth pulse directly from queries of the time-ordered propagator, without requiring mid-circuit measurements or additional evolution. Unlike conventional Trotterization-based methods, our approach avoids unscalable performance degradation arising from accumulated local truncation errors as the logical-level segmentation increases. Through rigorous theoretical analysis and extensive numerical simulations, we demonstrate that our method achieves high accuracy with strong efficiency and robustness against SPAM as well as depolarizing errors, providing a lightweight and optimal validation protocol for analog quantum simulators capable of detecting major hardware faults.
\end{abstract}

\maketitle


\section{Introduction}\label{sec:intro}

Analog quantum simulators offer a powerful platform for exploring non-equilibrium quantum dynamics and emergent many-body behavior~\cite{bairey2019learning,qi2019determining,Hsieh2020,hangleiter2021precise,Zoller2021,Zoller2022}. By directly emulating continuous-time Hamiltonian evolution, they bypass the need for digital compilation and enable the study of rich dynamical phenomena. However, this same analog nature makes the underlying control pulses difficult to characterize: they are continuous, device-specific, and oftentimes incompatible with mid-circuit measurements. Consequently, standard gate-based calibration and tomography techniques~\cite{kimmel,neill_accurately_2021,PhysRevA.85.042311,PhysRevA.77.012307,PhysRevLett.106.180504,blume2017demonstration}, which assume discretized control access and Markovian noise, do not apply to realistic analog hardware subject to drift, non-Markovian crosstalk, and contextual errors~\cite{acharya2022suppressing,tls,rol2020time,NiuBoixoSmelyanskiyEtAl2019}.
 
Existing approaches rely either on Trotterized digital surrogates~\cite{Zoller2022} or global fitting of experimental observables~\cite{Zoller2021,andersen2024thermalization}. Trotterization introduces accumulating discretization errors that mask fine pulse structure, while black-box optimization over continuous control parameters is non-convex, unstable, and not scalable in time or system size. As a result, there remains no practical \textit{in situ} method for reconstructing the true analog control trajectory directly from hardware data without halting or perturbing the ongoing evolution.

Here, we introduce a characterization framework that bridges digital learning tools with analog quantum control using a structured signal-processing approach. The method follows an analog-to-digital-to-analog workflow: first, the continuous control pulse is mapped to a set of discrete parameters through a logical analog-to-digital mapping within the Quantum Signal Processing (QSP) framework, with rigorously bounded digitization bias that decreases with finer segmentation rather than accumulating as in Trotterization. Second, these digital surrogate parameters are estimated directly from experimental data using a QSP-based direct learning method, avoiding the usage of black-box optimization. To ensure robustness against SPAM and depolarizing noise, we further develop an enhanced tomography subroutine to guarantee robustness against State Preparation and Measurement~(SPAM)  and depolarizing errors. Finally, a digital-to-analog conversion reconstructs a smooth pulse using spline interpolation. We prove that the end-to-end procedure achieves bounded bias and variance under limited shots, with error scalings that are arguably optimal.

By embedding pulse learning within the QSP formalism, this framework unifies digital and analog control characterization, enabling tools developed for digital gate characterizations~\cite{LowYoder2016,ni2023low,dong2025optimal,GilyenSuLowEtAl2018,Haah2019} to analyze continuous-time dynamics. Because estimation takes place entirely at the level of the time evolution operator, the protocol operates fully \textit{in situ}, without mid-circuit access, ancilla qubits, or hardware modifications. The resulting reconstruction is accurate, efficient, and robust to realistic noise sources, providing a lightweight and experimentally feasible validation method for analog quantum simulators and hybrid architectures.

\section{Results}\label{sec:results}

\begin{figure*}
    \centering
    \includegraphics[width=\textwidth]{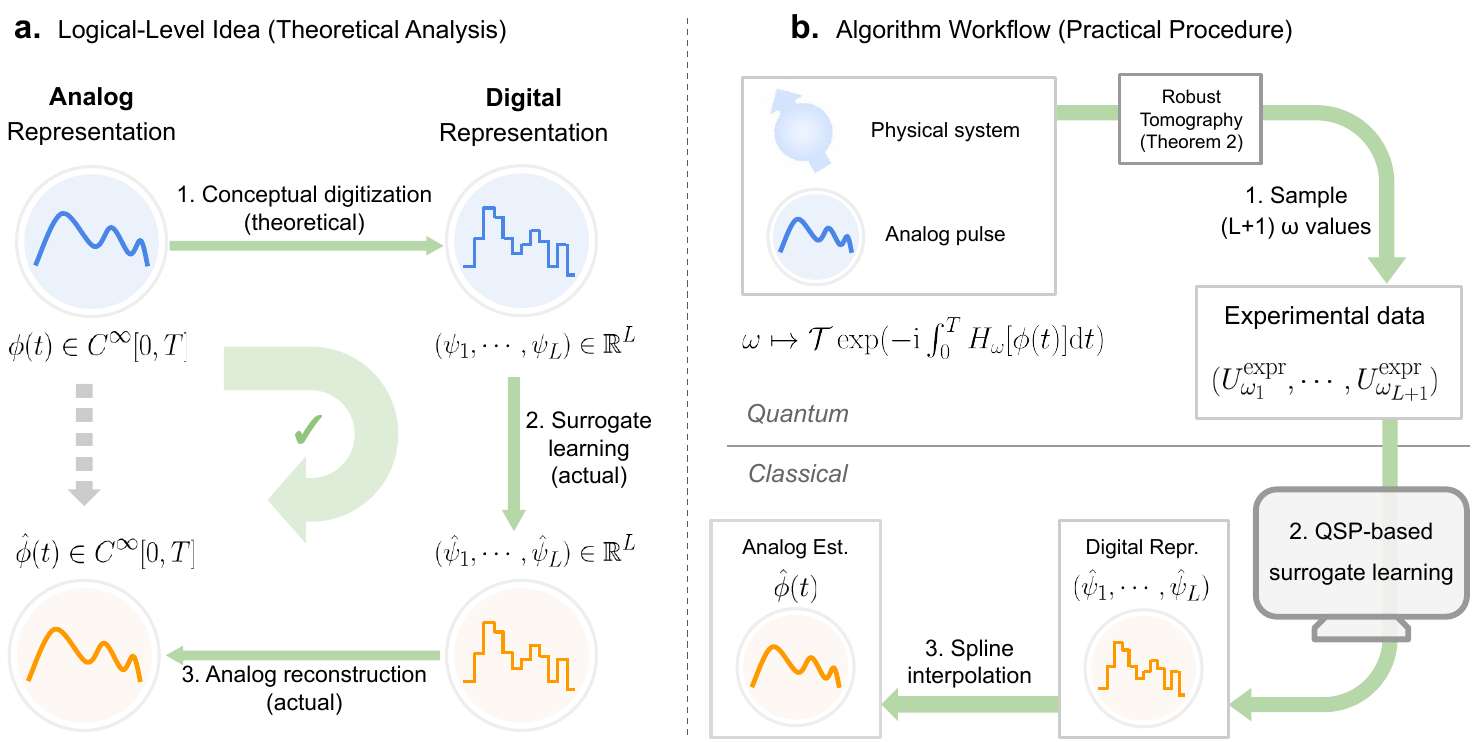}
    \caption{Overview of the proposed approach for learning smooth control pulses. Blue denotes ground-truth information, and orange denotes learned estimators. (a) Logical-level idea for theoretical analysis. A smooth analog function is conceptually mapped to a piecewise-constant surrogate representation, followed by tractable learning and reconstruction through spline interpolation. (b) Practical algorithm workflow. Experimental unitaries are collected at sampled parameter values and processed using a QSP-based surrogate-learning method. Finally, spline interpolation yields the final analog pulse estimate.}
    \label{fig:main_figure}
\end{figure*}

We begin by formalizing the analog pulse learning problem and introducing a digital surrogate model that enables tractable estimation of continuous pulse trajectories. We then describe the theoretical analog-to-digital and actual digital-to-analog conversion procedures, provide rigorous bounds on learning accuracy, analyze statistical optimality, and establish robustness under realistic SPAM and decoherence errors. Together, these results demonstrate a provably accurate and experimentally feasible end-to-end methodology for reconstructing smooth analog pulses from finite noisy measurements.

\subsection*{Quantum System Driven by an Analog Control Pulse}

Many universal quantum computing platforms, including superconducting qubits, neutral atoms, and trapped ions, implement dynamics through time-dependent Hamiltonians whose amplitudes or phases are programmed via pulse functions. In such settings, the quantum device is accessed only through end-to-end evolution over a fixed time interval, while the underlying control waveform remains hidden inside the time-ordered propagator. This motivates a general formulation of the analog pulse-learning problem that does not rely on any particular physical model, dimension, or parametric pulse family.

\begin{problem*}[Characterizing analog quantum pulses]
Let $H_{\omega}[\phi(t)]$ be a time-dependent Hamiltonian whose instantaneous form is determined by a smooth, unknown control pulse $\phi(t)$ on $[0,T]$, and let $\omega$ denote a set of experimentally tunable parameters (e.g., drive amplitudes, detunings, or global scaling factors). Given access only to the resulting time-evolution operator
\begin{equation}
U(T,0;\omega) = \mathcal{T}\exp\left(-\I\int_0^T H_{\omega}[\phi(s)]\ud s\right),
\end{equation}
or to expectation values of physical observables after this evolution, how can one reconstruct the underlying analog pulse $\phi(t)$ with bounded error using only a finite number of experiments? 
\end{problem*}

This formulation is deliberately non-parametric: we assume no particular functional form for $\phi(t)$ beyond smoothness, and make no assumption about the structure of $H_{\omega}$ other than that its dependence on $\phi(t)$ is known. Such generality indicates that the metrology could in principle be adapted to multi-qubit systems when the accessible inputs and measurements are confined to appropriate symmetry subspaces.

Having introduced the general problem, we now illustrate the key ideas in the simplest nontrivial setting: a driven two-level system. This model provides a clean stage for analysis while capturing the essential challenges of analog pulse learning. In this case, we consider the family of Hamiltonians
\begin{equation}\label{eqn:single_qubit_H_model}
    H_\omega[\phi(t)] = \omega \left(\cos \phi(t) X + \sin \phi(t) Y\right),
\end{equation}
where $\omega$ is an experimentally tunable drive amplitude and $\phi(t)$ is the analog pulse to be learned. The goal remains to recover $\phi(t)$ using only queries to the black-box propagator $U(T,0;\omega)$, without the ability to interrupt the evolution mid-pulse. This two-level model will serve as a running example for the analog-to-digital conversion, surrogate-model learning, and reconstruction procedures developed in the subsequent subsections. Furthermore, this Hamiltonian is commonplace in a variety of analog quantum simulators, e.g., neutral atoms \cite{wurtz2023aquilaqueras256qubitneutralatom} and ion traps~\cite{barreiro2011open,davoudi2020towards}

\subsection*{Learning Analog Pulse from Time Evolution}

Learning general analog pulses is challenging due to the complex functional dependence in time evolution. To explain our method, we start with a simple case: piecewise constant pulse functions. This setting highlights the kind of structure we aim to exploit, where the full-time evolution can be split into piecewise-learnable digital surrogates over short time intervals. We use this case to illustrate the core idea behind our method before generalizing to smooth analog pulses.

When the pulse function is piecewise constant, the propagator can be written as the product of time-discretized operators. Consider an equidistant time partition $t_j := j \tau, \tau := T / L$, and a piecewise constant pulse function:
\begin{equation}
    \phi(t) = \psi_j \in \RR,\ \text{when } (j-1) \tau \le t < j \tau,\ j = 1, \cdots, L.
\end{equation}
Then, the time-dependent Hamiltonian evolution can be explicitly characterized by combining multiple constant-pulse-driven pieces
\begin{equation}
    U(T, 0; \omega) = e^{- \I \tau H_\omega[\psi_L]} \cdots e^{- \I \tau H_\omega[\psi_2]} e^{- \I \tau H_\omega[\psi_1]}.
\end{equation}
This special class of pulse function coincides with 
many well-known variational digital quantum algorithms~\cite{farhi2014quantum,peruzzo_variational_2014}, whose parameters can be learned by matching experimentally observed data via optimization-based approaches. Furthermore, similar digital gate sequence models are also learned through the angle of gate learning. There exists a plethora of well-studied digital gate learning techniques, including randomized benchmarking~\cite{PhysRevA.77.012307}, gate set tomography~\cite{nielsen2021gate}, and phase estimation-based approaches~\cite{neill_accurately_2021,arute_observation_2020,kimmel,dong2025optimal}.

Given the success established for the characterization of digital gates, an intriguing question arises:
\begin{center}
    Can we characterize an \textit{analog} pulse via a learnable digital surrogate, i.e., detouring through the commutative diagram in \cref{fig:main_figure}(a)?
\end{center}

In this paper, we provide an affirmative answer to this question. To bridge the gap between the underlying analog evolution and its digital surrogate model, we propose logical-level analog-to-digital and digital-to-analog converters. These converters ensure the validity of surrogate model learning by providing a bounded-error, invertible mapping between the analog and digital representations of smooth control pulses. For surrogate model learning, we introduce a direct algebraic method that achieves high efficiency without relying on any black-box optimization. We further analyze its estimation variance and statistical optimality through Fisher information analysis. By performing a detailed analysis of these components and leveraging the structural properties of our design, we establish rigorous bounds that guarantee the method’s performance in terms of small and controllable bias and variance. The algorithms are presented in the remainder of this section.

\subsection*{QSP-Based Digital Surrogate Model Learning}

In the single-qubit case of \cref{eqn:single_qubit_H_model}, the short-time Hamiltonian evolution driven by a constant pulse amplitude can be expressed as
\begin{equation}\label{eq:unitaryVthetapsi}
\begin{split}
    V(\theta, \psi) &= e^{-\I \tau H_\omega[\psi]} = e^{- \I \theta (\cos(\psi) X + \sin(\psi)Y)}\\
    &= e^{- \I \psi/2 Z}   e^{- \I \theta X}   e^{\I \psi/2 Z},
\end{split}
\end{equation}
where $\theta := \omega \tau = \omega T / L$ is a tunable parameter. Assembling these short-time segments into a long-time evolution yields an interleaved sequence of $X$- and $Z$-rotations:
\begin{equation}\label{eqn:QSP_surrogate_model}
    W(\theta, \Psi) = V(\psi_L, \theta) \cdots V(\theta, \psi_1) \approx U(T, 0; \omega),
\end{equation}
where the approximation occurs due to Trotterization error.
The surrogate model $W(\theta, \Psi)$ coincides with the structure of \emph{quantum signal processing} (QSP), a powerful quantum algorithmic primitive that underlies many key quantum algorithms~\cite{chuang2021grand}.

In the context of analog control, this model corresponds to truncating the Magnus series at first order. Our analysis in \cref{sec:app:surrogate_model} shows that the associated truncation error scales as $\Or(\tau)$. Importantly, this first-order approximation avoids introducing higher-order commutator terms, which would otherwise result in complicated nonlinear dependencies on $\theta$ and $\phi$. This simplicity is crucial for efficient post-processing of experimental data.

The computation of phase factors $(\psi_1, \cdots, \psi_L)$ has been extensively studied in idealized numerical settings where only floating-point errors are considered. Here, we adapt a classical phase-computation method from Refs.~\cite{Haah2019,WangDongLin2022} to learn the digital representation of control pulses from experimental data using Fourier analysis. Since experimental measurements are collected with a finite number of shots, the data are inevitably affected by statistical fluctuations. To rigorously quantify the resulting estimation variance in the discretized pulse, we provide the first semi-analytical analysis of the QSP model from a statistics and learning perspective. The detailed analytical bounds are presented in the Methods section and in \cref{sec:app_Fourier_analysis_post_processing}.

A key distinction between our setting and conventional QSP phase-factor computation arises from a fundamental limitation of the Magnus expansion, which may fail to converge when $\theta \ge \pi$~\cite{moan2008convergence}. This constraint prevents direct sampling of $\theta$ values across the full unit circle. That is, we cannot experimentally query $U(T, 0; \omega)$ for arbitrary values of $\omega$, lest our digital surrogate will deviate greatly from the propagator. The missing information leads to an ill-posed estimation problem with poor numerical and statistical stability (see the Methods section and \cref{sec:fisher_info}). To overcome this challenge, we exploit the intrinsic symmetry of the surrogate model and propose a symmetry-based data augmentation method. This technique allows samples in the first quadrant, $\theta \in [0, \pi/2]$, to be extended to the full range $[0, 2\pi]$, thereby enabling subsequent analyses such as the fast Fourier transformation (FFT).

\subsection*{Analog Pulse Reconstruction via Spline Interpolation}

We now turn to the problem of reconstructing the continuous pulse function $\phi(t)$ from experimentally learned discretized data. Given a set of point values $(\psi_1, \ldots, \psi_L) \in \RR^L$ obtained from experiments, our goal is to recover the underlying analog signal $\phi(t)$.

To accurately reconstruct the pulse, we apply polynomial interpolation techniques upon the discretized data. Direct interpolation of discrete data using global methods such as Lagrange interpolation can suffer from instability, known as the Runge phenomenon~\cite{kress2012numerical}, which leads to large deviations in the reconstructed pulse. To avoid this, we employ spline-based interpolation methods, which provide stable and accurate reconstruction from discrete point data.

The QSP-based learning procedure yields discretized data that approximate the averaged pulse values with first-order error, since the surrogate model is based on a linear approximation. Interpolating the data at the midpoints $\{((j - 1/2)\tau, \psi_j) : j = 1, \ldots, L\}$ reconstructs the original pulse function with a first-order error scaling as $\Or(T / L)$. This reconstruction accuracy can be further improved to second order via Richardson extrapolation. By combining reconstructed pulse functions derived from two resolutions, $L$ and $2L$, the leading-order bias can be canceled, yielding an overall second-order systematic error while preserving the same estimation variance. This establishes an end-to-end performance guarantee for the reconstruction pipeline.

To simplify the presentation of theoretical bounds, we assume the pulse function $\phi$ belongs to a Bernstein-type class $\mathscr{P}_\beta([0, T])$ with some smoothness parameter $\beta$ (see \cref{sec:app_Bernstein_pulse}). Specifically, for some smoothness parameter $\beta > 0$, its derivatives satisfy $\norm{\phi^{(k)}} \le \Or(\beta^k)$ for all $k \in \NN$. This class is broad and includes, for instance, band-limited and polynomial pulse functions. 

\begin{thm}[End-to-end reconstruction performance]\label{thm:end-to-end-algorithm}
    Let $\phi \in \mathscr{P}_\beta([0, T])$, and let $L, M$ be positive integers. Suppose $M$ is the number of measurement shots per experiment, and that there are $L + 1$ experiments implementing Hamiltonian magnitudes $\omega_j = \frac{(2j+1)\pi}{4L+4}$ for $j = 0, 1, \ldots, L$. Let $\mc{I}^\circ = [1/L, T - 1/L]$ denote the interior region. Let $\hat\phi(t)$ denote the reconstructed pulse obtained from the end-to-end pipeline in \cref{fig:main_figure}(b). Then, there exist constants $C_1, C_2$ such that
    \begin{equation}
        \sup_{t \in \mc{I}^\circ} \mathbb{E}\left(\abs{\hat\phi(t) - \phi(t)}\right) \le C_1 \frac{\beta^2 T^2}{L^2} + C_2 \frac{1}{\sqrt{M}}.
    \end{equation}
    At the boundaries $t \in [0, 1/L] \cup [T - 1/L, T]$, the reconstruction error scales as $\Or(\beta T / L + 1 / \sqrt{M})$.
\end{thm}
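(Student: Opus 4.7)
The plan is to decompose the end-to-end error into bias and variance pieces and bound each using components assembled earlier in the paper. Writing
\begin{equation*}
\mathbb{E}\abs{\hat\phi(t) - \phi(t)} \le \abs{\mathbb{E}\hat\phi(t) - \phi(t)} + \mathbb{E}\abs{\hat\phi(t) - \mathbb{E}\hat\phi(t)},
\end{equation*}
I will attribute the first term to the analog-to-digital (surrogate) bias together with the spline-interpolation bias, and the second to the finite-shot statistical fluctuation of the QSP phase estimator. Introducing the exact surrogate targets $\psi_j^\star$ and an intermediate noise-free reconstruction $\tilde\phi$ built from them will let me bound each piece separately.

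For the bias, I would first use the Magnus-truncation analysis referenced in \cref{sec:app:surrogate_model} to expand the difference $\psi_j^\star - \phi((j-\tfrac12)\tau)$ in powers of $\tau$. The leading term is $\Or(\tau) = \Or(T/L)$ with a coefficient determined by low-order derivatives of $\phi$, and the next correction is $\Or(\tau^2)$; the Bernstein hypothesis $\norm{\phi^{(k)}} \le C\beta^k$ then yields the explicit bound $\Or(\beta^2 T^2 / L^2)$ after Richardson extrapolation between resolutions $L$ and $2L$. I would then invoke a standard spline-interpolation estimate: for a smooth target on a quasi-uniform midpoint grid, the interior error is $\Or((T/L)^4 \norm{\phi^{(4)}}_\infty) = \Or(\beta^4 T^4 / L^4)$, which is dominated by the Magnus term for large $L$. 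Summing the two contributions gives the $C_1 \beta^2 T^2 / L^2$ bias bound.

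For the variance, the Fisher-information analysis in \cref{sec:app_Fourier_analysis_post_processing} provides the per-phase bound $\mathrm{Var}(\hat\psi_j) = \Or(1/M)$ uniformly in $j$, thanks to the symmetry-based Fourier data augmentation that restores access to the full unit circle. Since spline interpolation is a bounded linear map whose weights form a partition of unity with bounded overlap on $\mc{I}^\circ$, the variance of the reconstructed pulse at any fixed interior point is $\Or(1/M)$, and Jensen's inequality upgrades this to $\mathbb{E}\abs{\hat\phi(t) - \mathbb{E}\hat\phi(t)} = \Or(1/\sqrt{M})$. Richardson extrapolation is a fixed affine combination of the $L$- and $2L$-resolution reconstructions and therefore preserves this $\Or(1/\sqrt{M})$ scaling.

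The main obstacle I foresee is justifying the Richardson extrapolation step: it presupposes that the Magnus-truncation bias admits a clean asymptotic expansion in $\tau$ whose leading $\Or(\tau)$ coefficient is deterministic and consistent between the $L$ and $2L$ grids, and that the QSP phase-learning step does not inject a spurious $\tau$-dependent bias of its own. Verifying this expansion — and carefully aligning the midpoint convention when the grid is halved — is the delicate bookkeeping. For the boundary region $[0, 1/L] \cup [T-1/L, T]$, the spline loses one order of accuracy and the Richardson cancellation degrades, so the bias reverts to the first-order $\Or(\beta T / L)$ scaling, recovering the weaker boundary bound stated in the theorem.
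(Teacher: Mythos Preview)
Your proposal is correct and mirrors the paper's own argument: the paper likewise splits into a bias term (controlled via the Magnus/surrogate analysis plus Richardson extrapolation, \cref{thm:Richardson_extrapolation_estimation_error_bound}) and a fluctuation term (controlled by writing $\hat\phi(t)=\mathbb{E}\hat\phi(t)+\zeta^\top(t)\vec\varepsilon$ with $\|\zeta\|_2=\Or(1)$ from the spline's local support, combined with $\sigma^2=\Or(1/M)$ from the Fisher analysis), then combines them via Jensen and $\sqrt{a+b}\le\sqrt a+\sqrt b$. The obstacle you flag about Richardson extrapolation is precisely the content of \cref{sec:ho_bias_reduction}, where the paper verifies that the surrogate bias is smooth in $h$ (hence improvable by RE) while the spline truncation error is not---but is already higher order when cubic splines are used.
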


The proof is outlined in \cref{sec:app:proof_main_thm}, where we also state the performance guarantee in a stronger error metric. The second term in the bound arises from statistical uncertainty due to finite measurement shots, while the first term corresponds to the systematic error of the end-to-end method. The first error term comes from the linearized QSP-based surrogate model. Yet, the simplicity of the QSP-based model allows fast and direct computation of discretized data but introduces a tradeoff in accuracy. In addition to the rigorous theoretical analysis, we numerically confirm the $L$-scaling of the bias, as shown in \cref{fig:scaling_L}.

Our QSP-based approach avoids any black-box optimization, relying instead on direct algebraic computation. It makes our QSP-based method both transparent and computationally efficient. Despite the first-order model misspecification in the QSP surrogate, we demonstrate that the overall systematic bias of the end-to-end reconstruction can be reduced to second order via Richardson extrapolation. Details of this bias reduction are provided in the following subsections and in \cref{sec:ho_bias_reduction}, and we justify the optimality of this second-order scaling in the Methods section.

\begin{figure*}
    \centering
    \includegraphics[width = .75\textwidth]{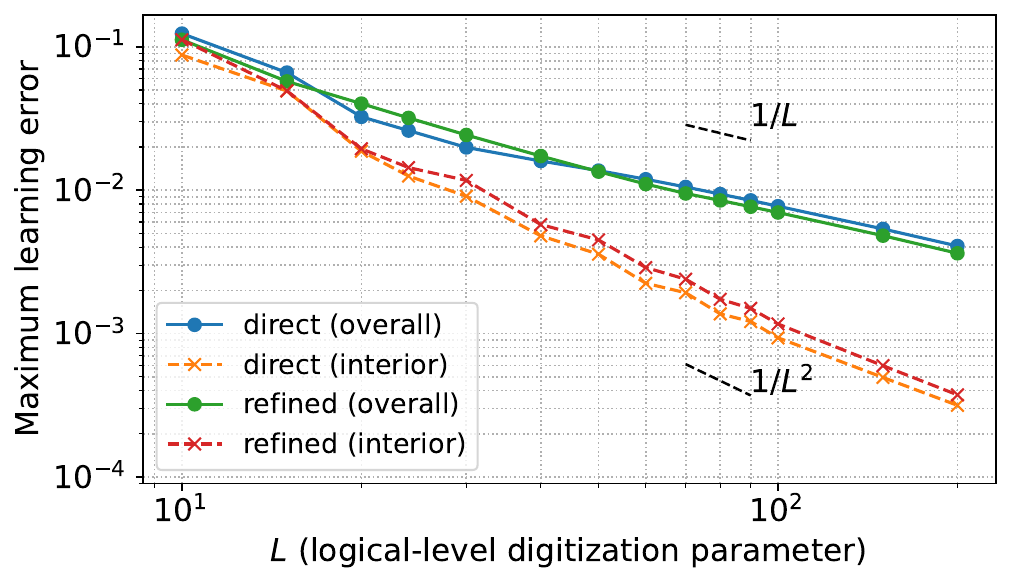}
    \caption{Scaling of the overall estimation bias. We fix the pulse function to $\phi(t) = \sin(3\pi t)$ with $T = 1$. To isolate the scaling of the estimation bias, we suppress sampling error in the numerical test by setting $M = \infty$. We report the maximal estimation error over the full interval $[0, T]$ (solid lines) and over the interior subinterval $[0.1T, 0.9T]$ (dashed lines). Two spline-based approaches are used, as outlined in \cref{sec:app_midpoint_Hermite}. The ``direct'' method identifies the raw data $\psi_j$ as midpoint values, while the ``refined'' method first applies a de-averaging filter to the raw data. The two methods perform similarly, indicating that higher-order digital-to-analog schemes offer little benefit because accuracy is limited by upstream steps such as the linearized QSP-based surrogate model. The interior reconstruction error is significantly reduced when using large $L$.}
    \label{fig:scaling_L}
\end{figure*}

\subsection*{Optimality of Our Analog Pulse Learning Method}

\cref{thm:end-to-end-algorithm} establishes the bias and variance scaling of our analog pulse learning method. A remaining question is whether these scalings are information-theoretically optimal. The efficiency and accuracy of any statistical estimation protocol are fundamentally determined by the amount of information that can be extracted from experimental data. In this subsection, we analyze the optimality of our method in both bias and variance scaling, as characterized in \cref{thm:end-to-end-algorithm}.

In the Methods section and \cref{sec:app_analytical_properties}, we show that the full nonlinearity of short-time evolution can, in principle, be exploited to improve the estimation accuracy of the averaged pulse values over short-time intervals. However, as shown in \cref{thm:bounded_bias}, this improvement is ultimately limited to second order. Although our QSP-based surrogate model captures only linear-order information, the application of Richardson extrapolation in post-processing effectively recovers the leading nonlinear correction. Consequently, the second-order $L$-dependence in \cref{thm:end-to-end-algorithm} is  optimal.

Turning to the variance analysis, the Fisher information of the digital surrogate model reveals a distinctive structure in the attainable estimation variance. Specifically, the boundary modes exhibit the smallest possible variance, $\mathrm{Var}(\hat{\phi}_1), \mathrm{Var}(\hat{\phi}_L) = \Or(1/(ML))$, whereas the central modes experience larger variance, with $\mathrm{Var}(\hat{\phi}_{\lceil L/2 \rceil}) = \Or(1/M)$. These statistical features are faithfully captured by our QSP-based surrogate learning method, as shown in \cref{thm:qsp_estimation_variance}. Therefore, upon taking the square root, the resulting standard deviation scaling $\Or(1/\sqrt{M})$ in \cref{thm:end-to-end-algorithm} is also optimal.

It is worth noting that this variance scaling appears worse than the conventional Standard Quantum Limit (SQL). However, this distinction arises from the structured statistical correlations inherent in our problem. The discretized pulse values are not independent measurements. Instead, they are derived from a single smooth control pulse evaluated over distinct but causally ordered intervals. As shown in \cref{sec:fisher_info}, this leads to a strong positive correlation in the estimation problem, which effectively reduces the number of independent degrees of freedom. Consequently, the achievable precision improvement is constrained with increasing $L$. As a result, increasing the number of digitized segments does not proportionally reduce the statistical uncertainty. A detailed analysis of this correlation effect and its connection to the Fisher information structure is provided in \cref{sec:fisher_info}.

\subsection*{Robustness against Realistic Quantum Errors}

In practical quantum experiments, decoherence and   SPAM errors inevitably affect the system’s evolution and the accuracy of reconstructed unitaries. To enhance the robustness of our framework, we introduce a \textit{robust unitary tomography} procedure as a data preprocessing subroutine. As analyzed in detail in \cref{sec:app:robust_preproc_tomography}, both theoretical justification and numerical evidence show that this method significantly improves the error resilience of the refined unitary data.

Our approach builds upon the standard unitary tomography protocol~\cite{chuang_prescription_1997}, complemented by two key enhancements: (1) a \emph{sandwich transformation} using a reference experiment, and (2) a \emph{projection via polar decomposition}.

The sandwich transformation suppresses depolarizing noise by preconditioning the tomographically reconstructed unitaries. The subsequent projection step enforces unitarity and removes the antisymmetric component of SPAM-induced errors in the generator representation thanks to the structure of polar decomposition. The remaining symmetric component arises from causality, specifically, the non-interchangeability of preparation and measurement processes. If one can perform an additional mirrored experiment with the roles of state preparation and measurement reversed, this residual symmetric component can also be canceled. Without such experimental access, the best achievable reconstruction error typically scales linearly with the magnitude of SPAM errors. In our method, the unitary reconstruction error also scales linearly in general, but improves to quadratic scaling when the symmetric SPAM component vanishes. In addition, depolarizing errors can be further mitigated by increasing the number of measurement shots. These properties jointly demonstrate the robustness and effectiveness of our approach. The formal result is summarized in the following theorem.

\begin{thm}
    Let $\alpha$ denote the fidelity of the depolarizing channel, and let $\mf{g}_S$ and $\mf{g}_M$ be the infinitesimal generators associated with state-preparation and measurement errors, respectively. The magnitude of SPAM error is $\delta = \max\{\norm{\mf{g}_S}, \norm{\mf{g}_M}\}$. Define the SPAM generator difference as $\Delta_\mathrm{SPAM} = \frac{1}{2}(\mf{g}_M - \mf{g}_S)$. Then, using $M = \Or(\delta^{-2}\alpha^{-2})$ measurement shots, the target unitary can be reconstructed via the robust unitary tomography procedure with reconstruction error at most $\Or(\delta)$. Moreover, if $\Delta_\mathrm{SPAM}$ is symmetric (i.e. $\Delta_\mathrm{SPAM}^\top = \Delta_\mathrm{SPAM}$), the reconstruction error further improves to $\Or(\delta^2)$.
\end{thm}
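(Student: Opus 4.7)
The plan is to propagate each noise source through the three-stage pipeline --- linear-inversion unitary tomography, reference-based sandwich transformation, and polar-decomposition projection --- and to show that the systematic contributions collapse to $\Or(\delta)$ in general and to $\Or(\delta^2)$ under the transpose-symmetry hypothesis on $\Delta_{\mathrm{SPAM}}$, while the shot budget $M=\Or(\delta^{-2}\alpha^{-2})$ keeps the stochastic contribution at the same scale. I would begin by writing a closed-form model of the tomographic outputs in a suitable linear (e.g.\ Pauli-transfer) representation: the target experiment yields a noisy matrix of the form $\alpha\,e^{\mf{g}_M}\,R\,e^{\mf{g}_S}$ and the reference experiment yields $\alpha\,e^{\mf{g}_M}\,e^{\mf{g}_S}$, where $R$ is the noiseless representation of the target unitary, $\alpha$ is the depolarizing factor on the traceless subspace, and the SPAM channels enter as near-identity pre- and post-compositions. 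Linear-inversion tomography adds independent random matrices whose operator norms concentrate at rate $\Or(1/\sqrt{M})$.

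I would then analyze the sandwich step: forming $\widehat{R}_{\mathrm{ref}}^{-1/2}\,\widehat{U}_{\mathrm{tom}}\,\widehat{R}_{\mathrm{ref}}^{-1/2}$ (or the variant used in the cited appendix) divides out the depolarizing prefactor $\alpha$ exactly. A first-order expansion in $\delta$ gives
\begin{equation*}
\widehat{X} \;=\; (I+\Delta_{\mathrm{SPAM}})\,R\,(I-\Delta_{\mathrm{SPAM}}) \;+\; \Or(\delta^2) \;+\; \Or\!\left(\tfrac{1}{\alpha\sqrt{M}}\right),
\end{equation*}
where the leading SPAM object is $\Delta_{\mathrm{SPAM}}=\tfrac12(\mf{g}_M-\mf{g}_S)$ entering as a similarity conjugation of $R$. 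Equivalently, $\widehat{X}=R+[\Delta_{\mathrm{SPAM}},R]+\Or(\delta^2)$, which is the departure that the polar projection must correct.

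The crucial step is the polar decomposition $\widehat{X}=U_\star P_\star$ with $U_\star$ orthogonal and $P_\star$ symmetric positive. Splitting $\Delta_{\mathrm{SPAM}}=\Delta_s+\Delta_a$ into transpose-symmetric and transpose-antisymmetric parts and using $R^\top R=I$, a direct second-order calculation yields
\begin{equation*}
\widehat{X}^\top\widehat{X} \;=\; I + 2\bigl(R^\top \Delta_s R - \Delta_s\bigr) + \Or(\delta^2),
\end{equation*}
so $P_\star\approx I+(R^\top\Delta_s R-\Delta_s)$ is determined entirely by the symmetric piece $\Delta_s$. Multiplying through then gives $U_\star=\widehat{X}P_\star^{-1}\approx R+[\Delta_a,R]+\Or(\delta^2)$. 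The symmetric component of $\Delta_{\mathrm{SPAM}}$ is thus exactly absorbed into $P_\star$ and discarded, while the antisymmetric component survives as a first-order error. In the generic case $\norm{[\Delta_a,R]}=\Or(\delta)$, giving the stated bound; under the hypothesis $\Delta_{\mathrm{SPAM}}^\top=\Delta_{\mathrm{SPAM}}$ one has $\Delta_a=0$ and the first-order term vanishes identically, leaving only the BCH remainder of size $\Or(\delta^2)$. Because polar decomposition is locally Lipschitz near the nondegenerate target $R$, the stochastic noise enters unamplified beyond the $1/\alpha$ cost from the sandwich, so $M=\Or(\delta^{-2}\alpha^{-2})$ brings sampling fluctuations down to the $\Or(\delta)$ ceiling.

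The main obstacle is not any single step but the requirement to carry the expansion consistently to second order in $\delta$ throughout the entire sandwich and polar-decomposition pipeline, so that the apparently miraculous first-order cancellation in the symmetric case is not spoiled by neglected cross-terms between $R$ and the SPAM perturbation. Concretely, I must verify that the Neumann-style square-root expansion of $\widehat{X}^\top\widehat{X}$ and the subsequent multiplication by $\widehat{X}$ both preserve the exact cancellation of $\Delta_s$ at order $\delta$ and produce no new first-order obstructions. Establishing this cancellation at the exponentiated matrix level --- rather than merely in the Lie algebra --- is the technical core underlying the quadratic improvement in the symmetric case.
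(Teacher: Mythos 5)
Your proposal is correct and tracks the paper's own argument closely: same three-stage pipeline (linear-inversion tomography in the PTM representation, sandwich preconditioning by the reference $K^{-1/2}=\bigl(\alpha M S\bigr)^{-1/2}$ to remove the depolarizing prefactor and symmetrize the SPAM generators into $\Delta_{\mathrm{SPAM}}=\tfrac12(\mf{g}_M-\mf{g}_S)$, then polar projection), and the same first-order sandwiched output $B = R + [\Delta_{\mathrm{SPAM}},R] + \Or(\delta^2)$.

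Where you diverge is in how you prove the polar-projection step. The paper handles the symmetric case via the \emph{multiplicative} factorization $B = R(I+E)$ with $E = \tfrac12(R^\top\Delta_{\mathrm{SPAM}}R - \Delta_{\mathrm{SPAM}}) + \Or(\delta^2)$: it observes that $E$ is symmetric precisely when $\Delta_{\mathrm{SPAM}}$ is, computes $B^\top B = I + 2E + \Or(\delta^2)$, and cancels to get $\wt R = R + \Or(\delta^2)$; the generic $\Or(\delta)$ bound is then just Lipschitz continuity of the polar map. You instead carry out the \emph{additive} split $\Delta_{\mathrm{SPAM}}=\Delta_s+\Delta_a$ uniformly, compute $\widehat X^\top\widehat X = I + 2(R^\top\Delta_s R - \Delta_s) + \Or(\delta^2)$, and explicitly identify the post-projection residual as $U_\star = R + [\Delta_a, R] + \Or(\delta^2)$. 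This is a genuine refinement: it derives both cases of the theorem from a single formula and makes visible exactly which piece of the SPAM noise survives the projection (the commutator $[\Delta_a, R]$ of the antisymmetric part), whereas the paper asserts the generic $\Or(\delta)$ bound without exhibiting the residual term and treats the symmetric case by an ad hoc change of variable. The paper's route is a bit shorter for the symmetric case; yours buys a cleaner unified picture and an explicit characterization of the leading-order error. Both are sound, and your algebra (in particular the cancellation $R + [\Delta_{\mathrm{SPAM}},R] - \Delta_s R + R\Delta_s = R + [\Delta_a,R]$ at order $\delta$) checks out.
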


The complete theoretical proofs and numerical validations are presented in \cref{sec:app:robust_preproc_tomography}. The incorporation of a tomography subroutine that is intrinsically resilient to SPAM and depolarizing noise is pivotal for preserving the overall robustness and reliability of our pulse-learning algorithm.~(see Fig.~\ref{fig:main_figure} for its role in overall workflow).

\subsection*{Numerical Results of Pulse Calibration}

In quantum experiments, analog control pulses are unavoidably influenced by stochastic control errors~\cite{NiuBoixoSmelyanskiyEtAl2019}, control crosstalk~\cite{sheldon2016procedure,abrams2019methods,fan2025calibrating}, time-dependent control drift~\cite{manuel-endres-2024benchmarking}, and sampling noise. These imperfections distort the ideal pulse shape, introducing deviations from the intended smooth profile and complicating accurate characterization. As we demonstrate below, our pulse-learning methods are capable of recovering these actuated control imperfections on top of the ideal target waveform, providing an effective tool for verification and calibration. This capability serves as a critical test of both the practical feasibility and the high accuracy achievable with our approach.

In this subsection, we numerically evaluate the end-to-end performance of our method in the context of pulse calibration. Specifically, we consider three types of ideal control pulses that are representative of  practically relavant pulse families: linear ($\phi(t) = t$), sinusoidal ($\phi(t) = \sin(2\pi t)$), and biharmonic ($\phi(t) = \tfrac{1}{2}[\sin(2\pi t) + \sin(4\pi t)]$). Following the procedures in Ref.~\cite{NiuBoixoSmelyanskiyEtAl2019} and in \cref{sec:app:numerical_results_perturbation}, we generate perturbed control pulses by introducing random control errors to these ideal pulses. These perturbed pulses are referred to as \emph{actual pulses}. In our end-to-end simulations, we further emulate depolarizing noise and SPAM errors, and apply our robust unitary tomography using a finite number of measurement samples.  

\Cref{fig:end-to-end-numerics} shows the numerical results of pulse calibration under various realistic noise conditions. Our method consistently achieves robust reconstruction of the actual implemented pulses. With SPAM and sampling errors on the order of $0.01$, the pointwise reconstruction error remains tightly bounded within the same magnitude regardless of the pulse duration. Even when the smoothness of the actuated pulses is degraded due to control level noise, the proposed approach performs reliably.  

Two key features contribute to the effectiveness and accuracy of our method. First, it requires no prior knowledge of the underlying pulse shape. Second, it is a fully interpretable approach that relies only on direct algebraic operations rather than black-box iterative optimization. These properties jointly ensure the efficiency, accuracy, and robustness of our pulse characterization framework.

\begin{figure*}
    \centering
    \includegraphics[width=\textwidth]{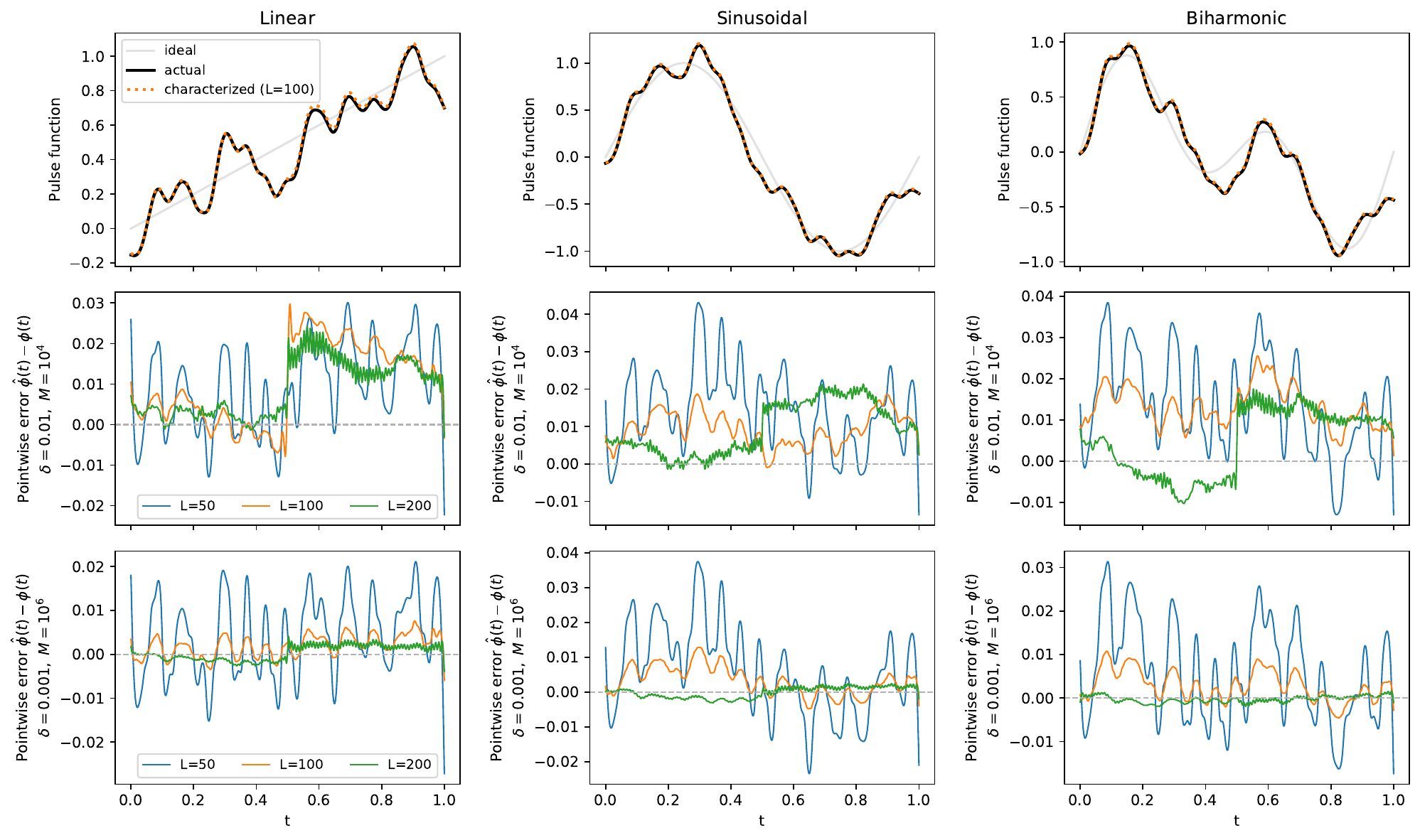}
    \caption{Performance of end-to-end pulse learning. Sources of error include pulse perturbations, depolarizing noise with fidelity $\alpha$, random SPAM error of magnitude $\delta$, and sampling error from a finite number of measurement shots $M$. We fix $\alpha = 0.9$ in all simulations. Top row: pulse shapes. The ``actual'' pulse is obtained by perturbing the ``ideal'' pulse, and serves as the target for learning. The dotted curve denotes the pulse characterized by our proposed method using the configuration of the middle row. Other rows: pointwise reconstruction error between the characterized and actual pulses. In the middle row, we set $\delta = 0.01$ and $M = 10^4$, where these errors become the bottleneck of the overall reconstruction accuracy, even for large values of $L$. In the bottom row, we reduce the error magnitudes to $\delta = 0.001$ and $M = 10^6$, and the overall reconstruction error is then further improved by increasing $L$.}
    \label{fig:end-to-end-numerics}
\end{figure*}

\subsection*{Applications in Pulse Calibration}

There are two primary applications of our protocol in multi-qubit settings: calibration of \textit{classical} single-qubit control crosstalk and calibration of \textit{quantum} crosstalk.
In superconducting qubit systems, \textit{classical control crosstalk}~\cite{sheldon2016procedure,abrams2019methods,fan2025calibrating} is a dominant source of operational error, arising from unwanted leakage in microwave drives to neighboring qubits. Conceptually, an ideal single-qubit $X$-$Y$ control pulse on a certain qubit will generate additional drive amplitudes and phase profiles on neighbors that depend on both the hardware connectivity and the original control waveform. Rather than explicitly expanding each induced Hamiltonian term, we focus on the underlying functional dependencies: our method can be used to learns both how the leaked amplitudes scale with the intended drive $\omega_j$, and how the distorted pulse shapes relate to the target waveform $\phi_{\text{target}}(t)$. Our protocol can be used to reconstruct  crosstalk-induced pulse distortions directly at the waveform level. To the best of our knowledge, this has not been previously demonstrated for microwave control in superconducting architectures. This enables quantitative, pulse-resolved calibration of crosstalk with high efficiency and accuracy.

Our approach is also potentially applicable to \textit{quantum} crosstalk, where multi-qubit interactions produce time-dependent Hamiltonian amplitudes. Leveraging the underlying physical symmetry, the quantum-crosstalk structure can be effectively reduced to two-level subsystems~\cite{liu2025optimal}. With properly designed basis transformation and the associated unitary tomography, effective parameters can be learned by our protocol. For instance, \cite{liu2025optimal} shows that for the Mølmer-Sørensen interaction in trapped ions, collective spin-spin couplings and local fields can map onto the phase and amplitude parameters of the two-level surrogate model. For these systems where effect of many-body dynamics can be reduced to a two-level subspace, our protocol provides a means to calibrate time-dependent Hamiltonian amplitudes   at the pulse level. Such capability is essential for improving overall fidelity of analog quantum algorithms,  and mitigating coherent errors in error-correcting architectures, where even small crosstalk-induced distortions can significantly limit system performance.

\section{Discussion}\label{sec: discussion}

In this work, we introduced a rigorous and efficient framework for in situ learning of analog quantum control pulses through a logical-level analog-to-digital-to-analog paradigm. Our method establishes an analytical bridge between continuous Hamiltonian control and discrete algorithmic representations, embedding pulse learning within the QSP formalism. By mapping continuous pulse dynamics to a set of discrete, learnable surrogate parameters, and then reconstructing the analog waveform from the learned digital data, we demonstrated that the entire algorithm operates with bounded errors with mathematically rigorous guarantee. Crucially, our digital surrogate model replaces heuristic black-box optimization with an analytic, Fourier-structured analysis, yielding stable and transparent recovery of control parameters directly from experimental data. Without ancilla qubits, mid-circuit access, or interruption of hardware dynamics, our protocol operates entirely in situ and achieves highly accurate reconstruction of continuous control trajectories while maintaining robustness to realistic experimental imperfections.

An additional advance lies in our development of a SPAM- and depolarizing-noise-resilient tomography subroutine, which ensures the learned pulse parameters remain accurate under experimental noise typical of near-term hardware. Together, these contributions unify digital and analog control analysis under a single analytical framework, enabling digital gate-based tools such as QSP and Fisher-information analysis to be applied to analog calibration and validation. The resulting formalism establishes a scalable, model-agnostic route for characterizing and certifying analog quantum simulators subject to drift, bandwidth constraints, and contextual errors. Looking forward, this paradigm opens a number of promising directions, including the generalization of our method to multi-qubit analog Hamiltonian characterization, as well as the identification of pulse-level analog errors caused by control crosstalk in superconducting, trapped-ion, and Rydberg platforms. More broadly, by merging analytical digital learning theory with continuous-time control physics, this work studies a new frontier for mixed programming and calibration of quantum devices, where digital compilation and analog pulse shaping can be co-optimized within a single mathematically rigorous and noise-robust learning architecture.

\section{Methods}\label{sec:methods}

\subsection*{Learning Discretized Pulse Values via a QSP-Based Method}

Consider the surrogate model in \cref{eqn:QSP_surrogate_model}. According to the classical theory of quantum signal processing~\cite{Haah2019,WangDongLin2022}, this surrogate admits a matrix-valued Fourier expansion in $\theta := \omega \tau$:
\begin{equation}
\begin{split}
    W(\theta, \Psi) &= V(\theta, \psi_L) \cdots V(\theta, \psi_1) \\
    &= \sum_{j = -L}^L C_j e^{\I j \theta}, \quad C_j \in \CC^{2 \times 2},
\end{split}
\end{equation}
where the matrix coefficients $\{C_j\}$ are closely related to the phase factors $\{\psi_j\}$.

By multiplying $W(\theta, \Psi)$ with a suitably chosen $\theta$-dependent unitary, we can compensate for the boundary mode $V(\theta, \psi_1)$ and cancel the highest-degree terms in the Fourier expansion. This reduces the problem to a smaller instance involving only $(L-1)$ phase factors $(\psi_2, \ldots, \psi_L)$. As shown in Ref.~\cite{Haah2019} and detailed in \cref{sec:app_Fourier_analysis_post_processing}, this annihilation procedure yields the following relation:
\begin{equation}\label{eqn:Fourier_learning_P}
    \frac{C_L^\dagger C_L}{\Tr(C_L^\dagger C_L)} = \frac{1}{2}
\begin{pmatrix}
1 & -e^{-\I\psi_1} \\
- e^{\I\psi_1} & 1
\end{pmatrix}.
\end{equation}
Thus, $\psi_1$ can be estimated directly from the leading Fourier coefficient $C_L$. By subsequently applying the estimation to $W(\theta, \Psi) V^{-1}(\theta, \hat{\psi}_1)$, the remaining phase factors can be inferred sequentially from the set of Fourier coefficients $\{C_j\}_{j=-L}^L$. This process constitutes a direct algebraic learning method that avoids any black-box optimization. As a result, it is numerically stable, computationally efficient, and fully transparent. Moreover, processing data in Fourier space naturally mitigates experimental noise, since the Fourier transform acts as a contraction mapping. The rich Fourier structure also enables further noise-mitigation strategies, such as those proposed in~\cite{dong2025optimal}.

While this phase-factor computation method was originally designed for idealized numerical settings without quantum noise or hardware imperfections, we extend it to experimental data by introducing a symmetry-based data augmentation technique that satisfies the convergence requirements of the Magnus expansion. In addition, we perform a rigorous semi-analytical analysis of the estimation variance induced by sampling noise. These enhancements allow the originally idealized QSP algorithm to be effectively adapted to realistic experimental conditions. According to the analysis in \cref{sec:app_Fourier_analysis_post_processing}, the propagation of measurement noise through this learning procedure is characterized by the following theorem.

\begin{thm}\label{thm:qsp_estimation_variance}
    Suppose the noise variance in the experimental data is $\sigma^2 = \Or(1 / M)$, where $M$ denotes the number of measurement repetitions per experiment. When $\theta$-values are sampled at $N$ evenly spaced midpoints in $[0, \pi/2]$, the noise-induced estimation variance in the phase-factor learning algorithm satisfies the recurrence
    \begin{equation*}
    \begin{split}
        & \mathrm{Var}(\hat{\psi}_{j + 1}) \le \rho_j \, \mathrm{Var}(\hat{\psi}_{j}) + \Or\!\left(\frac{\alpha_j}{M N}\right), \quad j = 1, \ldots, \lceil L/2 \rceil,\\
        & \mathrm{Var}(\hat{\psi}_{j - 1}) \le \rho_j \, \mathrm{Var}(\hat{\psi}_{j}) + \Or\!\left(\frac{\alpha_j}{M N}\right), \quad j = L, \ldots, \lceil L/2 \rceil,
    \end{split}
    \end{equation*}
    where $\rho_j$ and $\alpha_j$ are constants determined by the smoothness of the analog pulse.
\end{thm}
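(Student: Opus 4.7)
The plan is to track the propagation of statistical noise through three stages: (i) from raw unitary measurements to Fourier coefficients of the surrogate $W(\theta, \Psi)$, (ii) from the Fourier coefficients to an estimate of the outermost phase factor $\psi_1$ (or, in the backward peel, $\psi_L$), and (iii) from that estimate to the peeled surrogate $W(\theta, \Psi) V^{-1}(\theta, \hat\psi_1)$, which serves as input for the next induction step. By the symmetry of \cref{eqn:Fourier_learning_P} under reversing the phase sequence, the forward and backward recurrences follow from the same template, so I would prove one direction and transfer the argument.

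For stage (i), I would exploit the isometry of the discrete Fourier transform. The symmetry-based augmentation lifts the $N$ midpoint samples in $[0, \pi/2]$ to $4N$ samples on $[0, 2\pi]$ at no extra shot cost, and the tomographically reconstructed unitary at each $\theta_k$ carries per-entry variance $\sigma^2 = \Or(1/M)$. A Parseval argument then gives
\begin{equation*}
    \mathrm{Var}(\hat{C}_j) = \Or\!\left(\frac{\sigma^2}{N}\right) = \Or\!\left(\frac{1}{MN}\right)
\end{equation*}
uniformly in $j$. For stage (ii), I would linearize the algebraic map \cref{eqn:Fourier_learning_P}, writing $e^{\I \psi_1} = -2[C_L^\dagger C_L]_{12}/\Tr(C_L^\dagger C_L)$ and applying the delta method. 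The resulting variance bound is $\mathrm{Var}(\hat\psi_1) \le \Or(\|C_L\|^{-2})\cdot\mathrm{Var}(\hat{C}_L)$, where the prefactor absorbs the condition number of the extraction map and is controlled by the Bernstein-type smoothness of $\phi$ (which in turn lower-bounds the magnitude of the leading Fourier coefficient of a product of near-aligned rotations). This produces the base case $\mathrm{Var}(\hat\psi_1) = \Or(\alpha_1/(MN))$.

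For stage (iii), I would expand the peeled surrogate as
\begin{equation*}
    W\, V^{-1}(\theta, \hat\psi_1) = W\, V^{-1}(\theta, \psi_1) + W\!\left[V^{-1}(\theta, \hat\psi_1) - V^{-1}(\theta, \psi_1)\right].
\end{equation*}
The first summand is the ideal reduced model of QSP degree $L-1$, which feeds directly into the next iteration of stage (ii). The second summand is an added noise term; since $V(\theta, \psi)$ is smooth in $\psi$, its operator norm is $\Or(|\hat\psi_1 - \psi_1|)$, so its Fourier coefficients contribute an additive variance of order $\rho_1 \mathrm{Var}(\hat\psi_1)$. Combining with the fresh $\Or(1/(MN))$ variance from the remaining shot noise yields exactly the recurrence $\mathrm{Var}(\hat\psi_{j+1}) \le \rho_j \mathrm{Var}(\hat\psi_j) + \Or(\alpha_j/(MN))$. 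Applying the same argument to $V^{-1}(\theta, \psi_L) W$ gives the backward direction.

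The hard part will be giving sharp and uniform bounds on the constants $\rho_j$ and $\alpha_j$ across all peeling layers. Two subtleties stand out. First, one must guarantee that the extraction map remains well conditioned throughout the peel: the magnitude of the leading Fourier coefficient of the shortened sequence $(\psi_{j+1}, \ldots, \psi_L)$ must not collapse, which is precisely where the Bernstein-type smoothness assumption enters to prevent destructive interference among the rotations. Second, the peeled surrogate is no longer a clean degree-$(L-j)$ unitary polynomial but a noisy perturbation of one, so the FFT-based extraction in the next layer picks up aliasing contamination; showing that these aliasing contributions remain $\Or(1/(MN))$ rather than growing with $L$, and that the geometric factor $\rho_j$ stays bounded away from values that would cause the recurrence to blow up, is the quantitative heart of the argument.
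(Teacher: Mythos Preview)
Your three-stage outline---Parseval on the FFT, delta-method linearization of the extraction map \cref{eqn:Fourier_learning_P}, and a perturbative expansion of the peeled surrogate---matches the paper's argument in structure and in the scaling it produces. The one place where the paper organizes things differently is your stage~(iii): rather than expanding $W\,V^{-1}(\theta,\hat\psi_1)$ in the $\theta$-domain, the paper linearizes the Fourier-space reduction $C'_k = C_{k-1}Q_{\varphi} + C_{k+1}P_{\varphi}$ directly. This yields an explicit decomposition of $\delta C'_{j-1}$ into four channels (two branches from neighboring coefficients, a phase-channel term $(C_j - C_{j-2})P'_{\varphi_j}\,\delta\varphi_j$, and projector-estimation noise via the Fr\'echet derivative of $C\mapsto C^\dagger C/\Tr(C^\dagger C)$), and then reads off $\rho_j = g_j^2$ and $\alpha_j$ as Frobenius norms of concrete Riesz representatives. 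Doing the peel in coefficient space sidesteps your aliasing concern entirely---the reduced model is by construction a degree-$(L{-}1)$ trigonometric polynomial, so nothing spills outside the band---and makes the ``hard part'' you flag into a computation rather than an estimate. Your conditioning worry (that the leading coefficient not collapse) reappears in the paper as the factor $1/s_{j-1} = 1/\|C'_{j-1}\|_F^2$ in those representatives; the paper does not bound this analytically in general but verifies $\rho_j\approx 1$ semi-analytically for the constant-phase case and numerically for smooth pulses.
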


When the pulse function is sufficiently smooth, the semi-analytical results in \cref{sec:app_Fourier_analysis_post_processing} show that the recurrence is well-conditioned with $\rho_j \approx 1$ and $\alpha_j = \Or(1)$. Consequently, the noise grows approximately linearly toward the center, yielding 
\[
\max_j \mathrm{Var}(\hat{\psi}_j) = \Or(1 / M), \quad \min_j \mathrm{Var}(\hat{\psi}_j) = \Or(1 / (M L)).
\]
This variance scaling is proven optimal as it saturates the Cram\'{e}r-Rao lower bound derived from the Fisher information (see \cref{sec:fisher_info}).

\subsection*{Optimality of Achievable Variance }

The statistical optimality of an estimation protocol can be characterized using the Fisher information matrix (FIM) and the Cram\'er-Rao lower bound (CRLB). When the quantum experimental settings are fixed, the FIM quantifies the amount of information that can be extracted from noisy measurements, and the CRLB establishes the best achievable precision for estimation. Together, they reveal the theoretical limits of post-processing accuracy.

In \cref{sec:fisher_info}, we show that the Fisher information matrix for the digital surrogate model admits the following compact form:
\begin{equation}
    \mc{F}_{ij}(\Psi) = \frac{2M(L+1)}{\pi} \int_0^{\pi/2} 
    \Re\left(
        \braket{0 | 
        \frac{\partial W}{\partial \psi_i} 
        \frac{\partial W^\dagger}{\partial \psi_j} 
        | 0}
    \right) 
    \ud \theta.
\end{equation}

To gain analytical insight, we examine a representative class of surrogate problems in which all phase factors are identical. This translational symmetry allows the Fisher matrix to be expressed in a solvable form. Specifically, the resulting FIM takes a tridiagonal Toeplitz structure,
\begin{equation}
\mc{F}_{ij}^\mathrm{id} = \frac{M (L + 1)}{4} ( 2 \delta_{ij} - \delta_{i, j+1} - \delta_{i, j-1} ),
\end{equation}
whose inverse can be obtained in closed form~\cite{Kay1989}. From this expression, we derive the minimum achievable estimation variance given by the CRLB and identify the strong statistical correlations between neighboring phase factors. These correlations explain the observed saturation behavior in \cref{thm:end-to-end-algorithm}. That is, increasing the number of discretized segments $L$ does not further improve estimation precision because of the intrinsic correlation among estimators.

\subsection*{Optimality of Achievable   Bias}

We justify the optimality of our bias scaling by analyzing how accurately the discretized pulse values can be approximated from the short-time evolution generators.

We begin by expressing the full time-evolution operator $U(T, 0; \omega)$ as an exact product of short-time segments:
\begin{equation}\label{eqn:Trotterized_time_evolution}
    U(T, 0; \omega) = U(L \tau, (L - 1) \tau; \omega) \cdots U(\tau, 0; \omega),
\end{equation}
where each short-time segment $U(t_j, t_{j-1}; \omega)$ depends only on the pulse function within the local interval $[t_{j-1}, t_j]$.  
The key question is: how much information about the pulse can be extracted from each segment if we fully exploit its nonlinear structure?

For the single-qubit case, each segment can be expressed exactly in terms of Lie-algebra generators:
\begin{equation}
    U(j \tau, (j-1) \tau; \omega) = \exp(-\I (a_j X + b_j Y + c_j Z)),
\end{equation}
where the coefficients $a_j, b_j, c_j$ depend nonlinearly on $(\omega, \tau, \phi)$.  
Our QSP-based surrogate model corresponds to a first-order linearization of these coefficients with respect to $\omega$.  
A natural question follows: if we retained the full nonlinear dependence, how much additional information about the pulse could we gain?

In \cref{sec:app_analytical_properties}, we show that the following estimator can approximate the average pulse value with a remarkably small error:
\begin{equation}\label{eqn:data_error}
    \wt{\psi}_j := \arctan\!\left(\frac{b_j}{a_j}\right)
    \quad \text{s.t.} \quad
    \abs{\wt{\psi}_j - \bar{\phi}_j} \le \Or(\beta^2 (\beta^2 + \omega^2) \tau^4).
\end{equation}
The $\Or(\tau^4)$ scaling arises from the trigonometric structure of the Hamiltonian. Since statistical estimation requires nonzero Fisher information, it requires $\omega_{\max}\tau \approx \pi/2$. Consequently, this bound corresponds to a second-order optimal approximation when the full nonlinearity in the short-time evolution is utilized. We summarize this result below.

\begin{thm}\label{thm:bounded_bias}
    Suppose $U(T, 0; \omega)$ is discretized into $L$ short-time segments with step size $\tau = T / L$.  
    Let $\wt{\psi}_j$ be the estimator defined in \cref{eqn:data_error}, which exploits the full nonlinear dependence in the short-time evolution.  
    When learning from experimental data obtained with finite measurement shots, the systematic bias satisfies
    \begin{equation}
        \max_{j = 1, \ldots, L} \abs{\wt{\psi}_j - \bar{\phi}_j}
        \le \Or\!\left( \beta^2 T^2 / L^2 \right),
    \end{equation}
    where $\bar{\phi}_j = \frac{1}{\tau} \int_{(j - 1)\tau}^{j \tau} \phi(s) \, \mathrm{d}s$ is the average pulse value on each subinterval.
\end{thm}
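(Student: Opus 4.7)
The plan is to derive the stated bias bound by combining the per-segment approximation estimate already asserted in \cref{eqn:data_error} with the sampling constraint on $\omega$ imposed by the Fisher-information analysis of \cref{sec:fisher_info}. The theorem is essentially a corollary of the sharper local bound once the correct scaling of $\omega_{\max}$ in $L$ is substituted.

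First I would decompose $U(T,0;\omega)$ into $L$ short-time segments as in \cref{eqn:Trotterized_time_evolution}. For the single-qubit Hamiltonian of \cref{eqn:single_qubit_H_model}, each segment admits the Lie-algebra representation $\exp(-\I(a_j X + b_j Y + c_j Z))$, with $(a_j,b_j,c_j)$ obtained from the Magnus expansion of the time-ordered exponential over $[(j-1)\tau, j\tau]$. The first-order Magnus term contributes $\omega\tau\cos\bar{\phi}_j$ to $a_j$ and $\omega\tau\sin\bar{\phi}_j$ to $b_j$, so that $\arctan(b_j/a_j)$ already recovers $\bar\phi_j$ at leading order. Higher-order Magnus terms involve nested commutators of $H_\omega[\phi(t)]$ at distinct times and therefore depend on derivatives of $\phi$, whose magnitude is controlled by the Bernstein smoothness parameter $\beta$.

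Second, I would invoke the per-segment bound from \cref{sec:app_analytical_properties}, namely $|\wt\psi_j - \bar\phi_j|\le \Or(\beta^2(\beta^2 + \omega^2)\tau^4)$. The essential cancellation is that forming the ratio $b_j/a_j$ and passing through $\arctan$ annihilates a symmetric higher-order Magnus contribution that would otherwise produce an $\Or(\tau^3)$ residual, pushing the local error to $\Or(\tau^4)$. This trigonometric cancellation is the technically delicate ingredient and is the only step of the proof that genuinely requires a detailed Magnus-series accounting; without it the final bound would degrade to $\Or(T/L)$ rather than the advertised $\Or(T^2/L^2)$.

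Third, I would combine this local estimate with the Fisher-information constraint, which forces the experimentally useful regime $\omega_{\max}\tau \approx \pi/2$, i.e., $\omega_{\max} = \Or(L/T)$. Under this scaling, $\omega^2 \tau^4 = (\omega\tau)^2 \tau^2 = \Or(T^2/L^2)$, which dominates $\beta^2\tau^4$ as soon as $L$ is large enough that $\omega_{\max} \gtrsim \beta$. Substituting yields
\begin{equation*}
|\wt\psi_j - \bar\phi_j| \le \Or(\beta^2\, \omega_{\max}^2\, \tau^4) = \Or(\beta^2 T^2 / L^2),
\end{equation*}
and because this bound is uniform in $j$, taking the maximum over $j=1,\dots,L$ preserves it. Finally, since the theorem concerns systematic bias only, finite-shot noise enters through $\mathrm{Var}(\wt\psi_j)$ rather than through $|\mathbb{E}\wt\psi_j - \bar\phi_j|$, so the deterministic analysis above transfers to the noisy setting after taking expectation. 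The main obstacle is therefore not the present theorem itself but its appendix prerequisite: the fourth-order Magnus cancellation, which requires carefully exploiting the specific trigonometric structure of \cref{eqn:single_qubit_H_model} rather than treating $H_\omega[\phi]$ as a generic time-dependent Hamiltonian.
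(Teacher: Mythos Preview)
Your proposal is correct and follows essentially the same route as the paper: the theorem is treated as a corollary of the local $\Or(\beta^2(\beta^2+\omega^2)\tau^4)$ bound from \cref{sec:app_analytical_properties} (specifically \cref{thm:estimation_error_psi_barphi}), combined with the Fisher-information constraint $\omega_{\max}\tau\approx\pi/2$ that fixes $\omega_{\max}=\Or(L/T)$. One small clarification on the cancellation mechanism: in the paper's proof (see \cref{lem:first_order_Magnus_angle_estimation}), the $\Or(\tau^3)$ correction to both $a_1$ and $b_1$ appears as a common multiplicative factor $(1-q(h))$ in the first-order Magnus integrals of $\cos\phi$ and $\sin\phi$, which cancels exactly in the ratio; the higher-order Magnus terms then contribute only at $\Or(\beta^2\omega^3\tau^5)$ and do not disturb this cancellation.
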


Although our QSP-based surrogate model alone achieves only first-order accuracy, combining it with Richardson extrapolation allows the leading-order nonlinear effects to be captured through post-processing.  
As a result, our reconstruction achieves a second-order error scaling as in \cref{thm:end-to-end-algorithm}. This agreement with the fully nonlinear analysis confirms that the quadratic $L$-dependence in our result is arguably optimal.

 \bigskip

\noindent{\large \textbf{Data availability}}\\
All data presented in this work are visualized in the figures within the main text and the Supplementary Information file.\\

\noindent{\large \textbf{Code availability}}\\
The codes that support the finding are available at \url{https://github.com/dongsnaq/qsp_analog_pulse_characterization}. \\
 
\noindent {\large \textbf{Acknowledgments}}\\

C.K. thanks Shraddha Anand and Tom Manovitz for useful conversations and Fred Chong for his support. 
M. N. is
supported by the U.S. National Science Foundation grant CCF-2441912(CAREER), Air Force Office of Scientific Research under award number FA9550-25-1-0146, and   the U.S. Department of Energy,  Office of  Advanced Scientific Computing Research  under Award Number DE-SC0025430.
C.K. is funded in part by the STAQ project under award NSF Phy-232580; in part by the US Department of Energy Office of Advanced Scientific Computing Research, Accelerated 
Research for Quantum Computing Program; and in part by the NSF Quantum Leap Challenge Institute for Hybrid Quantum Architectures and Networks (NSF Award 2016136), in part by the NSF National Virtual Quantum Laboratory program, in part based upon work supported by the U.S. Department of Energy, Office of Science, National Quantum 
Information Science Research Centers, and in part by the Army Research Office under Grant Number W911NF-23-1-0077. The views and conclusions contained in this document are those of the authors and should not be interpreted as representing the official policies, either expressed or implied, of the U.S. Government. The U.S. Government is authorized to reproduce and distribute reprints for Government purposes notwithstanding any copyright notation herein.

 \bigskip
 
\noindent {\large \textbf{Author contributions}}\\

M. N. and Y. D. proposed the initial QSP-based estimation scheme. C. K. helped improve and implement the numerical simulations of the original scheme. Y. D. further developed the full algorithm and carried out the theoretical analysis and numerical simulations supporting the study. All authors contributed to the writing of the manuscript.

 \bigskip
 
\noindent{\large \textbf{Competing interests}}\\
\noindent The authors declare no competing interests.

\bibliographystyle{plain}

\newpage
\clearpage
\appendix
\onecolumngrid

 \begin{center}
     {\Large \bf \SP}
 \end{center}

\tableofcontents

\section{Notations and Preliminaries}
\subsection{Notations}
Throughout the paper, we use $\norm{A}$ and $\norm{A}_F$ to denote the matrix 2-norm and Frobenius norm, respectively. For a function $f$, $\norm{f}_\infty$ denotes its maximum norm over its domain $\mathrm{Dom}(f)$. We write $a \lesssim b$ to indicate that $a$ is bounded above by $b$ up to a small higher-order remainder, i.e., $a \le b(1 + o(1))$. In Supplementary Information, we use $h := T / L$ to denote the time step size.

\subsection{Bernstein-Type Pulse Functions}\label{sec:app_Bernstein_pulse}

In this work, we consider a class of smooth pulse function whose smoothness and derivative growth are controlled by a constant parameter. This bounded behavior mimics and generalizes that of classic bandlimited functions in signal processing according to Bernstein's inequality \cite{butzer2013basic}. Hence, it is referred to as the Bernstein class in the literature \cite{roytwarf1997bernstein}.

We remark that the regularity requirement on the target pulse function can be substantially relaxed: it suffices for $\phi \in C^k[0, T]$ to be $k$-times differentiable, rather than of Bernstein type. The smoothness parameter $k$ depends on the specific spline-based reconstruction scheme employed. To simplify the preconstants in the presented theoretical bounds, however, we assume the target pulse function is of Bernstein type.

\begin{defn}[Bernstein-type pulse]
Let $\beta, T > 0$ be two constant parameters. We define the class of Bernstein-type pulse functions as
\begin{equation}
    \mathscr{P}_\beta([0, T]) := \Bigl\{ \phi \in C^\infty([0,T]) : \exists C>0 \text{ s.t. } \sup_{t\in[0,T]}\abs{\phi^{(k)}(t)} \le C \beta^{k},
\ \forall k\in\mathbb{N} \Bigr\}.
\end{equation}
Unless otherwise noted, when a pulse function is defined on $[0, T]$, we use the simplified notation $\phi \in \mathscr{P}_\beta$ to indicate that it is a Bernstein-type pulse.
\end{defn}

This Bernstein-type class of pulse functions is very general. We discuss two instances in this class.

\vspace{0.5em}\noindent
\textbf{(1) Bandlimited pulse functions}
\vspace{0.5em}

A smooth pulse function $\phi \in C^\infty[0, T]$ is said to be bandlimited with bandwidth $\beta > 0$ if there exists a smooth extension $g \in C^\infty(\mathbb{R})$ such that
\begin{equation}\label{eqn:bandlimited_condition}
    \phi = g|_{[0, T]}, \quad \norm{g}_\infty \le \mc{O}(1), \text{ and }\quad  \hat{g}(\xi) = \int_{-\infty}^{\infty} g(t)  e^{-\I \xi t}  \ud t = 0,\ \forall  |\xi| > \beta.
\end{equation}
Turning derivatives to multiplication in the frequency domain, we have the standard Bernstein's inequalities for bandlimited pulse functions
\begin{equation}
    \sup_{t\in[0,T]}\abs{\phi^{(k)}(t)} \le \sup_{t\in \RR}\abs{g^{(k)}(t)} \le C \beta^{k}.
\end{equation}
Therefore, bandlimited functions are in the Bernstein-type pulse class $\mathscr{P}_\beta$.

In our simulations, when $\phi$ is expressed as a finite sum of sine and cosine modes, it satisfies the bandlimited condition \cref{eqn:bandlimited_condition}. Consequently, $\phi$ is both bandlimited and Bernstein-type.

\vspace{0.5em}\noindent
\textbf{(2) Polynomial pulse functions}
\vspace{0.5em}

Let $\phi \in \RR_d[t]$ be a polynomial pulse function of degree at most $d$. Suppose it is bounded on its domain $[0, T]$ as $\sup_{0 \le t \le T} \abs{\phi(t)} \le C$. We show that it is a Bernstein-type pulse function with $\beta = 2 d^2 / T$. 

We first perform a scale transformation so we have an equivalent polynomial function defined on $[-1, 1]$:
\begin{equation}
    x := \frac{2 t - T}{T},\ q(x) := \phi(T(x+1) / 2) \in \RR_d[x].
\end{equation}
Applying Markov brother's inequalities \cite{Markov1890} to the alternative polynomial, we have
\begin{equation}
    \sup_{-1 \le x \le 1} \abs{q^{(k)}(x)} \le d^{2k} \sup_{-1 \le x \le 1} \abs{q(x)}.
\end{equation}
Undoing the scale transformation, we have
\begin{equation}
    \sup_{0 \le t \le T} \abs{\phi^{(k)}(t)} \le C  \left(\frac{2 d^2}{T}\right)^k.
\end{equation}
Hence, polynomial pulse functions are Bernstein-type.

\section{Proof of \cref{thm:end-to-end-algorithm} and Related Results in Estimation Performance}\label{sec:app:proof_main_thm}
The technical results in the following sections are centered on proving \cref{thm:end-to-end-algorithm}. We provide a roadmap of the proof in this section.

\subsection{Performance metrics of pulse learning}\label{subsec:app:metrics}
We evaluate the estimation error using two performance metrics. In \cref{thm:end-to-end-algorithm} of the main text, the guarantee is stated in terms of the 
\emph{uniform mean error}, which keeps the presentation clean and focuses on the optimal scaling. For completeness, we also introduce a stronger notion of accuracy to establish a high-probability bound on the maximal deviation.

\begin{defn}[Performance metrics]
    Let $\hat{\phi}(t)$ be an estimator of the pulse function $\phi(t)$ defined on the interval 
    $[0,T]$. 
    Let $\mathcal{I}\subset[0,T]$ be the interval on which the estimation performance is evaluated. 
    We consider the following two metrics:
    \begin{enumerate}
        \item \emph{Uniform mean error}:
        \begin{equation}
            \sup_{t\in\mathcal{I}} 
            \mathbb{E}\left( |\hat{\phi}(t)-\phi(t)| \right)
            \le \epsilon.
        \end{equation}

        \item \emph{Uniform maximal error}:
        \begin{equation}
            \sup_{t\in\mathcal{I}} 
            |\hat{\phi}(t)-\phi(t)|
            \le \epsilon
            \quad\text{with high probability}.
        \end{equation}
    \end{enumerate}
\end{defn}

Assuming that $\hat{\phi}(t)$ is uniformly bounded, a high-probability 
uniform maximal error bound of the form $\mathbb{P}(\sup_{t\in\mathcal{I}} |\hat{\phi}(t)-\phi(t)| \le \epsilon) \ge 1 - \delta$ immediately implies a corresponding bound on the expected max-norm error. That is $\EE\sup_t \abs{\hat{\phi}(t) - \phi(t)} \le \epsilon + C \delta$ for some constant $C$. Consequently, the second error metric (uniform maximal error) is stronger than the uniform mean error. Intuitively, this follows from the basic inequality $\sup_t \EE\abs{\hat{\phi}(t) - \phi(t)} \le \EE\sup_t \abs{\hat{\phi}(t) - \phi(t)}$. 

\subsection{Structure of the estimator}\label{subsec:app:structure_estimator}
Before proving results, we restate the structure of the problem that we derived in the following sections.

In \cref{sec:app:reconstruct_analog}, we show the following decomposition to separate the analysis of bias and variance in the estimator:
\begin{equation}
    \hat{\phi}(t) = \mathbb{E}(\hat{\phi}(t)) + \zeta^\top(t) \vec{\varepsilon}.
\end{equation}
Here, $\zeta(t) \in \RR^L$ is a coefficient vector with $\norm{\zeta(t)}_1, \norm{\zeta(t)}_2 \le \Or(1)$, and $\vec{\varepsilon} \in \RR^L$ is the normal-distributed vector of data noise with zero mean and $\mathrm{Cov}(\vec{\epsilon}) \preceq \sigma^2 I$.
Then, we have the following bias-variance decomposition
\begin{equation}\label{eqn:bias_variance_decomposition}
    \hat{\phi}(t) - \phi(t) = \left(\mathbb{E}(\hat{\phi}(t)) - \phi(t)\right) + \zeta^\top(t) \vec{\varepsilon} =: \mathrm{bias}(t) + \zeta^\top(t) \vec{\varepsilon}.
\end{equation}
In \cref{sec:ho_bias_reduction}, we show that the bias term can be reduced to second order using Richardson extrapolation. Suppose $L$ is the algorithm parameter as in \cref{thm:end-to-end-algorithm}. \cref{thm:Richardson_extrapolation_estimation_error_bound} indicates that
\begin{equation}\label{eqn:bias_bound}
    \abs{\mathrm{bias}(t)}  \le C_1 \frac{\beta^2 T^2}{L^2}
\end{equation}
for some constant $C_1$. 

According to the analysis of Fisher information in \cref{sec:fisher_info}, we have
\begin{equation}\label{eqn:sigma_sq_bound}
    \sigma^2 \le \Or(1/M).
\end{equation}

These suffice to prove our main technical result.

\subsection{Proof of \cref{thm:end-to-end-algorithm}}\label{subsec:app:proof_main_theorem}
\begin{proof}
    The result follows the structure in \cref{subsec:app:structure_estimator}. 

    In terms of mean square error, the bias-variance decomposition in \cref{eqn:bias_variance_decomposition} becomes
    \begin{equation}
        \mathbb{E}\left( (\hat{\phi}(t) - \phi(t))^2 \right) = \mathrm{bias}^2(t) + \zeta^\top(t) \mathbb{E}(\vec{\varepsilon}^\top \vec{\varepsilon}) \zeta(t) \le \mathrm{bias}^2(t) + \sigma^2 \norm{\zeta(t)}_2^2 \le \left(C_1 \frac{\beta^2 T^2}{L^2}\right)^2 +  C_2^2 \frac{1}{M}.
    \end{equation}
    for some constant $C_2$. Here, the last expression uses \cref{eqn:bias_bound,eqn:sigma_sq_bound}. This bound holds for any interior point $t \in \mc{I}_\circ$.

    According to Jensen's inequality, we have
    \begin{equation}
        \mathbb{E}^2\abs{\hat{\phi}(t) - \phi(t)} \le \mathbb{E}\left( (\hat{\phi}(t) - \phi(t))^2 \right) \le \left(C_1 \frac{\beta^2 T^2}{L^2}\right)^2 +  C_2^2 \frac{1}{M}.
    \end{equation}

    Note that $\sqrt{a + b} \le \sqrt{a} + \sqrt{b}$ holds for any $a, b \ge 0$. Then, we have
    \begin{equation}
        \sup_{t \in \mc{I}_\circ} \mathbb{E}\abs{\hat{\phi}(t) - \phi(t)} \le C_1 \frac{\beta^2 T^2}{L^2}^2 +  C_2\frac{1}{\sqrt{M}}.
    \end{equation}
    Similarly, we can prove the bound in the boundary region $[0, T] \backslash \mc{I}_\circ$ in which the bias is first order. The proof is complete.
\end{proof}

\subsection{Performance in terms of uniform maximal error}
\begin{thm}\label{thm:uniform_maximal_error}
    Let $\phi \in \mathscr{P}_\beta([0, T])$, and let $L, M$ be positive integers. Suppose $M$ is the number of measurement shots per experiment, and that there are $L + 1$ experiments implementing Hamiltonian magnitudes $\omega_j = \frac{(2j+1)\pi}{4L+4}$ for $j = 0, 1, \ldots, L$. Let $\mc{I}^\circ = [1/L, T - 1/L]$ denote the interior region. Let $\hat\phi(t)$ denote the reconstructed pulse obtained from the end-to-end pipeline. Then, there exist constants $\wt{C}_1, \wt{C}_2$ such that for any inconfidence $\delta$
    \begin{equation}
        \mathbb{P}\left(\sup_{t \in \mc{I}^\circ} \abs{\hat\phi(t) - \phi(t)}^2 \le \left(\wt{C}_1 \frac{\beta^2 T^2}{L^2}\right)^2 + \wt{C}_2^2 \frac{\log (2 L / \delta)}{M}\right) \ge 1 - \delta.
    \end{equation}
    At the boundaries $t \in [0, 1/L] \cup [T - 1/L, T]$, a similar error bound can be derived by substituting the bias term with $\Or(\beta T / L)$.
\end{thm}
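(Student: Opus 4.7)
The plan is to upgrade the in-expectation bound of \cref{thm:end-to-end-algorithm} to a high-probability uniform bound by controlling the stochastic part of the estimator with a Gaussian union bound over the $L$ learned phase factors. Starting from the decomposition $\hat\phi(t) - \phi(t) = \mathrm{bias}(t) + \zeta^\top(t)\vec{\varepsilon}$ recorded in \cref{subsec:app:structure_estimator}, I would first observe that the bias term is deterministic and uniformly bounded by $\abs{\mathrm{bias}(t)} \le C_1 \beta^2 T^2/L^2$ on $\mc{I}^\circ$ by the Richardson-extrapolation argument already invoked in the proof of the in-expectation statement. The stochastic term is a linear functional of the Gaussian vector $\vec{\varepsilon}\in\RR^L$ satisfying $\mathrm{Cov}(\vec{\varepsilon}) \preceq \sigma^2 I$ with $\sigma^2 \le \Or(1/M)$.

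The key structural fact I would leverage is that $\hat\phi(t)$ is a spline interpolant of only $L$ learned phase factors, so its randomness lives in a finite-dimensional Gaussian subspace; moreover, the coefficient vector satisfies $\sup_t \norm{\zeta(t)}_1 \le C_\zeta = \Or(1)$ by the construction in \cref{subsec:app:structure_estimator}. By H\"older's inequality, this reduces the continuous supremum to a discrete one,
\begin{equation*}
    \sup_{t \in \mc{I}^\circ} \abs{\zeta^\top(t)\vec{\varepsilon}} \le C_\zeta \max_{1 \le j \le L} \abs{\varepsilon_j}.
\end{equation*}
Next I would apply the standard Gaussian tail bound $\Prr{\abs{\varepsilon_j} > s} \le 2 \exp(-s^2 / (2\sigma^2))$ together with a union bound over $j = 1,\dots,L$ and set $s = \sigma\sqrt{2\log(2L/\delta)}$, concluding that $\max_j \abs{\varepsilon_j} \le \sigma \sqrt{2\log(2L/\delta)}$ holds with probability at least $1-\delta$.

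Conditioning on this event, for every $t \in \mc{I}^\circ$ one has $\abs{\hat\phi(t) - \phi(t)} \le C_1 \beta^2 T^2/L^2 + C_\zeta \sigma \sqrt{2\log(2L/\delta)}$. Squaring via the elementary inequality $(a+b)^2 \le 2a^2 + 2b^2$ and absorbing $\sigma^2 \le \Or(1/M)$ into constants yields the stated bound with $\wt{C}_1^2 = 2C_1^2$ and $\wt{C}_2^2$ an appropriate multiple of $C_\zeta^2$. The boundary region $[0,T]\setminus\mc{I}^\circ$ is handled by an identical argument after substituting the first-order bias bound $\Or(\beta T/L)$ from \cref{thm:end-to-end-algorithm}, affecting only the leading $\wt{C}_1$-term.

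The main obstacle I anticipate is verifying the uniform $\ell_1$-bound $\sup_t \norm{\zeta(t)}_1 = \Or(1)$, since $\zeta(t)$ composes the spline-basis functions with the noise-propagation weights of the QSP phase extraction and of Richardson extrapolation. Establishing $C_\zeta = \Or(1)$ requires bounded Lebesgue constants of the chosen spline basis together with the $\Or(1)$ prefactors from \cref{sec:ho_bias_reduction} and the variance recursion of \cref{thm:qsp_estimation_variance}. Should any of these prefactors turn out to grow mildly with $L$, the argument still goes through at the cost of inflating the $\log(2L/\delta)$ factor by a benign polylogarithmic term, preserving the stated scaling up to constants.
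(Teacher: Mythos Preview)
Your proposal is correct and follows essentially the same approach as the paper's proof: decompose into bias plus $\zeta^\top(t)\vec{\varepsilon}$, bound the stochastic term via H\"older by $\norm{\zeta(t)}_1\norm{\vec{\varepsilon}}_\infty$, control $\norm{\vec{\varepsilon}}_\infty$ with a Gaussian tail plus union bound over $L$ coordinates, and combine via $(a+b)^2\le 2a^2+2b^2$. The only cosmetic difference is the order in which the squaring and H\"older steps are applied; the $\norm{\zeta(t)}_1=\Or(1)$ input you flag as a potential obstacle is exactly what the paper records in \cref{subsec:app:structure_estimator} (proved in the spline-reconstruction analysis), so no additional work is needed.
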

Applying the inequality $\sqrt{a + b} \le \sqrt{a} + \sqrt{b}$, the following uniform maximal error bound holds
\begin{equation}
    \sup_{t \in \mc{I}^\circ} \abs{\hat\phi(t) - \phi(t)} \le \wt{C}_1 \frac{\beta^2 T^2}{L^2} + \wt{C}_2 \sqrt{\frac{\log L}{M}}\quad \text{with high probability}.
\end{equation}
\begin{proof}
    Applying the inequality $(a+b)^2 \le 2a^2 + 2b^2$ to \cref{eqn:bias_variance_decomposition}, we have
    \begin{equation}
        \abs{\hat{\phi}(t) - \phi(t)}^2 \le 2\ \mathrm{bias}^2(t) + 2\abs{\zeta^\top(t) \vec{\varepsilon}}^2 \le 2\ \mathrm{bias}^2(t) + 2\norm{\zeta^\top(t)}_1^2 \norm{\vec{\varepsilon}}_\infty^2.
    \end{equation}

    Using union bound and the Gaussian tail bound, we have
    \begin{equation}
    \begin{split}
        \mathbb{P}(\norm{\vec{\varepsilon}}_\infty^2 \le 2 \sigma^2 \log(2L/\delta)) &\ge \mathbb{P}(\max_i \abs{\varepsilon_i / \sigma_i}^2 \le 2 \log(2L/\delta))\\
        &= 1 - \mathbb{P}(\max_i z_i^2 > 2 \log(2L/\delta))\quad \text{ where } z_i \sim N(0, 1)\\
        & \ge 1 - \sum_{i = 1}^L \mathbb{P}(z_i^2 > 2 \log(2L/\delta)) \ge 1 - 2 L e^{- \log(2L/\delta)} \ge 1 - \delta.
    \end{split}
    \end{equation}

    The result follows.

\end{proof}

\section{Digitizing Analog Pulse Function via Segmented Generators}\label{sec:app_analytical_properties}

We begin by outlining the technical results presented in this section before providing the detailed proofs.

Though the time-dependent Hamiltonian dynamics can not be treated simply like time-independent dynamics, the time evolution matrix can still be factored into the product of short-time evolutions, Given a time partition $0 = t_0 < t_1 < \cdots < t_L = T$, we have
\begin{equation}
    \TOR e^{- \I \int_0^T H(s; \omega) \ud s} = \TOR e^{- \I \int_{t_{L - 1}}^{t_L} H(s; \omega) \ud s} \cdots \TOR e^{- \I \int_{t_0}^{t_1} H(s; \omega) \ud s}.
\end{equation}
In general cases, pinning down a choice of Lie algebra allows us to represent each short-time evolution matrix. Specifically, in the single-qubit case, Pauli matrices can be set to a basis of the Lie algebra. For the time-dependent Hamiltonian evolution considered in this paper, we will show in \cref{lem:time_evolution_generator_repr} that they admit the following form:
\begin{equation}
    \TOR e^{-\I \int_{t_{j-1}}^{t_j} H(s) \ud s} = \exp\left(- \I \left(a(t_1, t_2; \omega) X + b(t_1, t_2; \omega) Y + c(t_1, t_2; \omega) Z\right) \right).
\end{equation}
Here, we refer these functions $a, b, c$ as the generators of each short-time evolution. Intuitively, when the evolution time is short enough $t_2 - t_1 \ll 1$, the pulse function involved in each segment becomes much simpler to characterize. Though this intuition is simple, the complex time-ordering operator makes the practically useful relation illusively hidden behind the convoluted nonlinear $\phi$-dependencies. How can one formalize this intuition as a tangible algorithm? In 
\cref{thm:estimation_error_psi_barphi}, we show that the following bound relates the generators and the characteristics of the pulse:
\begin{equation}\label{eqn:ADC_angle_error}
\begin{split}
    & \psi(t_1, t_2; \omega): = \arctan\left( \frac{b(t_1, t_2; \omega)}{a(t_1, t_2; \omega)} \right),\\
    & \abs{\psi(t_1, t_2; \omega) - \frac{1}{t_2 - t_1} \int_{t_1}^{t_2} \phi(s) \ud s} \le \Or\left((1 + \omega^2)(t_2 - t_1)^4\right).
\end{split}
\end{equation}
As a result of nontrivial analyses, this bound implies that if the generators of the short-time evolution can be learned, the averaged pulse value on such short-time interval can be approximately derived. 

Furthermore, this establishes an analog-to-digital conversion at the logical level. Specifically, we can define the following mapping through a bounded-error correspondence stated in \cref{eqn:ADC_angle_error}:
\begin{equation}\label{eqn:ADC_logical}
    \text{Analog Rep.:}\ \phi \in \mathscr{P}_\beta,\ \TOR e^{- \I \int_0^T H(s; \omega) \ud s} \quad \to \quad \text{Digital Rep.:}\ (\bar{\phi}_0, \cdots, \bar{\phi}_L) \in \RR^L,\ e^{-\I H[\bar{\phi}_L]} \cdots e^{-\I H[\bar{\phi}_1]}
\end{equation}
The reverse mapping is established in \cref{sec:app:reconstruct_analog}.

These constitute the preliminary steps of our data post-processing. In later sections, we introduce an algorithm for learning the generators in \cref{sec:app_Fourier_analysis_post_processing}, followed by another algorithm for reconstructing the analog pulse from the learned parameters in \cref{sec:app:reconstruct_analog}. Together, these three components form the complete procedure for learning analog pulse functions.

\subsection{Structure of short-time evolution operator}
For simplicity of notation, we define an anti-Hermitian operator $A := -\I H$.
The time-dependent Hamiltonian dynamics we consider can be characterized by the following differential equation
\begin{equation}\label{eqn:time_dep_H}
\begin{split}
    & \frac{\ud}{\ud t} U(t, 0; \omega) = A(t; \omega) U(t, 0; \omega), U(0, 0; \omega) = I, \\
    & \text{where } A(t; \omega) = - \I \omega (\cos(\phi(t)) X + \sin(\phi(t)) Y).
\end{split}
\end{equation}

Formally, the exact solution of this equation can be expressed using a time ordering operator or using Magnus series expansion:
\begin{equation}
    U(t, 0; \omega) = \TOR e^{\int_0^t A(s; \omega) \ud s} = e^{\Omega(t, 0; \omega)}.
\end{equation}
Here, $\Omega(t, 0; \omega)$ is the Magnus operator representing the exponential integrator from time $0$ to time $t$. When the initial time is equal to $0$, we simply write the Magnus operator as $\Omega(t) := \Omega(t, 0; \omega)$. We also drop the $\omega$-dependency in other operators for notational simplicity. 

As a generator in $\mf{su}(2)$, the Magnus operator admits a Pauli decomposition:
\begin{equation}\label{eqn:Magnus_generator}
    \Omega(t) = a(t) X + b(t) Y + c(t) Z.
\end{equation}

This operator can also be written as an infinite series:
\begin{equation}
    \Omega(t) = \sum_{n = 1}^\infty \Omega_n(t)
\end{equation}
where each term can be written in terms of nested commutator:
\begin{equation}\label{eqn:recursive_definition_Omega_n}
    \Omega_n(t) = \sum_{j = 1}^{n - 1} \frac{B_j}{j!} \sum_{\substack{k_1 + \cdots + k_j = n - 1\\ k_1 \ge 1, \cdots, k_j \ge 1}} \int_0^t [\Omega_{k_1}(s), [\Omega_{k_2}(s), \cdots [\Omega_{k_j}(s), A(s)]]] \ud s, n \ge 2.
\end{equation}
Here, $B_j$ is the Bernoulli numbers with $B_1 = - 1/ 2$. The first few terms are
\begin{equation}\label{eqn:Magnus_first_three_terms}
    \begin{split}
        & \Omega_1(t) = \int_0^t A(s) \ud s, \quad \Omega_2(t) = \frac{1}{2} \int_0^t \int_0^{t_1} [A(t_1), A(t_2)] \ud t_2 \ud t_1,\\
        & \Omega_3(t) = \frac{1}{6} \int_0^t \int_0^{t_1} \int_0^{t_2} [A(t_1), [A(t_2), A(t_3)]] + [A(t_3), [A(t_2), A(t_1)]] \ud t_3 \ud t_2 \ud t_1.
    \end{split}
\end{equation}

Note that the time-dependent Hamiltonian in \cref{eqn:time_dep_H} involves oscillatory trigonometric function dependencies. These oscillation contributes to the cancellation of the leading terms in the multi-fold integral in the Magnus term. As in our analyses outlined in \cref{sec:app:further_analysis_Magnus}, the property of the Magnus expansion is as follows.

\begin{lem}\label{lem:time_evolution_generator_repr}
    The $n$-th order Magnus expansion is
    \begin{align}
        \I \Omega_n(t) = \left\{ 
        \begin{array}{ll}
            c_n(t) Z & \text{when } n \text{ is even}, \\
            a_n(t) X + b_n(t) Y & \text{when } n \text{ is odd}.
        \end{array}
        \right.
    \end{align}
    Here, $a_n, b_n, c_n$ are some scalar functions. Furthermore, their magnitudes are
    \begin{equation}
        \norm{\Omega_{2k+1}(t)} = \Or(\beta^2 \omega^{2k+1} t^{2k+3}), k = 0, 1, \cdots, \quad \norm{\Omega_{2k}(t)} = \Or(\beta \omega^{2k} t^{2k+1}), k = 1, 2, \cdots.
    \end{equation}
\end{lem}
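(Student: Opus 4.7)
My plan is to establish the two assertions by induction on the Magnus order $n$, separating the Pauli-component parity from the magnitude estimates; both rest on the specific form $A(s) = -\I\omega(\cos\phi(s) X + \sin\phi(s) Y)$.

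For the parity structure, I would exploit the $\mathbb{Z}_2$-grading of $\mathfrak{su}(2)$ obtained by setting $V_1 := \mathrm{span}_{\RR}\{\I X,\I Y\}$ and $V_0 := \mathrm{span}_{\RR}\{\I Z\}$. The Pauli relations $[X,Y] = 2\I Z$, $[Y,Z] = 2\I X$, $[Z,X] = 2\I Y$ give $[V_a, V_b] \subseteq V_{(a+b)\bmod 2}$, and $A(s) \in V_1$ for every $s$. Unfolding \eqref{eqn:recursive_definition_Omega_n} writes $\Omega_n(t)$ as a finite linear combination of time-ordered $(n{-}1)$-fold nested commutators of $n$ values of $A$, which an induction on $n$ places in $V_{n \bmod 2}$. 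Multiplying by $\I$ sends $V_1$ to $\mathrm{span}_{\RR}\{X,Y\}$ and $V_0$ to $\mathrm{span}_{\RR}\{Z\}$, which is the claimed form.

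For the magnitude bounds, the guiding idea is that the smoothness of $\phi$ produces an extra factor of $\beta|t_i - t_j|$ each time two copies of $A$ are bracketed. The base case $n = 2$ is illustrative: a direct expansion yields $[A(t_1),A(t_2)] = -2\I\omega^2\sin(\phi(t_2) - \phi(t_1))\,Z$, so $\norm{[A(t_1),A(t_2)]} \le 2\omega^2\beta|t_2-t_1|$ from $|\sin x|\le|x|$ and $\norm{\phi'}_\infty = \Or(\beta)$, whence the double integral gives $\norm{\Omega_2(t)} = \Or(\beta\omega^2 t^3)$. For $n=3$ the outer bracket against an element of $V_0$ produces a second sine-difference, yielding $\norm{\Omega_3(t)} = \Or(\beta^2\omega^3 t^5)$. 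In general I would rewrite each factor of $A$ via the frame identity $A(t) = e^{-\I\phi(t)Z/2}(-\I\omega X)e^{\I\phi(t)Z/2}$, so that after expanding a nested commutator of depth $n$ the scalar coefficient becomes a polynomial in sines and cosines of differences $\phi(t_i)-\phi(t_j)$; each surviving sine-difference contributes a factor of $\beta|t_i - t_j|$ to the integrand, which after the $n$-fold time integration in \eqref{eqn:recursive_definition_Omega_n} upgrades the naive $t^n$ into $t^{n+(\text{number of sines})}$.

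The main obstacle is the structural lemma that must accompany the induction: showing that every nested commutator of depth $n$ carries at least one sine-difference factor when $n$ is even and at least two when $n$ is odd. This count is tied directly to the parity argument above, because a bracket expression forced into the wrong summand of the $\mathbb{Z}_2$-grading must have a scalar coefficient that vanishes when all $\phi(t_i)$ coincide, forcing the appearance of the required number of sines. Once this bookkeeping is set up, and the Bernoulli weights $B_j/j!$ and composition sum $\sum k_\ell = n{-}1$ in \eqref{eqn:recursive_definition_Omega_n} are controlled in absolute value as in the standard convergence proof of the Magnus series, summing the individual contributions and integrating in time yields the claimed scalings $\Or(\beta\omega^{2k}t^{2k+1})$ for even $n=2k$ and $\Or(\beta^2\omega^{2k+1}t^{2k+3})$ for odd $n=2k+1$.
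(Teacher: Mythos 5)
Your proof of the parity structure matches the paper's: both use the $\mathbb{Z}_2$-grading of $\mathfrak{su}(2)$ induced by the splitting $\mathrm{span}\{\I X,\I Y\}\oplus\mathrm{span}\{\I Z\}$, carried through the recursive commutator definition of $\Omega_n$ by induction on the depth of the nested bracket. This is fine and is the rigorous part of the paper's argument as well.

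The magnitude argument has a genuine gap, and it is worth flagging that the paper itself does \emph{not} prove these bounds in general: it verifies $n = 2, 3, 5$ by explicit Taylor computation (with symbolic software for $n=5$), then states the general pattern as a \emph{conjecture} supported by a numerical figure. Your proposal is more ambitious in that it tries to give a general argument, but the key structural lemma you invoke does not follow from the reasoning you sketch. You claim that for odd $n$ the scalar coefficient must carry at least \emph{two} sine-difference factors, and you justify this by noting that the coefficient must vanish when all $\phi(t_i)$ coincide. That argument only produces \emph{one} vanishing order (and would equally apply to even $n$), which accounts for the extra $\beta t$ but not the extra $\beta^2 t^2$. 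Concretely, for $\Omega_3$ the integrand contains
\begin{equation*}
f(t_1,t_2,t_3) = \sin\bigl(\phi(t_3)-\phi(t_2)\bigr)\sin\phi(t_1) + \sin\bigl(\phi(t_1)-\phi(t_2)\bigr)\sin\phi(t_3),
\end{equation*}
which vanishes only to first order when the $t_i$ coincide; there is no second sine-difference visible in the integrand. The second power of $\beta t$ arises because the leading (linear) term of the Taylor expansion of $f$, proportional to $t_1 - 2t_2 + t_3$, integrates to \emph{zero} over the ordered simplex $0 \le t_3 \le t_2 \le t_1 \le h$. This is a combinatorial cancellation specific to the Bernoulli-weighted sum in the Magnus recursion and to the ordered-time integration domain, and it is exactly what the paper verifies case by case. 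Your "count the sines in the integrand" framework cannot see it; a correct general argument would have to track parity-dependent cancellations of the simplex integrals of low-order Taylor coefficients, which is a materially different and harder statement than the one you propose to prove.

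One further small issue: as written, the odd bound is stated for $k = 0, 1, \ldots$, but for $k = 0$ it would give $\norm{\Omega_1(t)} = \Or(\beta^2\omega t^3)$, which is false since $\Omega_1(t) = \int_0^t A(s)\,\mathrm{d}s$ is generically of order $\omega t$ with no $\beta$ dependence. Neither you nor the lemma statement flag this; in the downstream use only the tail $\sum_{k\ge 1}\Omega_{2k+1}$ is bounded, so this is an indexing slip, but any honest general proof should start the odd claim from $k=1$.
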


We aim to investigate the relationship between the Magnus generators in \cref{eqn:Magnus_generator} and the characteristics of the analog pulse function. For higher-order Magnus terms, the pulse dependence becomes highly intricate due to the multi-fold integrals involved. Therefore, it is more tractable to first analyze this relationship through the first-order Magnus generators $a_1$ and $b_1$ in \cref{lem:time_evolution_generator_repr}. In the following analysis, we begin by quantifying this connection using the first-order generators, and then establish bounds that characterize the corresponding relations derived from the full Magnus generators $a$ and $b$ in \cref{eqn:Magnus_generator}.

In the following analysis, we denote the step size parameter as $h := T / L$ and emphasize that it is small.

\begin{lem}\label{lem:first_order_Magnus_angle_estimation}
    Let 
    \begin{equation}
        \wt{\psi} := \arctan(b_1(h) / b_2(h)) = \arctan\left(\int_0^h \sin(\phi(s)) \ud s \middle/ \int_0^h \cos(\phi(s)) \ud s \right) \text{ and } \apq{\phi} := \frac{1}{h} \int_{0}^{h} \phi(s) \ud s.
    \end{equation}
    Then, the following estimation holds:
    \begin{equation}
        \wt{\psi} - \apq{\phi} = \Or(\beta^4 h^4).
    \end{equation}
\end{lem}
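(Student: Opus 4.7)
The cleanest way to isolate the phase of interest is to pass to complex exponentials. I would set $I := \int_0^h e^{\I \phi(s)}\,\mathrm{d}s$, so that $I = \int_0^h \cos\phi(s)\,\mathrm{d}s + \I\int_0^h \sin\phi(s)\,\mathrm{d}s$, and hence $\wt{\psi} = \arg(I)$ (on the branch where $\int_0^h\cos\phi > 0$, which holds for small enough $h$ by continuity). Factoring out the mean, write $\delta(s) := \phi(s) - \apq{\phi}$ and $J := \int_0^h e^{\I \delta(s)}\,\mathrm{d}s$, so that $I = e^{\I \apq{\phi}} J$ and therefore
\begin{equation}
\wt{\psi} - \apq{\phi} \;=\; \arg(J).
\end{equation}
The goal thus reduces to showing $\arg(J) = \Or(\beta^4 h^4)$.

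\textbf{Expansion of $J$.} Because $\phi \in \mathscr{P}_\beta$, $|\phi'| \lesssim \beta$, so $|\delta(s)| \lesssim \beta h$. Taylor-expand the exponential to obtain
\begin{equation}
J = h + \I\int_0^h\! \delta(s)\,\mathrm{d}s \;-\; \tfrac{1}{2}\int_0^h\! \delta(s)^2\,\mathrm{d}s \;-\; \tfrac{\I}{6}\int_0^h\! \delta(s)^3\,\mathrm{d}s \;+\; R,
\end{equation}
where the tail $R$ is controlled by $h\cdot\sup_s|\delta(s)|^4 \lesssim \beta^4 h^5$. The linear integral vanishes identically since $\int_0^h \delta(s)\,\mathrm{d}s = 0$ by the definition of $\apq{\phi}$. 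The quadratic term contributes only to $\Re(J)$ and is bounded by $\Or(\beta^2 h^3)$, so $\Re(J) = h\bigl(1 + \Or(\beta^2 h^2)\bigr)$. Hence $\arg(J) = \Im(J)/\Re(J) + \Or(\Im(J)^3)$, and everything hinges on bounding $\Im(J) = -\tfrac{1}{6}\int_0^h \delta(s)^3\,\mathrm{d}s + \Or(\beta^4 h^5)$.

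\textbf{Key step: cancellation in $\int_0^h \delta^3$.} A naive bound gives only $\Or(\beta^3 h^4)$, which would yield $\arg(J) = \Or(\beta^3 h^3)$, one order too weak. The improvement comes from a parity-based cancellation: expand $\phi$ around the midpoint. With $u := s - h/2 \in [-h/2, h/2]$,
\begin{equation}
\phi(s) = \phi(h/2) + u\,\phi'(h/2) + \tfrac{u^2}{2}\phi''(h/2) + \Or(\beta^3 h^3),\qquad \apq{\phi} = \phi(h/2) + \tfrac{h^2}{24}\phi''(h/2) + \Or(\beta^4 h^4),
\end{equation}
so $\delta(s) = u\,\phi'(h/2) + \tfrac{1}{2}\bigl(u^2 - h^2/12\bigr)\phi''(h/2) + \Or(\beta^3 h^3)$. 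Cubing and using $|\phi^{(k)}|\lesssim \beta^k$, the leading term of $\delta^3$ is $u^3\phi'(h/2)^3$, which integrates to zero over the symmetric interval $[-h/2,h/2]$; every remaining term is either odd in $u$ or uniformly $\Or(\beta^4 h^4)$, giving $\int_0^h \delta(s)^3\,\mathrm{d}s = \Or(\beta^4 h^5)$. Substituting back, $\Im(J) = \Or(\beta^4 h^5)$, and dividing by $\Re(J) \asymp h$ yields $\arg(J) = \Or(\beta^4 h^4)$, which is exactly the claim.

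\textbf{Main obstacle.} The only nontrivial point is the cubic cancellation in the previous paragraph: the naive size of $\int \delta^3$ is $\Or(\beta^3 h^4)$, and one has to use both the mean-zero property of $\delta$ and the parity of $u \mapsto u^3$ after midpoint expansion to harvest the extra factor of $\beta h$. Once this is in hand, the rest is bookkeeping of Taylor remainders and a routine application of $\arctan(x) = x + \Or(x^3)$ to convert $\Im(J)/\Re(J)$ into $\arg(J)$.
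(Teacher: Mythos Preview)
Your argument is correct. The underlying mechanism is identical to the paper's: both proofs exploit that $\int_0^h(\phi(s)-\apq{\phi})\,\mathrm{d}s=0$, that the quadratic term affects only the modulus, and that the cubic contribution $\int_0^h(\phi(s)-\apq{\phi})^3\,\mathrm{d}s$ loses its leading $\Or(\beta^3 h^4)$ piece by the odd parity of $(s-h/2)^3$. The packaging, however, differs. The paper works with $\int_0^h\cos\phi$ and $\int_0^h\sin\phi$ separately, expanding each about $\apq{\phi}$ and showing they share a common factor $h(1-q(h))$ with $q(h)=\tfrac{1}{2h}\int_0^h(\phi-\apq{\phi})^2$; this factor then cancels in the ratio before $\arctan$ is applied. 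Your complex-exponential factorization $I=e^{\I\apq{\phi}}J$ bypasses that bookkeeping entirely, since the common modulus of $J$ is invisible to $\arg(J)=\wt\psi-\apq\phi$ and only $\Im(J)$ must be controlled. This is a genuinely cleaner route: it makes transparent why the quadratic term is harmless (it is real) and isolates the cubic as the sole obstruction, whereas the paper has to track matching $h^5$ correction terms in both the sine and cosine integrals before taking the quotient. Two cosmetic remarks: the error in your $\arctan$ expansion should read $\Or((\Im J/\Re J)^3)$ rather than $\Or(\Im(J)^3)$, and your branch assumption on $\arg$ mirrors the paper's own caveat that one should really use $\mathrm{atan2}$ when $\cos\apq\phi$ can vanish.
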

\begin{proof}
    We first note that 
    \begin{equation}
        \abs{\phi(s) - \apq{\phi}} \le \frac{1}{h} \int_0^h \abs{\phi(s) - \phi(u)} \ud u \le \Or(\beta h).
    \end{equation}
    Then, we can study the integrals by expanding the functions. We first have the following:
    \begin{equation}
    \begin{split}
        \phi(s) &= \phi_0 + \phi_0^\prime s + \frac{1}{2} \phi_0^{\prime\prime} s^2 + \frac{1}{6} \phi_0^{\prime\prime\prime} s^3 + \Or(\beta^4 h^4),\\
        \apq{\phi} &= \phi_0 + \frac{1}{2} \phi_0^\prime h + \frac{1}{6} \phi_0^{\prime\prime} h^2 + \frac{1}{24} \phi_0^{\prime\prime\prime} h^3 + \Or(\beta^4 h^4).
    \end{split}
    \end{equation}
    Then
    \begin{equation}
        \begin{split} 
            \cos(\phi(s)) =& \cos(\apq{\phi}) - \sin(\apq{\phi}) (\phi(s) - \apq{\phi}) - \frac{1}{2} \cos(\apq{\phi}) (\phi(s) - \apq{\phi})^2 + \frac{1}{6} \sin(\apq{\phi}) (\phi(s) - \apq{\phi})^3\\
            &+ \frac{1}{24} \cos(\apq{\phi}) (\phi(s) - \apq{\phi})^4 + \Or(\beta^5 h^5)\\
            =& \cos(\apq{\phi}) - \sin(\apq{\phi}) (\phi(s) - \apq{\phi}) - \frac{1}{2} \cos(\apq{\phi}) (\phi(s) - \apq{\phi})^2\\
            &+ \frac{1}{6} \sin(\apq{\phi}) \left( \phi_0^\prime (s - h/2) + \frac{1}{2} \phi_0^{\prime\prime} (s^2 - h^2 / 3) \right)^3 + \frac{1}{24} \cos(\apq{\phi}) \left( \phi_0^\prime (s - h/2)\right)^4 + \Or(\beta^5 h^5)\\
            =& \cos(\apq{\phi}) - \sin(\apq{\phi}) (\phi(s) - \apq{\phi}) - \frac{1}{2} \cos(\apq{\phi}) (\phi(s) - \apq{\phi})^2\\
            &+ \frac{1}{6} \sin(\apq{\phi}) \left( ( \phi_0^\prime)^3 (s - h/2)^3 + \frac{3}{2}  (\phi_0^\prime)^2 \phi_0^{\prime\prime} (s - h/2)^2 (s^2 - h^2 / 3) \right)\\
            &+ \frac{1}{24} \cos(\apq{\phi}) \left( \phi_0^\prime (s - h/2)\right)^4 + \Or(\beta^5 h^5)\\
        \end{split}
    \end{equation}
    Integrating this function over the interval $0 \le s \le h$, we have
    \begin{equation}\label{eqn:Taylor-int-cos}
        \int_0^h \cos(\phi(s)) \ud s = h \cos(\apq{\phi}) (1 - q(h)) + \frac{1}{720} \sin(\apq{\phi}) (\phi_0^\prime)^2 \phi_0^{\prime\prime} h^5 + \frac{1}{1920} \cos(\apq{\phi}) (\phi_0^\prime)^4 h^5 + \Or(\beta^5 h^6).
    \end{equation}
    Here
    \begin{equation}
        q(h) := \frac{1}{2 h} \int_0^h (\phi(s) - \apq{\phi})^2 \ud s.
    \end{equation}
    Similarly, we have
    \begin{equation}\label{eqn:Taylor-int-sin}
        \int_0^h \sin(\phi(s)) \ud s = h \sin(\apq{\phi}) (1 - q(h)) - \frac{1}{720} \cos(\apq{\phi}) (\phi_0^\prime)^2 \phi_0^{\prime\prime} h^5 + \frac{1}{1920} \sin(\apq{\phi}) (\phi_0^\prime)^4 h^5 + \Or(\beta^5 h^6).
    \end{equation}
    Hence
    \begin{equation}
    \begin{split}
        \wt{\psi} &= \arctan\left(\frac{\int_0^h \sin(\phi(s)) \ud s}{\int_0^h \cos(\phi(s)) \ud s}\right) = \arctan\left( \frac{h \sin(\apq{\phi}) (1 - q(h)) + \Or(\beta^4 h^5)}{h \cos(\apq{\phi}) (1 - q(h)) + \Or(\beta^4 h^5)} \right)\\
        &= \arctan(\tan(\apq{\phi}) + \Or(\beta^4 h^4)) = \apq{\phi} + \Or(\beta^4 h^4).
    \end{split}
    \end{equation}
    This is a remarkable result. \cref{eqn:Taylor-int-cos,eqn:Taylor-int-sin} indicate that the integrals of sine and cosine are approximated up to an error $\Or(h q(h)) = \Or(h^3)$. However, because the third-order errors in the numerator and denominator cancel in the quotient, the error to $\psi - \apq{\phi}$ is improved to the fourth-order. 
\end{proof}
Now, we can lift this relation to that of the full Magnus generator series.
\begin{thm}\label{thm:estimation_error_psi_barphi}
    Let $\phi \in \mathscr{P}_\beta$ be a Bernstein-type pulse, $\Omega(h) := a(h) X + b(h) Y + c(h) Z$,  
    \begin{equation}
        \psi := \arctan(b(h) / a(h)) \text{ and } \apq{\phi} := \frac{1}{h} \int_{0}^{h} \phi(s) \ud s.
    \end{equation}
    The following error bound holds:
    \begin{equation}
        \abs{\psi - \apq{\phi}} \le \Or((\beta^2 + \omega^2) \beta^2 h^4).
    \end{equation}
\end{thm}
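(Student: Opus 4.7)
The plan is to reduce the theorem to \cref{lem:first_order_Magnus_angle_estimation} by controlling how much the full Magnus coefficients $(a(h),b(h))$ deviate from their first-order counterparts $(a_1(h),b_1(h))$. I would write
\[
a(h) = a_1(h) + a_{\mathrm{rem}}(h), \qquad b(h) = b_1(h) + b_{\mathrm{rem}}(h),
\]
and invoke the parity structure of \cref{lem:time_evolution_generator_repr}: only the odd-order Magnus terms $\Omega_3,\Omega_5,\dots$ contribute to the $X$ and $Y$ components, so the leading remainder comes from $\Omega_3(h)$ and obeys $|a_{\mathrm{rem}}(h)|,\,|b_{\mathrm{rem}}(h)| \le \Or(\beta^2\omega^3 h^5)$. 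This cleanly decouples the Magnus-truncation error from the phase-integration error already handled by the lemma.

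Next I would carry out a sensitivity analysis of $\psi = \arctan(b/a)$, interpreted as the two-argument $\mathrm{atan2}(b,a)$ so that all quadrants are handled uniformly. Its gradient in $(a,b)$ has magnitude $(a^2+b^2)^{-1/2}$, so along the straight segment joining $(a_1,b_1)$ to $(a,b)$ the mean-value theorem gives
\[
|\psi - \wt\psi| \le \Bigl(\min_{\text{segment}}\sqrt{a^2+b^2}\Bigr)^{-1}\sqrt{a_{\mathrm{rem}}^2 + b_{\mathrm{rem}}^2}.
\]
From the expansions \cref{eqn:Taylor-int-cos,eqn:Taylor-int-sin} inside the proof of \cref{lem:first_order_Magnus_angle_estimation}, one obtains $a_1^2 + b_1^2 = \omega^2 h^2 (1-q(h))^2 + \Or(\beta^4\omega^2 h^6)$ with $q(h)=\Or(\beta^2 h^2)$. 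For $h$ sufficiently small this is at least of order $\omega^2 h^2$, and the $\Or(\beta^2\omega^3 h^5)$ remainder cannot push the endpoint near the origin. Substituting yields $|\psi - \wt\psi| \le \Or(\beta^2\omega^2 h^4)$.

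Combining this perturbation bound with the $\Or(\beta^4 h^4)$ estimate of \cref{lem:first_order_Magnus_angle_estimation} via the triangle inequality gives
\[
|\psi - \apq{\phi}| \le \Or(\beta^2\omega^2 h^4) + \Or(\beta^4 h^4) = \Or\bigl((\omega^2+\beta^2)\beta^2 h^4\bigr),
\]
which matches the claimed scaling. The main obstacle is that exactly one factor of $h$ is lost when dividing the $\Or(\beta^2\omega^3 h^5)$ Magnus remainder by $\sqrt{a_1^2+b_1^2}\sim \omega h$, so that the two error sources land precisely at the same $h^4$ order; a weaker control of the denominator, or a remainder scaling only like $h^4$, would destroy the sharp bound. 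Verifying that $(a_1,b_1)$ stays bounded away from the origin along the whole perturbation segment---guaranteed in the regime $h\ll 1/\beta$ of interest---is the one quantitative place where smallness of $h$ enters decisively.
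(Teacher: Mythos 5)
Your proposal is correct and follows essentially the same route as the paper: decompose the full Magnus generators into the first-order terms plus an odd-order remainder of size $\Or(\beta^2\omega^3 h^5)$, observe that $\sqrt{a_1^2+b_1^2}\sim\omega h$ so the angle perturbation is $\Or(\beta^2\omega^2 h^4)$, and conclude by the triangle inequality with \cref{lem:first_order_Magnus_angle_estimation}. The one cosmetic difference is that you bound the angle sensitivity through the gradient of $\mathrm{atan2}$ (which depends only on $a^2+b^2$), whereas the paper first bounds $|b/a - b_1/a_1|$ under a temporary assumption that $\cos(\apq{\phi})$ is bounded away from zero and then remarks that switching to $\mathrm{atan2}$ removes the assumption — your phrasing simply folds that remark into the main estimate.
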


\begin{proof}
    For the simplicity of presenting the proof, we assume $\cos(\apq{\phi}) = \Omega(1)$ is away from zero by at least a nonvanishing constant. Otherwise, we can alternatively implement the estimator using $\psi = \mathrm{atan2}(b(h), a(h))$ to relax this condition.
    
    We first construct a reference generator by keeping the first-order term and all even-order terms:
    \begin{equation}
        \I \wt{\Omega}(h) := a_1(h) X + b_1(h) Y + c(h) Z = \omega \int_0^h \cos(\phi(s)) \ud s X + \omega \int_0^h \sin(\phi(s)) \ud s Y + c(h) Z.
    \end{equation}
    Note that we have
    \begin{equation}
        \norm{\wt{\Omega}(h) - \Omega(h)} = \norm{\sum_{k=1}^\infty \Omega_{2k+1}(h)} \le \Or(\beta^2 \omega^3 h^5).
    \end{equation}
    Then, using H\"{o}lder's inequality, we have
    \begin{equation}
        \abs{a(h) - a_1(h)} = \frac{1}{2} \abs{\Tr\left(X(\Omega(h) - \wt{\Omega}(h)\right)} \le \frac{1}{2} \norm{X}_* \norm{\Omega(h) - \wt{\Omega}(h)} \le \Or(\beta^2 \omega^3 h^5).
    \end{equation}
    Here, $\norm{X}_* = 2$ is the nuclear norm. Similarly, we have
    \begin{equation}
        \abs{b(h) - b_1(h)}  \le \Or(\beta^2 \omega^3 h^5).
    \end{equation}
    Following \cref{lem:first_order_Magnus_angle_estimation}, we have
    \begin{equation}
        \begin{split}
            \abs{\psi - \apq{\phi}} &\le \abs{\psi - \wt{\psi}} + \abs{\wt{\psi} - \apq{\phi}} \le \abs{\arctan(b(h) / a(h)) - \arctan(b_1(h) / a_1(h))} + \Or(\beta^4 h^4)\\
            &\le \abs{\frac{b(h)}{a(h)} - \frac{b_1(h)}{a_1(h)}} \sup_\xi \frac{1}{1 + \xi^2} + \Or(\beta^4 h^4) \le \abs{\frac{b(h)}{a(h)} - \frac{b_1(h)}{a_1(h)}} + \Or(\beta^4 h^4).
        \end{split}
    \end{equation}
    Note that the assumption $\cos(\apq{\phi}) = \Omega(1)$ implies that $\abs{h^{-1} \int_0^h \cos(\phi(s)) \ud s} \ge \abs{\cos(\apq{\phi})} - \Or(h^3) = \Omega(1)$. Then
    \begin{equation}
        \abs{a_1(h)} = \abs{\omega h \cdot \frac{1}{h} \int_0^h \cos(\phi(s)) \ud s} \ge \Omega(\omega h),\  \abs{a(h)} \ge \abs{a_1(h)} - \abs{a_1(h) - a(h)} \ge \Omega(\omega h).
    \end{equation}
    Consequently, using the bound $b_1(h) \le \omega h$, we also have
    \begin{equation}
        \abs{\tan(\wt{\psi})} = \abs{\frac{b_1(h)}{a_1(h)}} \le \Or(1).
    \end{equation}
    Then, the following inequality holds
    \begin{equation}
        \abs{\frac{b(h)}{a(h)} - \frac{b_1(h)}{a_1(h)}} \le \frac{\abs{b_1(h) - b(h)}}{\abs{a(h)}} + \abs{\frac{b_1(h)}{a_1(h)}} \frac{\abs{a_1(h) - a(h)}}{\abs{a(h)}} \le \Or(\beta^2 \omega^2 h^4).
    \end{equation}
    To conclude, we have
    \begin{equation}
        \abs{\psi - \apq{\phi}} \le \Or((\beta^2 + \omega^2) \beta^2 h^4).
    \end{equation}
    The proof is complete.
\end{proof}

\subsection{Justifying analytical results with fully solvable models using linear pulse functions}\label{app:subsec:linear_pulse}
In this subsection, we justify the theoretical results derived in the previous subsection by analyzing a fully solvable model that uses linear pulse functions. We first derive the exact expressions for the time-evolution matrices.
\begin{thm}
    Consider a time-dependent Hamiltonian generated by a linear pulse $H(t) = \omega \left(\cos(\phi(t)) X + \sin(\phi(t)) Y\right)$, where $\phi(t) = \alpha t$. The associated time evolution unitary admits a closed-form expression
\begin{equation}
U(t_2, t_1) = R(t_2) V(t_2 - t_1) R(-t_1)
\end{equation}
where $R(t) = e^{- \I \alpha t / 2 Z}$ and $V(t_2 - t_1) = e^{- \I (t_2 - t_1) (\omega X - \alpha / 2 Z)}$.
\end{thm}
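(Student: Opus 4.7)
The plan is to verify the claimed factorization directly by checking both the initial condition and the Schr\"odinger equation it must satisfy, and then to invoke uniqueness of the solution to the time-dependent Hamiltonian dynamics.

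First I would observe that the single-qubit rotation $R(t) = e^{-\I \alpha t / 2\, Z}$ implements the standard $Z$-frame transformation of Pauli operators,
\begin{equation}
R(t) X R(-t) = \cos(\alpha t)\, X + \sin(\alpha t)\, Y,\qquad R(t) Z R(-t) = Z,
\end{equation}
so that with $\phi(t) = \alpha t$ the Hamiltonian can be rewritten compactly as $H(t) = \omega\, R(t) X R(-t)$. This is the key algebraic identity. It reveals that the time dependence enters only through a unitary conjugation by $R(t)$, which is the hallmark of a system solvable by moving to a rotating frame.

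Next, let $W(t_2, t_1) := R(t_2)\, V(t_2 - t_1)\, R(-t_1)$ denote the candidate expression, with $V(s) = e^{-\I s(\omega X - (\alpha/2) Z)}$. I would verify the initial condition $W(t_1, t_1) = R(t_1) I R(-t_1) = I$, and then differentiate with respect to $t_2$. Using $\partial_{t_2} R(t_2) = -\I(\alpha/2) Z\, R(t_2)$ and $\partial_{t_2} V(t_2 - t_1) = -\I (\omega X - (\alpha/2) Z) V(t_2 - t_1)$, together with the fact that $R(t_2)$ commutes with $Z$ and with the identity from the previous paragraph applied to $X$, a short computation yields
\begin{equation}
\partial_{t_2} W(t_2, t_1) = -\I\Bigl[(\alpha/2) Z + \omega R(t_2) X R(-t_2) - (\alpha/2) R(t_2) Z R(-t_2)\Bigr] W(t_2, t_1) = -\I H(t_2)\, W(t_2, t_1).
\end{equation}
Since $W(\cdot, t_1)$ and $U(\cdot, t_1)$ solve the same linear ODE with the same initial datum at $t_2 = t_1$, uniqueness (Picard--Lindel\"of for the bounded, smooth generator $H$) forces $W = U$, establishing the claim.

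The argument is essentially a direct verification, so there is no serious obstacle. The only place requiring care is getting the signs right in the two Pauli conjugation identities and in the generator of $V$, since an incorrect sign in either would leave a residual $Z$ term and prevent the cancellation that reduces $\partial_{t_2} W$ to $-\I H(t_2) W$. An alternative derivation, which I would mention as a consistency check, is to define $\wt U(t_2, t_1) := R(-t_2) U(t_2, t_1) R(t_1)$ and show that it satisfies the \emph{time-independent} equation with generator $-\I(\omega X - (\alpha/2) Z)$; this immediately gives $\wt U(t_2, t_1) = V(t_2 - t_1)$ and thus the factorization, while also making transparent why the eigenstructure of $\omega X - (\alpha/2) Z$ governs the reduced two-level dynamics.
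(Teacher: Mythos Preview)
Your proof is correct and uses the same rotating-frame idea as the paper; the only difference is that you verify the given factorization satisfies the Schr\"odinger equation and invoke uniqueness, whereas the paper derives it forward by substituting the ansatz $U(t)=R(t)V(t)$ and solving for $V$. The ``alternative derivation'' you mention as a consistency check is in fact precisely the paper's primary argument.
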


\begin{proof}
The quantum system evolves under a time-dependent Schr\"odinger equation of the form
\begin{equation}
\frac{\ud}{\ud t} U(t) = -\I \omega 
\begin{pmatrix}
0 & e^{- \I \alpha t} \\
e^{\I \alpha t} & 0
\end{pmatrix} U(t).
\end{equation}
Our objective is to derive an explicit expression for the evolution operator $U(t_2, t_1)$
which propagates the state from time \( t_1 \) to \( t_2 \). We rotate the coordinates to eliminate the explicit time dependence in the differential equation. Let
\begin{equation}
R(t) = \begin{pmatrix}
e^{- \I \alpha t/2} & 0 \\
0 & e^{\I \alpha t/2}
\end{pmatrix} \text{ and } U(t) = R(t) V(t).
\end{equation}

Then, we have
\begin{equation}
\frac{\ud}{\ud t} U(t) = \dot{R}(t)\, V(t) + R(t)\, \dot{V}(t)
\end{equation}
and
\begin{equation}
R^{-1}(t) \dot{R}(t)\, V(t) + \dot{V}(t) = -\I \omega R^{-1}(t) H(t) R(t)\, V(t).
\end{equation}

Direct computation gives
\begin{equation}
R^{-1}(t) H(t) R(t) = X,
\text{ and }
R^{-1}(t) \dot{R}(t) = - \I \frac{\alpha}{2} Z.
\end{equation}

Therefore, the effective equation in the rotating frame becomes
\begin{equation}
\dot{V}(t) = -\I \left( \omega X - \frac{\alpha}{2} Z \right) V(t) =: -\I \wt{H} V(t).
\end{equation}
The effective Hamiltonian $\wt{H}$ is time-independent.
Then, the time evolution in the rotating frame is given by
\begin{equation}
V(t_2, t_1) = \exp\left( -\I (t_2 - t_1) \wt{H} \right)
= \cos(\Gamma \Delta t)\, I - \I \frac{\sin(\Gamma \Delta t)}{\Gamma} \wt{H},
\end{equation}
where $\Delta t = t_2 - t_1$ and $\Gamma = \sqrt{\omega^2 + \left( \frac{\alpha}{2} \right)^2}$.

Therefore, we have
\begin{equation}
U(t_2, t_1) = R(t_2)\, V(t_2, t_1)\, R^{-1}(t_1).
\end{equation}
Note that $R^{-1}(t) = R(-t)$. The proof is complete.

\end{proof}

For a linear pulse, the systematic error of representing the averaged pulse is vanishing.
\begin{cor}
    When considering a linear pulse, for any time segment $t_1 < t_2$ it holds that 
    \begin{align}
    \arctan\left(\frac{b(t_2, t_1)}{ a(t_2, t_1)} \right) = \frac{\int_{t_1}^{t_2} \phi(s) \ud s}{(t_2 - t_1)}.
    \end{align}
\end{cor}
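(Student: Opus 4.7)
The plan is to lift the exact closed-form propagator $U(t_2, t_1) = R(t_2) V(t_2-t_1) R(-t_1)$ from the preceding theorem to the generator level by exploiting a midpoint symmetry. Let $s := (t_1 + t_2)/2$ denote the midpoint and $\Delta t := t_2 - t_1$. Since all rotations $R(\cdot) = e^{-\I(\cdot)\alpha/2\, Z}$ commute with one another, I split $R(t_2) = R(s) R(\Delta t/2)$ and $R(-t_1) = R(\Delta t/2) R(-s)$, which rewrites
\begin{equation}
    U(t_2, t_1) = R(s)\, M(\Delta t)\, R(s)^{-1}, \qquad M(\Delta t) := R(\Delta t/2)\, V(\Delta t)\, R(\Delta t/2).
\end{equation}
The outer conjugation by $R(s)$ acts on the generator as a rotation by angle $\alpha s$ in the $XY$-plane that fixes $Z$. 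Hence, if I can establish that $M(\Delta t)$ has no $Y$-component in its Pauli generator, i.e., $M(\Delta t) = e^{-\I(a' X + c' Z)}$ for some real $a', c'$, then conjugation sends $a' X \mapsto a'\bigl(\cos(\alpha s) X + \sin(\alpha s) Y\bigr)$ and leaves $c' Z$ fixed. This immediately yields $a(t_2, t_1) = a' \cos(\alpha s)$ and $b(t_2, t_1) = a' \sin(\alpha s)$, so $\arctan(b/a) = \alpha s = (t_2 - t_1)^{-1}\int_{t_1}^{t_2} \alpha u\, \mathrm{d}u$, as required.

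The remaining step is to verify the vanishing of the $Y$-component in $M(\Delta t)$'s generator, for which I would use a transpose-symmetry argument. Both factors $R(\Delta t/2)$ and $V(\Delta t)$ are symmetric as $2\times 2$ matrices, because $X^\top = X$ and $Z^\top = Z$ while $Y^\top = -Y$, so the exponentials $e^{-\I(\cdot)Z}$ and $e^{-\I\Delta t(\omega X - \alpha/2\, Z)}$ are each symmetric as polynomials in symmetric matrices. The sandwich structure reverses order under transposition, so $M(\Delta t)^\top = R(\Delta t/2)^\top V(\Delta t)^\top R(\Delta t/2)^\top = M(\Delta t)$. Expanding any $2\times 2$ unitary in the Pauli basis as $M = \mu_0 I + \mu_X X + \mu_Y Y + \mu_Z Z$, the condition $M^\top = M$ forces $\mu_Y = 0$. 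Comparing with the closed-form expansion $M = \cos\theta\, I - \I \frac{\sin\theta}{\theta}\,(a' X + b' Y + c' Z)$ where $\theta = \sqrt{a'^2 + b'^2 + c'^2}$, this forces $b' = 0$ (away from the trivial locus $M = \pm I$, which is handled by continuity).

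The main obstacle, to the extent that there is one, is identifying the right symmetry to make the $Y$-component vanish cleanly. Without it, one could still grind through a direct Pauli expansion of $M(\Delta t)$ using identities such as $R X R = X$ and $R Z R = \cos(\alpha\Delta t/2)\, Z - \I\sin(\alpha\Delta t/2)\, I$, but the transpose observation makes the cancellation transparent. It also explains structurally why the linear pulse saturates the corollary with \emph{zero} bias, providing a tight companion to the general $\Or(\beta^2(\beta^2 + \omega^2)\Delta t^4)$ upper bound of \cref{thm:estimation_error_psi_barphi}: the entire nonlinear time-ordering error collapses because the rotating-frame reduction of the linear drive is exact, and the residual symmetric sandwich has no $Y$-generator by the parity argument above.
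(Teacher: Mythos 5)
Your proof is correct, and it takes a genuinely different route from the paper. Both proofs begin from the exact decomposition $U(t_2,t_1)=R(t_2)V(t_2-t_1)R(-t_1)$ and both exploit the midpoint structure, but the mechanics diverge from there. The paper works at the matrix-element level: it notes the fact $b/a = \Tr(UY)/\Tr(UX)$ for $U=e^{-\I(aX+bY+cZ)}$, then uses trace cyclicity and the $Z$-commutation relations to collapse to $\Tr\bigl(R(t_1+t_2)VY\bigr)/\Tr\bigl(R(t_1+t_2)VX\bigr)$, which it evaluates directly from the explicit matrix form $V=\bigl(\begin{smallmatrix} u & \I v \\ \I v & u^*\end{smallmatrix}\bigr)$. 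You instead lift everything to the generator level: you balance the conjugation as $U=R(s)\,M(\Delta t)\,R(s)^{-1}$ with $M=R(\Delta t/2)\,V(\Delta t)\,R(\Delta t/2)$, observe that $M^\top=M$ because both factors have purely $\{X,Z\}$-generators, deduce $\mu_Y=0$ hence $b'=0$ in the Pauli generator, and let the $Z$-rotation adjoint action $X\mapsto\cos(\alpha s)X+\sin(\alpha s)Y$ finish the job. The two proofs are of comparable length, but yours isolates \emph{why} the linear pulse is exact: the rotating-frame reduction is exact, and the residual sandwich $M$ is transpose-symmetric, forcing its generator into $\mathrm{span}\{X,Z\}$ without any computation of matrix entries. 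The paper's special structure of $V$ (equal, purely imaginary off-diagonals) is precisely the $Y$-freeness you prove, so the two arguments meet in the middle; your version makes the structural origin transparent and links more directly back to the parity decomposition of Magnus terms in \cref{lem:time_evolution_generator_repr}, at the small cost of a measure-zero degeneracy ($M=\pm I$) that you correctly dispatch by continuity and that is equally present, though unremarked, in the paper's division by $v$.
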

\begin{proof}
    Note that the time evolution operator can be expressed as $R(t_2) V(t_2 - t_1)R^{-1}(t_1)$. Here, the expression of $V$ enables a special structure which can be written as
    \begin{equation}
        V(t_2 - t_1) = \begin{pmatrix}
            u & \I v\\
            \I v & u^*
        \end{pmatrix} \text{ where } u := \cos(\Gamma (t_2 - t_1)) + \I \sin(\Gamma(t_2-t_1)) \alpha / (2 \Gamma), v := - \sin(\Gamma(t_2-t_1)) \omega / \Gamma.
    \end{equation}
    We have 
    \begin{equation}
    \begin{split}
        \frac{b(t_2, t_1)}{a(t_2, t_1)} &= \frac{\Tr(U(t_2, t_1) Y)}{\Tr(U(t_2, t_1) X)} = \frac{\Tr(R(t_2) V(t_2 - t_1) R^{-1}(t_1) Y)}{\Tr(R(t_2) V(t_2 - t_1) R^{-1}(t_1) X)} = \frac{\Tr(R(t_2) V(t_2 - t_1) Y R(t_1))}{\Tr(R(t_2) V(t_2 - t_1) X R(t_1))}\\
        &= \frac{\Tr(R(t_1 + t_2) V(t_2 - t_1) Y)}{\Tr(R(t_1 + t_2) V(t_2 - t_1) X)} = \frac{v \sin(\alpha (t_1 + t_2) / 2)}{v \cos(\alpha (t_1 + t_2) / 2)} = \tan(\alpha (t_1 + t_2) / 2).
    \end{split}
    \end{equation}
    The conclusion follows.
\end{proof}
Furthermore, we can justify the difference between the exact and approximate Magnus generators.
\begin{cor}
    The difference between the exact generator and the approximate generator is bounded as follows:
    \begin{equation}
        \norm{\Omega(t_2, t_1) - \wt{\Omega}(t_2, t_1)} \lesssim \frac{1}{60} \alpha^2 \omega^3 h^5, \text{ with } \abs{c(t_2, t_1)} \lesssim \frac{1}{6} \alpha \omega^2 h^3. 
    \end{equation}
\end{cor}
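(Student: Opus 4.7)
The plan is to obtain both bounds by directly computing the low-order Magnus terms for the linear pulse $\phi(s) = \alpha s$, since the specific constants $1/60$ and $1/6$ cannot be read off from the coarse order-of-magnitude scalings in \cref{lem:time_evolution_generator_repr}. By time-translation invariance (a shift $t \mapsto t + t_1$ conjugates the Hamiltonian by a $Z$-rotation, preserving both $\norm{\cdot}$ and the $Z$-component), it suffices to analyze the interval $[0, h]$.

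For the $Z$-component bound, I would evaluate $\Omega_2(h)$ exactly. Using $[X, Y] = 2\I Z$ one obtains $[A(t_1), A(t_2)] = -2\I\omega^2 \sin(\alpha(t_2 - t_1)) Z$, and the double integral in \cref{eqn:Magnus_first_three_terms} reduces to $\Omega_2(h) = (\I\omega^2/\alpha)(h - \sin(\alpha h)/\alpha) Z$. Expanding $\sin(\alpha h)$ yields $h - \sin(\alpha h)/\alpha = \alpha^2 h^3/6 + O(\alpha^4 h^5)$, so $\I\Omega_2 = -(\alpha\omega^2 h^3/6) Z \cdot (1 + O(\alpha^2 h^2))$. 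Every even-order Magnus term lies on $Z$ by \cref{lem:time_evolution_generator_repr}, but $\Omega_{2k}(h)$ for $k \ge 2$ scales as $\alpha^{2k-1}\omega^{2k}h^{2k+1}$, which is strictly subleading; summing these yields $|c| \lesssim \alpha\omega^2 h^3/6$.

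For the bound on $\norm{\Omega - \wt\Omega}$, note that $\wt\Omega$ retains the exact $Z$-coefficient but replaces $(a, b)$ by their first-order Magnus values $(a_1, b_1)$, so $\Omega - \wt\Omega = \Omega_3 + \Omega_5 + \cdots$ and only the $X, Y$ parts of odd-order terms contribute. To pin down the leading $\Omega_3$, combine the previous $Z$-valued inner commutator with the outer commutator $[A(t_1), Z] = 2\omega g(t_1)$ where $g(t) := \sin(\alpha t) X - \cos(\alpha t) Y$; then \cref{eqn:recursive_definition_Omega_n} becomes $\Omega_3(h) = -(2\I\omega^3/3)(I_1 + I_2)$ for two explicit triple integrals of $\sin$ against $g$. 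Taylor-expanding the trigonometric factors, the leading $O(h^4)$ $Y$-contributions of $I_1$ and $I_2$ cancel (a structural check matching $\Omega_3 = O(h^5)$), while the $O(h^5)$ $X$-contributions combine as $(-1/30 + 1/120)\alpha^2 h^5 = -\alpha^2 h^5/40$. Multiplying by the prefactor gives $\Omega_3(h) = \I(\alpha^2\omega^3 h^5/60) X$ in the symmetric frame, and $\I(\alpha^2\omega^3 h^5/60)(\cos\bar\phi\, X + \sin\bar\phi\, Y)$ after restoring the outer $Z$-rotation by $\bar\phi = \alpha(t_1+t_2)/2$. Higher odd-order terms scale as $\alpha^4\omega^5 h^7$, giving $\norm{\Omega - \wt\Omega} \lesssim \alpha^2\omega^3 h^5/60$.

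The principal obstacle is the bookkeeping that establishes the $h^4$ cancellation in $I_1 + I_2$ and the correct combination of the $-1/30$ and $+1/120$ residues; this is the only delicate step, and once resolved the stated constants follow immediately. As an independent sanity check, one can extract $(a, b, c)$ directly from the closed-form $U = R(t_2) V(h) R(-t_1)$ via the $SU(2)$ logarithm: after the symmetrization $U = R((t_1+t_2)/2)\, \wt V(h)\, R(-(t_1+t_2)/2)$ with $\wt V(h) = R(h/2) V(h) R(h/2)$, direct expansion shows the $Y$-coefficient of $\wt V$ vanishes, and the residual scalar expansion in $h$ must agree with the Magnus computation term by term.
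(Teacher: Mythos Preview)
Your proposal is correct but takes a genuinely different route from the paper. The paper exploits the closed-form expression $U(t_2,t_1)=R(t_2)V(h)R(-t_1)$ established in the preceding theorem: it reads off $(a,b,c)$ via the $\mathrm{SU}(2)$ logarithm, obtains exact analytical formulas such as $a=\tfrac{\vartheta}{\sin\vartheta}\tfrac{\omega}{\Gamma}\sin(\Gamma h)\cos(\alpha\bar t)$ and $\wt a=\tfrac{2\omega}{\alpha}\sin(\alpha h/2)\cos(\alpha\bar t)$, and then Taylor-expands the single scalar difference $\tfrac{\vartheta}{\sin\vartheta}\tfrac{\omega}{\Gamma}\sin(\Gamma h)-\tfrac{2\omega}{\alpha}\sin(\alpha h/2)$ to extract the $-\tfrac{1}{60}\alpha^2\omega^3 h^5$ constant directly. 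No Magnus bookkeeping is needed.

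Your approach instead computes $\Omega_2$ and $\Omega_3$ explicitly from their nested-commutator integrals, which is exactly the calculation the paper carries out elsewhere (in the lemmas following \cref{lem:commutator_3_A}) for a general $\phi$; specializing those formulas to $a=\phi(0)=0$, $b=\phi'(0)=\alpha$, $c=\phi''(0)=0$ immediately gives $\int f = -\alpha^2 h^5/40$ and $\int g = 0$, hence $\Omega_3 = \tfrac{\I\alpha^2\omega^3 h^5}{60}X$, matching your $-1/30+1/120=-1/40$ combination. The paper's closed-form route is shorter here because the exact solution is already in hand and the whole computation collapses to one scalar expansion; your Magnus route is more self-contained and makes transparent that the $1/60$ is precisely the $\Omega_3$ coefficient. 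You correctly anticipated the paper's method as your ``sanity check.'' One minor slip: the higher odd-order remainder is $\Or((\alpha^2\omega^5+\alpha^4\omega^3)h^7)$, not $\alpha^4\omega^5 h^7$, but this is subleading and does not affect the argument.
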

\begin{proof}
    We first recall the property of $\mathrm{SU}(2)$: $e^{- \I \vartheta \hat{\boldsymbol{n}} \cdot \boldsymbol{\sigma}} = \cos \vartheta - \I \sin \vartheta \hat{\boldsymbol{n}} \cdot \boldsymbol{\sigma}$ where $\hat{\boldsymbol{n}} \cdot \boldsymbol{\sigma} := (\hat{n}_x X + \hat{n}_y Y + \hat{n}_z Z)$ and $\hat{\boldsymbol{n}}$ is a normalized vector. Note that we have $U := U(t_2, t_1) = e^{\Omega(t_2, t_1)} = e^{- \I (a X + b Y + c Z)}$. Hence, the procedure for determining these unknowns follows,
    \begin{equation}
    \begin{split}
        & \cos \vartheta = \frac{1}{2} \Tr(U) ,\  a = \vartheta \hat{n}_x = - \frac{\vartheta}{2 \I \sin \vartheta} \Tr(U X),\\
        & b = \vartheta \hat{n}_y = - \frac{\vartheta}{2 \I \sin \vartheta} \Tr(U Y),\  c = \vartheta \hat{n}_z = - \frac{\vartheta}{2 \I \sin \vartheta} \Tr(U Z).
    \end{split}
    \end{equation}
    Using the analytical formulae derived above, we have the following equalities and estimates:
    \begin{align}
        & \cos \vartheta = \cos(\alpha h / 2) \cos(\Gamma h) + \sin(\alpha h / 2) \sin(\Gamma h) \alpha / (2 \Gamma),\\
        & a = \frac{\vartheta}{\sin \vartheta} \frac{\omega}{\Gamma} \sin(\Gamma h) \cos(\alpha \bar t), \quad \wt{a} = \omega \int_{t_1}^{t_2} \cos(\alpha t) \ud t = \frac{2\omega}{\alpha} \sin(\alpha h / 2) \cos(\alpha \bar t), \\
        & b = \frac{\vartheta}{\sin \vartheta} \frac{\omega}{\Gamma} \sin(\Gamma h) \sin(\alpha \bar t), \quad \wt{b} = \omega \int_{t_1}^{t_2} \sin(\alpha t) \ud t = \frac{2\omega}{\alpha} \sin(\alpha h / 2) \sin(\alpha \bar t), \\
        & c = \frac{\vartheta}{\sin \vartheta} \left( \sin(\alpha h / 2) \cos(\Gamma h) - \frac{\alpha}{2\Gamma} \cos(\alpha h / 2) \sin(\Gamma h) \right) = - \frac{1}{6} \alpha \omega^2 h^3 + \Or((\alpha \omega^4 + \alpha^3 \omega^2 ) h^5)
    \end{align}
    where $\bar t := (t_1 + t_2) / 2$ and $h := t_2 - t_1$.

    Note that the following estimate holds
    \begin{equation}
        \frac{\vartheta}{\sin \vartheta} \frac{\omega}{\Gamma} \sin(\Gamma h) - \frac{2\omega}{\alpha} \sin(\alpha h / 2) = - \frac{1}{60} \alpha^2 \omega^3 h^5 + \Or((\alpha^2 \omega^5 + \alpha^4 \omega^3) h^7).
    \end{equation}
    
    Thus, the difference between the exact generator and the approximate generator $\wt{\Omega} := - \I (\wt{a} X + \wt{b} Y + c Z)$ is bounded as
    \begin{equation}
        \begin{split}
            & \norm{\Omega(t_2, t_1) - \wt{\Omega}(t_2, t_1)} = \norm{(a - \wt{a}) X + (b - \wt{b}) Y} = \sqrt{(a - \wt{a})^2 + (b - \wt{b})^2}\\
            &= \abs{\frac{\vartheta}{\sin \vartheta} \frac{\omega}{\Gamma} \sin(\Gamma h) - \frac{2\omega}{\alpha} \sin(\alpha h / 2)} = \frac{1}{60} \alpha^2 \omega^3 h^5 + \Or((\alpha^2 \omega^5 + \alpha^4 \omega^3) h^7).
        \end{split}
    \end{equation}
\end{proof}

\subsection{Deferred proofs and computations}\label{sec:app:further_analysis_Magnus}
In this subsection, we analyze Magnus terms (e.g. \cref{eqn:Magnus_first_three_terms}) in detail. 

Recall that $A(t) = - \I H(t) = - \I \omega (\cos(\phi(t)) X + \sin(\phi(t)) Y)$. We first compute the second-order term via the commutator.

\begin{lem}\label{lem:commutator_A}
The following equality holds
    \begin{align}
        [A(t_1), A(t_2)] = 2 \I \omega^2 \sin(\phi(t_1) - \phi(t_2)) Z.
    \end{align}
\end{lem}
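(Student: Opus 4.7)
The plan is to do this by direct expansion in the Pauli basis, since $A(t) = -\I\omega(\cos\phi(t)\,X + \sin\phi(t)\,Y)$ is already written as a linear combination of $X$ and $Y$ with scalar (time-dependent) coefficients. Applying bilinearity of the commutator, $[A(t_1), A(t_2)]$ reduces to four terms of the form (scalar)$\cdot [P, Q]$ with $P, Q \in \{X, Y\}$. The diagonal commutators $[X,X]$ and $[Y,Y]$ vanish, so only the cross terms proportional to $[X,Y]$ and $[Y,X]$ survive.

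Next, I would substitute the canonical Pauli commutation relation $[X, Y] = 2\I Z$ (and hence $[Y, X] = -2\I Z$). The overall prefactor is $(-\I\omega)^2 = -\omega^2$, and each surviving cross term carries an extra factor of $2\I$. Collecting the two surviving contributions gives a coefficient on $Z$ equal to $-2\I\omega^2\bigl(\cos\phi(t_1)\sin\phi(t_2) - \sin\phi(t_1)\cos\phi(t_2)\bigr)$.

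Finally, I would invoke the sine angle-difference identity $\sin(\phi(t_1) - \phi(t_2)) = \sin\phi(t_1)\cos\phi(t_2) - \cos\phi(t_1)\sin\phi(t_2)$ to rewrite the bracket as $-\sin(\phi(t_1) - \phi(t_2))$, which cancels the minus sign and yields the stated identity $[A(t_1), A(t_2)] = 2\I\omega^2 \sin(\phi(t_1) - \phi(t_2))\,Z$. There is no real obstacle here: the computation is entirely elementary $\mathfrak{su}(2)$ algebra, and the only thing to be careful about is tracking the sign of $\I$ through the two factors of $-\I$ in the definition of $A(t)$ versus the $2\I$ in the Pauli bracket.
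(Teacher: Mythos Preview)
Your proposal is correct and is essentially identical to the paper's proof: both expand $[A(t_1),A(t_2)]$ by bilinearity, drop the vanishing $[X,X]$ and $[Y,Y]$ terms, substitute $[X,Y]=2\I Z$, and apply the sine angle-difference identity. The only cosmetic difference is bookkeeping of the sign---the paper absorbs the $-\omega^2$ by swapping the order of the two surviving terms before invoking $[X,Y]$, whereas you carry the minus through to the end and cancel it against $-\sin(\phi(t_1)-\phi(t_2))$.
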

\begin{proof}
Using the commutation relation of Pauli matrices, we have
\begin{align}
\begin{split}
    [A(t_1), A(t_2)] &= - \omega^2 [\cos(\phi(t_1)) X + \sin(\phi(t_1)) Y, \cos(\phi(t_2)) X + \sin(\phi(t_2)) Y]\\
    &= \omega^2 (\sin(\phi(t_1)) \cos(\phi(t_2)) - \cos(\phi(t_1)) \sin(\phi(t_2))) [X, Y]\\
    &= 2 \I \omega^2 \sin(\phi(t_1) - \phi(t_2)) Z.
\end{split}
\end{align}
\end{proof}

\begin{lem}\label{lem:commutator_3_A}
The following equality holds
    \begin{align}
        [A(t_1), [A(t_2), A(t_3)]] = - 4 \I \omega^3 \sin(\phi(t_3) - \phi(t_2)) \left( - \cos(\phi(t_1)) Y + \sin(\phi(t_1)) X \right).
    \end{align}
\end{lem}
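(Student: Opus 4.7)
The statement is a direct two-step commutator calculation that chains Lemma~\ref{lem:commutator_A} with the Pauli commutation relations. The plan is to first reduce the nested commutator to a single outer commutator against $Z$, then expand that outer bracket by linearity and read off the coefficients.

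First, I would apply Lemma~\ref{lem:commutator_A}, with its $(t_1,t_2)$ relabeled as $(t_2,t_3)$, to obtain $[A(t_2),A(t_3)] = 2\I\omega^{2}\sin(\phi(t_2)-\phi(t_3))\,Z$. The scalar prefactor pulls out of the outer bracket, so the problem collapses to evaluating $[A(t_1),Z]$. Substituting $A(t_1) = -\I\omega(\cos\phi(t_1)\,X + \sin\phi(t_1)\,Y)$ and using $[X,Z] = -2\I Y$ together with $[Y,Z] = 2\I X$, the outer commutator evaluates, up to the $-\I\omega$ factor from $A(t_1)$, to a linear combination of $X$ and $Y$ with coefficients $\sin\phi(t_1)$ and $-\cos\phi(t_1)$.

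Combining the two prefactors gives an overall constant of $4\I\omega^{3}$ multiplied by $\sin(\phi(t_2)-\phi(t_3))$; a single sign flip via $\sin(\phi(t_2)-\phi(t_3)) = -\sin(\phi(t_3)-\phi(t_2))$ then matches the convention in the statement. There is no genuine obstacle here: the computation is entirely algebraic, and the only place a mistake can enter is in bookkeeping the powers of $\I$ and the sign of the sine argument, which must be flipped once to align the final expression with the right-hand side as written.
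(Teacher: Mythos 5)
Your proposal is correct and takes essentially the same route as the paper's proof: apply \cref{lem:commutator_A} to reduce the inner bracket to $2\I\omega^{2}\sin(\phi(t_2)-\phi(t_3))\,Z$, pull the scalar out of the outer bracket, expand $[A(t_1),Z]$ using $[X,Z]=-2\I Y$ and $[Y,Z]=2\I X$, and flip the sign of the sine argument to match the stated form. The power-of-$\I$ bookkeeping ($2\I\cdot(-\I\omega)\cdot 2\I = 4\I\omega$) and the final sign are handled correctly.
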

\begin{proof}
Using \cref{lem:commutator_A}, we have
\begin{align}
    \begin{split}
        [A(t_1), [A(t_2), A(t_3)]] &= 2 \I \omega^2 \sin(\phi(t_2) - \phi(t_3)) [A(t_1), Z]\\
        &= 2 \omega^3 \sin(\phi(t_2) - \phi(t_3)) \left( \cos(\phi(t_1)) [X, Z] + \sin(\phi(t_1)) [Y, Z] \right)\\
        &= - 4 \I \omega^3 \sin(\phi(t_3) - \phi(t_2)) \left( - \cos(\phi(t_1)) Y + \sin(\phi(t_1)) X \right).
    \end{split}
\end{align}
\end{proof}

We then analyze the third-order term using nested commutators.

\begin{lem}
Let $\phi \in \mathscr{P}_\beta$ be a Bernstein-type pulse function, and $h$ be the step size parameter. We have 
    \begin{align}
        \Omega_3(h) = \Or(\beta^2 \omega^3 h^5).
    \end{align}
\end{lem}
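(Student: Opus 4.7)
The plan is to substitute the explicit commutator identity from \cref{lem:commutator_3_A} into the defining integral for $\Omega_3(h)$ in \cref{eqn:Magnus_first_three_terms} and then extract the needed scaling by a Taylor expansion of the integrand in the pulse values $\phi(t_i)$. A naive bound, using $|\sin(\phi(t_i)-\phi(t_j))| \le \Or(\beta h)$, gives only an integrand of size $\Or(\omega^3 \beta h)$ and hence $\Omega_3(h) = \Or(\omega^3 \beta h^4)$ after integrating over the ordered simplex of volume $h^3/6$. So one extra factor of $\beta h$ must come from a cancellation, and the goal of the proof is to localize that cancellation.

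First, I would use \cref{lem:commutator_3_A} to write the bracketed integrand appearing in $\Omega_3(h)$, namely $[A(t_1),[A(t_2),A(t_3)]] + [A(t_3),[A(t_2),A(t_1)]]$, as $-4\I\omega^3\,[\mathsf{X}(\phi_1,\phi_2,\phi_3)\,X + \mathsf{Y}(\phi_1,\phi_2,\phi_3)\,Y]$ with
\begin{equation*}
\mathsf{X} = \sin(\phi_3-\phi_2)\sin\phi_1 + \sin(\phi_1-\phi_2)\sin\phi_3, \qquad
\mathsf{Y} = -\sin(\phi_3-\phi_2)\cos\phi_1 - \sin(\phi_1-\phi_2)\cos\phi_3,
\end{equation*}
where $\phi_i := \phi(t_i)$. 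Both $\mathsf{X}$ and $\mathsf{Y}$ vanish identically when $\phi_1=\phi_2=\phi_3$, which is the structural source of the additional cancellation. Next, I would Taylor expand each $\phi_i$ around the reference time $0$, using $\phi(t_i)=\phi(0)+\phi'(0)\,t_i+\Or(\beta^2 h^2)$ from $\phi\in\mathscr{P}_\beta$, and expand $\mathsf{X},\mathsf{Y}$ as multivariate Taylor series in $(\phi_1-\phi_2,\phi_3-\phi_2)$ around the origin. Since the zeroth-order piece vanishes, the leading contribution is linear in $(t_1-t_2)$ and $(t_3-t_2)$, specifically of the form $\phi'(0)\,g(\phi(0))\,(t_1+t_3-2t_2)$ for some bounded trigonometric function $g$, with remainder $\Or(\beta^2 h^2)$ uniformly on $[0,h]^3$.

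The key cancellation is then the identity
\begin{equation*}
\int_0^{h}\!\!\int_0^{t_1}\!\!\int_0^{t_2} (t_1 + t_3 - 2 t_2)\,\ud t_3\,\ud t_2\,\ud t_1 = 0,
\end{equation*}
which is a short direct computation (the inner two integrals give $t_1 t_2 - \tfrac{3}{2}t_2^2$, whose primitive in $t_2$ evaluated at $t_1$ is $\tfrac{t_1^3}{2}-\tfrac{t_1^3}{2}=0$). Thus the linear-in-time piece of the expansion integrates out exactly, and only the $\Or(\omega^3 \beta^2 h^2)$ remainder survives. Integrating this remainder over the simplex of volume $h^3/6$ yields $\norm{\Omega_3(h)} \le \Or(\omega^3 \beta^2 h^5)$, as required.

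The main obstacle is the bookkeeping in the Taylor remainder: one needs uniform control of the higher derivatives of $\mathsf{X}$ and $\mathsf{Y}$ as functions of $(\phi_1,\phi_2,\phi_3)$ combined with Bernstein-type bounds on the $\phi_i$ themselves, so that the $\Or(\beta^2 h^2)$ remainder is genuinely independent of $\omega$ and does not hide any $\phi''$-dependent term of the wrong size. A careful way to organize this is to expand $\mathsf{X},\mathsf{Y}$ as formal power series in the two small variables $\alpha_1:=\phi_1-\phi_2$ and $\alpha_3:=\phi_3-\phi_2$ (each $\Or(\beta h)$), verify that the degree-zero term is identically zero and that the degree-one term is $\sin\phi_2\cdot(\alpha_1+\alpha_3)$ for $\mathsf{X}$ and $-\cos\phi_2\cdot(\alpha_1+\alpha_3)$ for $\mathsf{Y}$, and then substitute $\alpha_i=\phi'(0)(t_i-t_2)+\Or(\beta^2 h^2)$ so that the cancellation identity above applies directly. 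Everything beyond this linear slice is at least quadratic in the $\alpha_i$ and therefore $\Or(\beta^2 h^2)$, closing the bound.
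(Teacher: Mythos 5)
Your proof is correct and follows essentially the same strategy as the paper's: both start from the explicit commutator identity of \cref{lem:commutator_3_A}, Taylor-expand the integrand in the pulse values, and exploit the fact that the linear-in-$t$ piece $\propto \phi'(0)(t_1+t_3-2t_2)$ integrates to zero over the ordered simplex. The paper's version is a more brute-force expansion of $f$ and $g$ directly in powers of $t_i$ (verified symbolically for the $h^5$ coefficient), whereas your reorganization around the variables $\alpha_i=\phi_i-\phi_2$ and the observation that $\mathsf{X},\mathsf{Y}$ vanish on the diagonal makes the mechanism of the cancellation more transparent; this is a stylistic improvement, not a different argument. One small point worth making explicit when you write this up: the degree-one coefficients $\sin\phi_2$ and $-\cos\phi_2$ depend on $t_2$, so you must further replace $\sin\phi(t_2)\to\sin\phi(0)$ (etc.) at cost $\Or(\beta h)\cdot\Or(\beta h)=\Or(\beta^2 h^2)$ before the cancellation identity applies to the constant-coefficient piece; you gesture at this with the notation $g(\phi(0))$, but the step should be stated since it is exactly where the remaining $\Or(\beta^2 h^2)$ remainder must absorb the $t_2$-dependence of the trigonometric prefactor.
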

\begin{proof}

Using \cref{lem:commutator_3_A}, we have
\begin{align}
    \begin{split}
        & [A(t_1), [A(t_2), A(t_3)]] + [A(t_3), [A(t_2), A(t_1)]]\\
        =& - 4 \I \omega^3 \left( \sin(\phi(t_3) - \phi(t_2)) \sin(\phi(t_1)) + \sin(\phi(t_1) - \phi(t_2)) \sin(\phi(t_3)) \right) X\\
        & + 4 \I \omega^3 \left( \sin(\phi(t_3) - \phi(t_2)) \cos(\phi(t_1)) + \sin(\phi(t_1) - \phi(t_2)) \cos(\phi(t_3)) \right) Y\\
        =& -4 \I \omega^3 \left(f(t_1, t_2, t_3) X - g(t_1, t_2, t_3) Y\right).
    \end{split}
\end{align}
The rest of the proof analyzes the behavior of the three-layer integral of $f$ and $g$ in the limit of small $h$.

We expand $f(t_1, t_2, t_3)$ using Taylor series around $0$, retaining terms up to $\Or(h^3)$. The expansion of $f(t_1, t_2, t_3)$ is
\begin{equation}
f(t_1, t_2, t_3) = a b t_{1} - 2 a b t_{2} + a b t_{3} + a c t_{1}^{2} - 2 a c t_{2}^{2} + a c t_{3}^{2} - b^{2} t_{1} t_{2} + 2 b^{2} t_{1} t_{3} - b^{2} t_{2} t_{3} + \Or(h^3).
\end{equation}
Here, $a = \phi(0), b = \phi^\prime(0), c = \phi^{\prime\prime}(0)$. Integrating over $0 \leq t_3 \leq t_2 \leq t_1 \leq h$, we have
\begin{equation}
\int_0^h \int_0^{t_1} \int_0^{t_2} f(t_1, t_2, t_3) \, dt_3 \, dt_2 \, dt_1 = \left(\frac{a c}{60} - \frac{b^{2}}{40}\right) h^5 + \Or(h^6).
\end{equation}

We can similarly analyze the other function 
\begin{equation}
\begin{split}
    g(t_1, t_2, t_3) &= \sin(\phi(t_1) - \phi(t_2)) \cos(\phi(t_3)) + \sin(\phi(t_3) - \phi(t_2)) \cos(\phi(t_1))\\
    &= b t_{1} - 2 b t_{2} + b t_{3} + c t_{1}^{2} - 2 c t_{2}^{2} + c t_{3}^{2} + \Or(h^3).
\end{split}
\end{equation}
Integrating over $0 \leq t_3 \leq t_2 \leq t_1 \leq h$ and noting that the first term vanishes after integration, we get
\begin{equation}
    \int_0^h \int_0^{t_1} \int_0^{t_2} g(t_1, t_2, t_3) \ud t_3 \ud t_2 \ud t_1 = \frac{c}{60} h^5 + \Or(h^6).
\end{equation}
These complete the proof.

\end{proof}

The fifth-order term can be analyzed similarly.

\begin{lem}
Let $\phi \in \mathscr{P}_\beta$ be a Bernstein-type pulse function, and $h$ be the step size parameter. We have 
    \begin{equation}
        \Omega_5(h) = \Or(\beta^2 \omega^5 h^7).
    \end{equation}
\end{lem}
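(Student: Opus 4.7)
The plan is to extend the three-stage template used for $\Omega_3$ to the 5-fold time-ordered integral defining $\Omega_5$: (i) enumerate every contribution from the recursive Magnus formula, (ii) reduce the resulting nested commutators to Pauli form, and (iii) Taylor expand the trigonometric integrand and integrate over the simplex, then show that the two lowest-order contributions vanish so that the leading surviving term sits at order $h^7$.

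For stage (i), I would apply \cref{eqn:recursive_definition_Omega_n} with $n = 5$ and expand over the compositions of $n - 1 = 4$, namely $(4)$ for $j = 1$; $(1,3), (2,2), (3,1)$ for $j = 2$; $(1,1,2), (1,2,1), (2,1,1)$ for $j = 3$; and $(1,1,1,1)$ for $j = 4$. Recursively substituting $\Omega_1, \ldots, \Omega_4$ into every commutator slot collapses $\Omega_5(h)$ into a rational-coefficient linear combination of 5-fold time-ordered integrals over $0 \le t_5 \le \cdots \le t_1 \le h$ of four-fold nested commutators of $A(t_1), \ldots, A(t_5)$, with the rationals coming from Bernoulli numbers and inverse factorials. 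For stage (ii), since $A(t) \in \mathrm{span}\{X, Y\}$ and $[X, Y] = 2\I Z$, $[X, Z] = -2\I Y$, $[Y, Z] = 2\I X$, each such four-fold nested commutator of five $A$'s reduces to $\omega^5$ times a trigonometric combination of the $\phi(t_i)$'s (a sum of products of $\sin$ and $\cos$ whose arguments are signed sums of the $\phi(t_i)$'s with signs totaling $\pm 1$) multiplied by either $X$ or $Y$, directly generalizing \cref{lem:commutator_A,lem:commutator_3_A} and confirming the parity statement of \cref{lem:time_evolution_generator_repr}.

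For stage (iii), I would Taylor-expand $\phi(t_i) = \phi_0 + \phi_0' t_i + \tfrac{1}{2}\phi_0'' t_i^2 + \cdots$ inside every trigonometric factor, so the integrand becomes a power series whose degree-$k$ monomials in $(t_1, \ldots, t_5)$ contribute at order $h^{5+k}$ after integration over the simplex of volume $h^5/120$. To reach the target scaling $\Or(\beta^2 \omega^5 h^7)$ it suffices to show that the integrated contributions with $k = 0$ and $k = 1$ vanish identically; once they do, every remaining term has $k \ge 2$ and carries at least two derivatives of $\phi$ (either one factor of $\phi_0''$ or two factors of $\phi_0'$), supplying the $\beta^2$ prefactor.

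The main obstacle is verifying these two cancellations without grinding through the full polynomial expansion, which I plan to argue structurally. The $k = 0$ contribution corresponds to the constant-pulse limit $\phi \equiv \phi_0$; then $H$ is time-independent, every $[A(t_i), A(t_j)]$ vanishes, and so does the full nested-commutator integrand. The $k = 1$ contribution corresponds to the linear pulse $\phi(t) = \phi_0 + \phi_0' t$, for which \cref{app:subsec:linear_pulse} supplies a closed-form evolution together with the estimate $\norm{\Omega(h) - \wt\Omega(h)} \lesssim \tfrac{1}{60}\alpha^2 \omega^3 h^5$. Because $\Omega - \wt\Omega$ equals $\Omega_3 + \Omega_5 + \cdots$ and $\Omega_3$ alone already saturates this bound with a coefficient proportional to $\alpha^2$, the Taylor expansion of $\Omega_5$ in $\alpha$ cannot carry any $\alpha^0$ or $\alpha^1$ term, so its lowest-$\alpha$ contribution is at least $\Or(\alpha^2 \omega^5 h^7)$. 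Since the $k = 1$ coefficient of the general-pulse $\Omega_5$ depends only on $\phi_0$ and $\phi_0'$ and equals its linear-pulse value at $\alpha = \phi_0'$, the same vanishing forces the general-pulse $k = 1$ contribution to be zero as well. With both lowest-order pieces eliminated, the leading surviving contribution is at order $h^7$ with at least two derivatives of $\phi$, yielding $\norm{\Omega_5(h)} = \Or(\beta^2 \omega^5 h^7)$ as claimed.
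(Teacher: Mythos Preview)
Your overall strategy --- replace the brute-force symbolic expansion by the structural observation that the $h^{5+k}$ coefficient of $\Omega_5$ depends only on $\phi_0,\dots,\phi_0^{(k)}$, hence can be tested on the constant pulse ($k=0$) and the linear pulse ($k=1$) --- is a genuine alternative to the paper's route, which simply feeds all $22$ nested-commutator terms into a symbolic package, Taylor-expands the scalar integrands $f,g$, and integrates over the simplex to see the first surviving contribution at $h^7$. The $k=0$ argument is correct.

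The $k=1$ step, however, has a real gap. You invoke only the leading-order estimate $\norm{\Omega-\wt\Omega}\lesssim\tfrac{1}{60}\alpha^2\omega^3 h^5$ and argue that since $\Omega_3$ ``saturates'' it, $\Omega_5$ can carry no $\alpha^0$ or $\alpha^1$ term. This does not follow: a hypothetical $\alpha\,\omega^5 h^6$ contribution to $\Omega_5$ is sub-leading in $h$ relative to $\alpha^2\omega^3 h^5$, so a leading-$h$ bound cannot rule it out, and $\alpha$-parity alone (the $X$-component is even and the $Y$-component odd under $\alpha\mapsto-\alpha$, by conjugation with $X$) still permits an $\alpha^1$ term in the $Y$-direction. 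To close the gap you need two ingredients not supplied in your write-up: (i) the dimensional scaling $\Omega_n(h;\alpha)=\lambda^{-n}\Omega_n(\lambda h;\alpha/\lambda)$, which forces every $\alpha^p$ monomial of the linear-pulse $\Omega_5$ to sit at $h^{5+p}$ (so that ``no $\alpha^0,\alpha^1$'' would indeed give $\Or(h^7)$); and (ii) the \emph{full} expansion in the linear-pulse corollary, whose remainder is explicitly $\Or\big((\alpha^2\omega^5+\alpha^4\omega^3)h^7\big)$ with no $h^6$ present (equivalently, the scalar prefactor there is manifestly odd in $h$), after which the $\omega$-grading $\Omega_{2k+1}\propto\omega^{2k+1}$ isolates $\Omega_5=\Or(\alpha^2\omega^5 h^7)$ directly. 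With those two pieces your reduction to the linear pulse would finish the proof cleanly; as written, the $k=1$ cancellation is asserted but not established.
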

\begin{proof}
    There are 22 nested commutator terms in the fifth-order Magnus term. Formally, it can be written as
    \begin{equation}
        \Omega_5(h) = - \I \frac{2}{5!} \int_0^h \int_0^{t_1} \int_0^{t_2} \int_0^{t_3} \int_0^{t_4} f X + g Y \ud t_5 \ud t_4 \ud t_3 \ud t_2 \ud t_1.
    \end{equation}
    Using symbolic computation packages, such as \textsf{sympy} in \textsf{python}, we can evaluate its Taylor expansion around zero:
    \begin{equation}
    \begin{split}
        f(t_1, t_2,& t_3, t_4, t_5) = - 32 a b t_{1} - 32 a b t_{2} + 128 a b t_{3} - 32 a b t_{4} - 32 a b t_{5} - 32 a c t_{1}^{2} - 32 a c t_{2}^{2}\\
        & + 128 a c t_{3}^{2} - 32 a c t_{4}^{2} - 32 a c t_{5}^{2} - 160 b^{2} t_{1} t_{2} + 288 b^{2} t_{1} t_{3} - 160 b^{2} t_{1} t_{5} - 80 b^{2} t_{2} t_{3}\\
        & + 16 b^{2} t_{2} t_{4}  + 192 b^{2} t_{2} t_{5} - 32 b^{2} t_{3} t_{4} - 48 b^{2} t_{3} t_{5} - 16 b^{2} t_{4} t_{5} + \Or(h^3)
    \end{split}
    \end{equation}
and
    \begin{equation}
    \begin{split}
        g(t_1, t_2, t_3, t_4, t_5) =&32 b t_{1} + 32 b t_{2} - 128 b t_{3} + 32 b t_{4} + 32 b t_{5} + 32 c t_{1}^{2} + 32 c t_{2}^{2}\\
        & - 128 c t_{3}^{2} + 32 c t_{4}^{2} + 32 c t_{5}^{2} + \Or(h^3).
    \end{split}
    \end{equation}

    Here, $a = \phi(0), b = \phi^\prime(0), c = \phi^{\prime\prime}(0)$. Their integrals are as follows:
\begin{equation}
    \begin{split}
        \int_0^h \int_0^{t_1} \int_0^{t_2} \int_0^{t_3} \int_0^{t_4} f \ud t_5 \ud t_4 \ud t_3 \ud t_2 \ud t_1 = \left(- \frac{4 a c}{63} - \frac{4 b^{2}}{105}\right) h^7 + \Or(h^8) = \Or(h^7).
    \end{split}
\end{equation}
and
\begin{equation}
    \begin{split}
        \int_0^h \int_0^{t_1} \int_0^{t_2} \int_0^{t_3} \int_0^{t_4} g \ud t_5 \ud t_4 \ud t_3 \ud t_2 \ud t_1 = \frac{4 c}{63} h^{7} + \Or(h^8) = \Or(h^7).
    \end{split}
\end{equation}

These complete the proof.

\end{proof}

Following this pattern, we conjecture that the general term takes the form
\begin{equation}
    \norm{\Omega_{2k+1}(h)} = \Or(\beta^2 \omega^{2k+1} h^{2k+3}), k = 0, 1, \cdots, \quad \norm{\Omega_{2k}(h)} = \Or(\beta \omega^{2k} h^{2k+1}), k = 1, 2, \cdots.
\end{equation}

We justify them numerically in \cref{fig:LTE}. The left panel indicates the agreement up to the leading order $\Omega_2 \sim h^3$ and $\Omega_3 \sim h^5$ when setting $\omega = \Or(1)$. The right panel indicates that the $\omega$ and $h$ scaling matches our conjecture when setting $\omega h = 1$.

\begin{figure}[htbp]
    \centering
    \subfloat[$\omega = 1$]{\includegraphics[width = .45\textwidth]{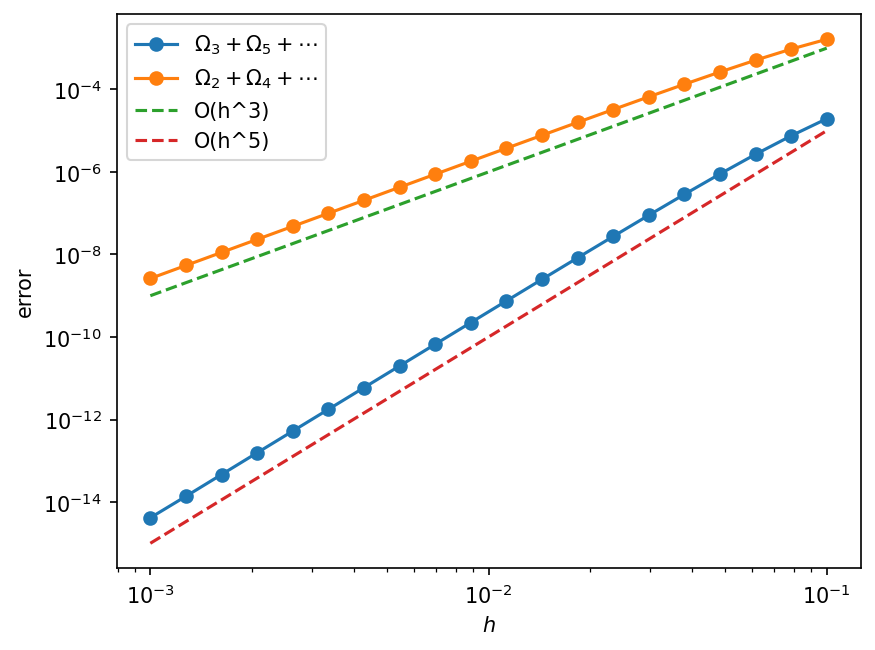}}
    \subfloat[$\omega = 1 / h$]{\includegraphics[width = .45\textwidth]{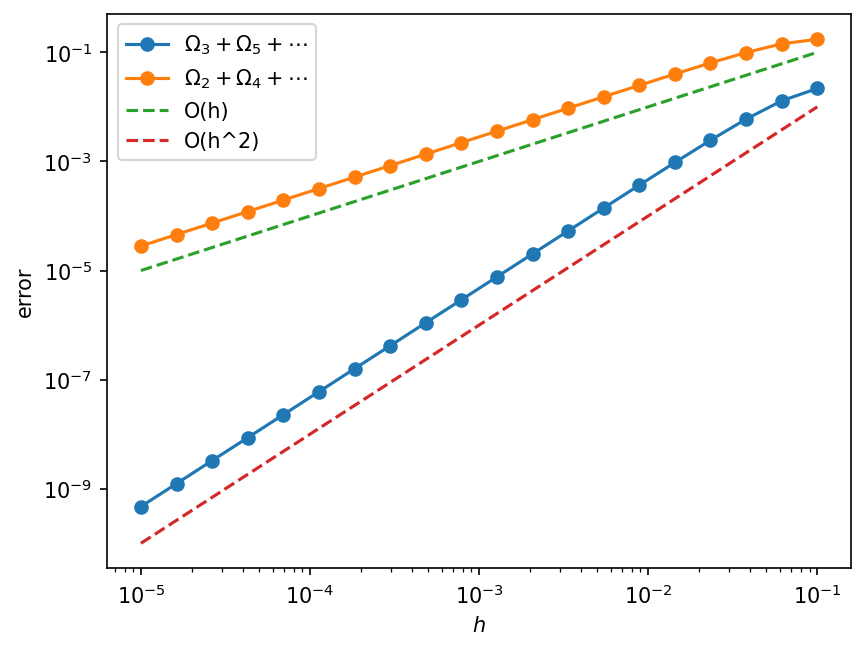}}
    \caption{Truncation error of the Magnus expansion. Blue dots stand for the sum of high-order Magnus terms in odd orders. Orange dots stand for the magnitude of the Pauli $Z$ components in the generator, which is equal to the sum of all even-order Magnus terms. The reference solution is generated by solving the time-dependent Hamiltonian dynamics using the fourth-order Runge-Kutta method (RK4).}
    \label{fig:LTE}
\end{figure} 

Another observation is that the even and odd Magnus terms admit distinct representations, as in \cref{lem:time_evolution_generator_repr}. We provide the proof as follows.
\begin{proof}[Proof of \cref{lem:time_evolution_generator_repr} (structure part)]
    We prove the structure by induction. In the base case $k = 1, 2$, we see that $\Omega_1 \in \mathrm{span}\{X, Y\}$ and $\Omega_2 \in \mathrm{span}\{Z\}$. Our induction hypothesis is $\Omega_{2k+1} \in \mathrm{span}\{X, Y\}$ and $\Omega_{2k+2} \in \mathrm{span}\{Z\}$ for $k \ge 1$. 
    
    It suffices to prove an intermediate relation. Let $A, B \in \mathrm{span}\{X, Y\}$ and $C \in \mathrm{span}\{Z\}$. We have
    \begin{equation}
        [A, B] = [a_x X + a_y Y, b_x X + b_y Y] = (a_x b_y - a_y b_x) [X, Y] \in \mathrm{span}\{Z\}
    \end{equation}
    and 
    \begin{equation}
        [A, C] = [a_x X + a_y Y, c_z Z] = z_x c_z [X, Z] + a_y c_z [Y, Z] \in \mathrm{span}\{X, Y\}.
    \end{equation}
    Let $\mc{M}$ be a classifier, which equals to $\mc{M}(\mathrm{span}\{X, Y\}) = 1$ and $\mc{M}(\mathrm{span}\{Z\}) = 0$. Following the previous relation, we have
    \begin{equation}
        \mc{M}([X_1, X_2]) = \mc{M}(X_1) \oplus \mc{M}(M_2)),\ \text{where } x \oplus y = (x+y) \ \mathrm{mod}\ 2.
    \end{equation}

    Then, our induction hypothesis is equivalent to
    \begin{equation}
        \mc{M}(\Omega_k) = k \ \mathrm{mod}\ 2, \ \forall k \le n - 1.
    \end{equation}
    
    Let us consider an integer $j \in [1, n]$ and $k_1, \cdots, k_j \ge 1$ so that $k_1 + \cdots + k_j = n - 1$. Then, we have
    \begin{equation}
        \mc{M}([\Omega_{k_j}, A]) = \mc{M}(\Omega_{k_j}) \oplus \mc{M}(A) = \mc{M}(\Omega_{k_j}) \oplus 1
    \end{equation}
    and
    \begin{equation}
        \mc{M}([\Omega_{k_1}, [\Omega_{k_2}, \cdots, [\Omega_{k_j}, A]]]) = \mc{M}(\Omega_{k_1}) \oplus \cdots \mc{M}(\Omega_{k_j}) \oplus 1 = (1 + \sum_{q=1}^j k_q) \ \mathrm{mod}\ 2 = n \ \mathrm{mod}\ 2.
    \end{equation}
    Following the recursive definition \cref{eqn:recursive_definition_Omega_n}, we have
    \begin{equation}
        \mc{M}(\Omega_n) = n \ \mathrm{mod}\ 2.
    \end{equation}
    By induction, the proof is complete.
\end{proof}

\section{Reconstructing Analog Pulse Functions from Learned Digital Data}\label{sec:app:reconstruct_analog}

In \cref{sec:app_analytical_properties}, we demonstrated that the analog evolution can be regarded as a digital representation with bounded error (see \cref{eqn:ADC_logical}). In this section, we address the reverse problem, i.e., reconstructing the analog pulse function from a given set of learned digital data. This corresponds to the following mapping:

\begin{equation}\label{eqn:DAC_logical}
    \text{Digital Rep. (Data):}\ (\psi_0, \cdots, \psi_L) \in \RR^L  \quad \to \quad \text{Analog Rep. (Pulse):}\ \hat\phi \in C^\infty.
\end{equation}

In this section, we address the reconstruction problem using spline-based interpolation methods. Splines are powerful tools in non-parametric machine learning, providing flexible and smooth function approximations. In contrast to high-degree polynomial interpolation methods such as Lagrange interpolation, which are prone to the instability known as Runge’s phenomenon, spline interpolation employs low-degree polynomial pieces that connect data points locally and smoothly, resulting in more stable and robust approximations. 

It is worth noting that our queryable data do not correspond to function values sampled at discrete points. Instead, they represent averaged function values over small intervals. This distinct problem setup differentiates our interpolation method from the standard textbook formulation of spline interpolation. In this section, we introduce two approaches for reconstructing the pulse function. The first method tailors the problem structure by differentiating an interpolation of the antiderivative of the pulse function, which we refer to as the \textit{Differentiating Spline (DS)} method. The second method, which is referred to as the \textit{Midpoint Spline (MS)} method, leverages the underlying data-generation process and carries a clearer information-theoretic interpretation.

\subsection{Pulse reconstruction method via differentiating spline}

Let the time grid point be $t_j := j h = j T / L$. The antiderivative of the pulse function can be written as
\begin{equation}
    \Phi(t_j) := \int_0^{t_j} \phi(s) \ud s = \sum_{k = 1}^j \int_{(k-1)h}^{k h} \phi(s) \ud s \approx h \sum_{k = 1}^j \psi_j =: \Psi_j.
\end{equation}
Hence, when the segmented generators can be learned, taking the prefix sum can assemble the antiderivative value of the pulse function. Let $\mc{S}_{[\Psi]}(t)$ be a spline interpolation on the dataset $\{(t_j, \Psi_j): j = 0, \cdots, L\}$ with $t_0 = \Psi_0 = 0$. Then, we have 
\begin{equation}
    \mc{S}_{[\Psi]}(t) \approx \mc{S}_{[\Phi]}(t) \approx \int_0^t \phi(s) \ud s \quad \Rightarrow \quad \hat{\phi}(t) := \frac{\ud }{\ud t} \mc{S}_{[\Psi]}(t) \approx \phi(t).
\end{equation}
Our interpolation data is constructed from the prefix sum, whose pointwise error is still bounded
\begin{equation}
    \abs{\Psi_j - \Phi(t_j)} \le h \sum_{k = 1}^j \abs{\psi_j - \bar\phi_j} \le T \cdot \max_{1 \le j \le L} \abs{\psi_j - \bar\phi_j}.
\end{equation}
When differentiating the spline for pulse recovery, one might expect the pointwise error of the reconstructed pulse function to be amplified by a factor of the inverse step size parameter $1/h$. However, the good news is that no such amplification occurs, because the derivative of a spline interpolation effectively relies only on local data for its computation.

The performance guarantee of this method is summarized in the following theorem.

\begin{thm}\label{thm:DS_spline_estimation_error}
    Let $\phi \in \mathscr{P}_\beta$ be a Bernstein-type pulse, $\mc{S}$ be a $q$-th order spline interpolant. Suppose each data point $\psi_j = \bar\phi_j + b_j + \varepsilon_j$ is subjected to a bounded error $\abs{b_j} \le \delta$ and a random normal-distributed error (not necessarily iid) $\vec{\varepsilon} \sim N(0, \Sigma), \Sigma \preceq \sigma^2 I$. Let the interior area of the interpolation be $\mc{I}^\circ_c := [ch, T - ch]$ where $h := T / L$ and $c \ge q$ is a constant. Then, there exist $(h, \sigma)$-independent constants $C_1, C_2, C_3 > 0$, so that the pointwise interpolation error in the interior is bounded as follows:
    \begin{equation}
        \hat{\phi}(t) := \frac{\ud }{\ud t} \mc{S}_{[\Psi]}(t),\quad \sup_{t \in \mc{I}^\circ_c} \mathbb{E} \abs{\hat{\phi}(t) - \phi(t)} \le C_1 \beta^q h^q + C_2 \delta + C_3 \sigma.
    \end{equation}
\end{thm}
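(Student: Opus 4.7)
The strategy exploits the linearity of both the prefix-sum map $\psi\mapsto\Psi$ and the spline interpolant to decompose the error into a classical spline-approximation bias, a deterministic contribution from the bounded perturbations $\{b_j\}$, and a Gaussian contribution from $\{\varepsilon_j\}$, and then to control each separately using standard spline theory together with quantitative properties of a single ``differentiated cumulative'' weight sequence $\gamma_k(t)$.

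First I would express $\hat\phi(t)$ directly in terms of $\{\psi_k\}$. Let $\{\alpha_j\}_{j=0}^L$ denote the cardinal basis of $\mc{S}$, so $\mc{S}_{[\Psi]}(t)=\sum_j\alpha_j(t)\Psi_j$. Since $\Psi_0=0$ and $\Psi_j=h\sum_{k=1}^j\psi_k$, swapping the order of summation gives
\begin{equation*}
\hat\phi(t) = \sum_{k=1}^L \gamma_k(t)\,\psi_k,\qquad \gamma_k(t) := h\sum_{j=k}^L \alpha_j'(t).
\end{equation*}
Substituting $\psi_k=\bar\phi_k+b_k+\varepsilon_k$ and using the exact identity $h\sum_{k\le j}\bar\phi_k=\Phi(t_j)$, I would split
\begin{equation*}
\hat\phi(t)-\phi(t) = \bigl[(\mc{S}_{[\Phi]})'(t)-\phi(t)\bigr] + \sum_k\gamma_k(t)b_k + \sum_k\gamma_k(t)\varepsilon_k.
\end{equation*}
The first summand is handled by classical spline-derivative approximation theory: for $\Phi\in C^{q+1}$ and $t$ at distance at least $qh$ from the boundary, $|(\mc{S}_{[\Phi]})'(t)-\Phi'(t)|\le C_q\|\Phi^{(q+1)}\|_\infty h^q$, and the Bernstein condition $\|\phi^{(q)}\|_\infty\le C\beta^q$ yields the $C_1\beta^q h^q$ bound.

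The technical heart of the proof lies in establishing three properties of the weights $\gamma_k(t)$: (i) \emph{locality}, $\gamma_k(t)=0$ unless $|kh-t|\lesssim qh$; (ii) \emph{boundedness}, $|\gamma_k(t)|=\Or(1)$ uniformly in $k,t,h$; and (iii) \emph{normalization}, $\sum_k\gamma_k(t)\equiv 1$. Property (i) follows from a telescoping observation: $\alpha_j'(t)$ is supported on $\Or(q)$ consecutive nodes with $|\alpha_j'(t)|=\Or(1/h)$, and the partition-of-unity identity $\sum_j\alpha_j'(t)=0$ holds in the interior, so $\gamma_k(t)$ vanishes both for $k$ above the local support window (empty sum) and below it (full sum equals zero). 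Property (ii) then follows from $|\gamma_k(t)|\le h\cdot\Or(q)\cdot\Or(1/h)=\Or(q)$, and (iii) follows by observing that constant data $\psi_k\equiv c$ yield $\Psi_j=chj=ct_j$, reproduced exactly as $\mc{S}_{[\Psi]}(t)=ct$ for any $q\ge 1$, which forces $\sum_k\gamma_k(t)=1$. For cardinal bases that are not compactly supported (e.g.\ natural cubic splines), (i)–(ii) hold only up to exponentially decaying tails, and I would invoke the Demko–Moss–Smith bound on the decay of inverses of banded matrices (already cited in the paper) to make the tail contributions rigorous; this is precisely why the interior region $\mc{I}^\circ_c$ is required with $c\ge q$.

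Given (i)–(iii), the remaining two contributions are immediate: $\bigl|\sum_k\gamma_k(t)b_k\bigr|\le\delta\sum_k|\gamma_k(t)|\le C_2\delta$, and $\sum_k\gamma_k(t)\varepsilon_k$ is Gaussian with variance at most $\sigma^2\sum_k\gamma_k(t)^2\le C_3^2\sigma^2$, so $\mathbb{E}|\sum_k\gamma_k(t)\varepsilon_k|\le\sqrt{2/\pi}\,C_3\sigma$. The main obstacle I anticipate is proving the locality/boundedness claims (i) and (ii) quantitatively and $h$-uniformly: a naive treatment that regards $\Psi_j$ as a random walk over the $\varepsilon_k$ would give $\sum_k|\gamma_k(t)|=\Or(L)$ and a useless $\Or(\sqrt{T/h}\,\sigma)$ variance bound, ballooning with $1/h$. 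The telescoping plus partition-of-unity structure is exactly what prevents this apparent noise amplification under differentiation, and encoding it into the single weight sequence $\gamma_k(t)$ is the crucial technical step.
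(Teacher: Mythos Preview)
Your proposal is correct and follows essentially the same approach as the paper's proof. Both express $\hat\phi(t)$ as a linear functional of the raw data $\{\psi_k\}$ (your weights $\gamma_k(t)$ are exactly the paper's vector $\zeta$), both establish that this weight vector has $\ell^1$- and $\ell^2$-norms that are $\Or(1)$ uniformly in $h$ via exponential decay, and both invoke the Demko--Moss--Smith bound on inverses of banded matrices to handle the non-compactly-supported natural-spline cardinal basis; the only cosmetic difference is that the paper works through the explicit tridiagonal system $\vec m = \tfrac{6}{h^2}K^{-1}D\vec y$ for the cubic case and then generalizes, whereas you frame the argument abstractly through the cardinal basis and the partition-of-unity telescoping identity.
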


\begin{proof}
    The first error term is the systematic truncation error of spline interpolation, where local degree-$q$ polynomials are used. We first show that the spline interpolant effectively relies only on local data for its computation. We start from the analysis of a simple case of cubic spline $q = 3$. 

    Let $y_j := h \sum_{i \le j} \psi_i$. When $t \in [t_j, t_{j + 1}]$, the interpolant is determined by the following equation:
    \begin{equation}\label{eqn:cubic_spline_defn}
        \mc{S}_{[\Psi]}(t) = \frac{m_j}{6h}(t_{j+1} - t)^3 + \frac{m_j}{6h} (t - t_j)^3 + (y_j - \frac{m_jh^2}{6}) \frac{t_{j+1}-t}{h} + (y_{j+1} - \frac{m_{j+1} h^2}{6}) \frac{t-t_j}{h}.
    \end{equation}
    Taking the natural boundary condition as an example, the second-order derivative parameters are derived from the following tridiagonal linear system:
    \begin{equation}
        m_{j-1} + 4 m_j + m_{j + 1} = \frac{6}{h^2} (y_{j+1} - 2y_j + y_{j - 1}) \Rightarrow \vec{m} = \frac{6}{h^2} K^{-1} D \vec{y},\ K = \mathrm{TriDiag}(1, 4, 1),\ D = \mathrm{TriDiag}(1, -2, 1).
    \end{equation}
    Taking derivative of \cref{eqn:cubic_spline_defn}, we have
    \begin{equation}\label{eqn:structure_of_pulse_DS}
        \hat{\phi}(t) = \alpha_j(t) y_j + \alpha_{j+1}(t) y_{j + 1} + \beta_j(t) m_j + \beta_{j+1}(t) m_{j + 1} = (\alpha_j e_j^\top + \alpha_{j+1} e_{j+1}^\top + (\beta_j e_j^\top + \beta_{j+1} e_{j+1}^\top) \frac{6}{h^2} K^{-1} D) \vec{y} =: \gamma^\top \vec{y}.
    \end{equation}
    Here, $\alpha \sim \Or(1/h)$ and $\beta \sim \Or(h)$. Now, we need to show that $\gamma$ is decaying, namely, $\abs{\gamma_i} := \abs{\gamma^\top e_i} \lesssim \Or(\rho^{|i - j|} / h)$ for some $\rho < 1$. From \cref{eqn:structure_of_pulse_DS}, we see that the first two $\alpha$-related terms naturally satisfy this decaying pattern. For $\beta$-related terms, it is suffices to show that $e_j^\top K^{-1} D e_i \lesssim \Or(\rho^{|i - j|})$. Note that $K$ is a banded positive definite matrix. According to \cite[Proposition 2.2]{demko1984decay}, there is a constant $\rho$ so that $(K^{-1})_{i, j} \lesssim \Or(\rho^{|i-j|})$. Then, by applying triangle inequality, we have
    \begin{equation}
        \abs{e_j^\top K^{-1} D e_i} = \abs{(K^{-1})_{j, i - 1} - 2 (K^{-1})_{j, i} + (K^{-1})_{j, i + 1}} \lesssim \Or(\rho^{|i-j|}).
    \end{equation}
    Hence, we show that $\gamma$ is decaying. Furthermore, $\gamma$ is derived from the derivative of a spline, it satisfies $\gamma^\top 1$ so the derivative is not sensitive to a constant shift in interpolation data.
    
    Note that $y$ is the prefix sum of the raw data $\psi$. We may write this relation in a compact matrix form as $\vec{y} = h Q \vec{\psi}$ where $Q$ is a lower triangular matrix whose lower-triangular elements are all equal to one. Hence, $\hat{\phi}(t) = \gamma^\top \vec{y} = h \gamma^\top Q \vec{\psi} =: \zeta^\top \vec{\psi}$. When $m \ge j$, we have
    \begin{equation}
        \abs{\zeta_m} = h \abs{\sum_{i \ge m} \gamma_i} \le \Or(\sum_{k=0}^\infty \rho^{k + |m - j|}) \le \Or(\rho^{|m - j|}).
    \end{equation}
    The other case can be shown similarly. Following the decaying property of $\zeta$, we have $\zeta$ is bounded by $\Or(1)$ in $1$- and $2$-norms. 

    This conclusion can be generalized to the spline method at a higher order $q$. The only difference in the proof is that the structure of matrices $K$ and $D$ spans a wider band, but \cite[Proposition 2.2]{demko1984decay} is still applicable.

    Note that the raw data admits an error decomposition $\vec{\psi} = \vec{\bar{\phi}} + \vec{b} + \vec{\varepsilon}$. We have
    \begin{equation}
        \abs{\mathbb{E}(\hat{\phi}(t)) - \phi(t)} \le C_1 \beta^q h^q + \norm{\zeta^\top \vec{b}}_1 \le C_1 \beta^q h^q + \norm{\zeta}_1 \norm{\vec{b}}_\infty \le C_1 \beta^q h^q + C_2 \delta.
    \end{equation}
    Here, the first error term comes from the intrinsic interpolation error because the spline uses $q$-th order polynomials locally. The second error term comes from the smooth error in the data.

    The variance of the interpolant is
    \begin{equation}
        \mathrm{Var}(\hat{\phi}(t)) = \zeta^\top \mathbb{E}(\vec{\varepsilon}\vec{\varepsilon}^\top) \zeta \le \sigma^2 \norm{\zeta}_2^2 \le \Or(\sigma^2).
    \end{equation}

    Hence, following the steps outlined in \cref{subsec:app:proof_main_theorem}, the error bound is proven. Note that this analysis only applies to interior points. The interpolation points close to the boundary are subjected to more subtle treatment of numerical stencils to align with proper physical boundary conditions, which usually cause potential downgrade in error order.

\end{proof}

\subsection{Pulse reconstruction method by de-averaging raw data}\label{sec:app_midpoint_Hermite}

The DS method in the previous subsection achieves a desirable error bound. However, it does not exploit any knowledge of the signal-generation process; instead, it relies solely on the observational relationship between the raw data and the antiderivative of the pulse function.

In fact, we know that the raw data can be thought as some bounded-error approximation to the averaged pulse value on sliding windows. Hence, we may first try to invert this averaging operation to recover the function value, then apply textbook-version spline interpolation methods to recover the full pulse function. In this subsection, we quantify this intuition and rigorously develop a more elegant reconstruction method with clearer information-theoretic meaning.

We begin by introducing a high-order stencil that allows us to recover the value of $\phi(t)$ at midpoints within the interior of the function's domain.

\begin{lem}[De-averaging raw data]\label{lem:deavg}
Let $\phi\in \mathscr{P}_\beta$ and consider a uniform partition $0=t_0<t_1<\cdots<t_L=T$ with step $h:=t_{i+1}-t_i$ and midpoints $m_i:=(t_i+t_{i+1})/2$. Suppose each data point $\psi_j = \bar\phi_j + b_j + \varepsilon_j$ is subjected to a bounded error $\abs{b_j} \le \delta$ and a random normal-distributed error (not necessarily iid) $\vec{\varepsilon} \sim N(0, \Sigma), \Sigma \preceq \sigma^2 I$. 
For any interior point in $\mc{I}^\circ := [h, T - h]$, define the central second difference $\Delta^2\psi_i:=\psi_{i-1}-2\psi_i+\psi_{i+1}$ and the de-averaged midpoint estimator
\begin{equation}\label{eqn:de-averaging-stencil}
\phi^{\mathrm{mpt}}(m_i)\;:=\;\psi_i-\frac{1}{24}\,\Delta^2\psi_i.
\end{equation}
Then for any $m_i \in \mc{I}^\circ$, we have the following holds
\begin{equation}\label{eq:main-claim}
\mathbb{E}\abs{\phi^{\mathrm{mpt}}(m_i) - \phi(m_i)} = \mc{O}(\beta^4 h^4) + \Or(\delta) + \Or(\sigma).
\end{equation}
At the boundary cells, the same accuracy holds if $\Delta^2$ is replaced by a one-sided fourth-order stencil.
\end{lem}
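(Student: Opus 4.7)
The plan is to use linearity of the de-averaging stencil $\mc{D}[\psi]_i := \psi_i - \tfrac{1}{24}\Delta^2\psi_i$ and decompose the estimation error under the data model $\psi_j = \bar\phi_j + b_j + \varepsilon_j$ as
\begin{equation*}
\phi^{\mathrm{mpt}}(m_i) - \phi(m_i) = \bigl(\mc{D}[\bar\phi]_i - \phi(m_i)\bigr) + \mc{D}[b]_i + \mc{D}[\varepsilon]_i.
\end{equation*}
The three pieces will match the three error contributions in \cref{eq:main-claim}: the systematic bias $\Or(\beta^4 h^4)$, the bounded error $\Or(\delta)$, and the Gaussian noise $\Or(\sigma)$.

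The core calculation is the systematic piece, which requires showing that $\mc{D}$ inverts midpoint averaging up to fourth order. Since the averaging interval is symmetric about its midpoint, only even moments of the Taylor expansion survive, giving
\begin{equation*}
\bar\phi_j = \phi(m_j) + \frac{h^2}{24}\phi''(m_j) + \frac{h^4}{1920}\phi^{(4)}(m_j) + \Or(\beta^6 h^6).
\end{equation*}
Expanding $\phi$ and $\phi''$ about $m_i$ and applying the central second difference yields
\begin{equation*}
\Delta^2 \bar\phi_i = h^2\phi''(m_i) + \tfrac{h^4}{8}\phi^{(4)}(m_i) + \Or(\beta^6 h^6).
\end{equation*}
Subtracting $\tfrac{1}{24}$ of this identity from $\bar\phi_i$ cancels the $\tfrac{h^2}{24}\phi''(m_i)$ term exactly, leaving only a residual $\phi^{(4)}$ contribution of order $h^4$. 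The Bernstein bound $\|\phi^{(k)}\|_\infty \le \Or(\beta^k)$ then yields $|\mc{D}[\bar\phi]_i - \phi(m_i)| = \Or(\beta^4 h^4)$.

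The remaining two pieces follow from the fact that $\mc{D}$ has $\ell^\infty \to \ell^\infty$ operator norm at most $1 + \tfrac{1+2+1}{24} = \tfrac{7}{6}$, so $|\mc{D}[b]_i| = \Or(\delta)$. For the noise, $\mc{D}[\varepsilon]_i = -\tfrac{1}{24}\varepsilon_{i-1} + \tfrac{13}{12}\varepsilon_i - \tfrac{1}{24}\varepsilon_{i+1}$ is a three-term linear combination with $\Or(1)$ coefficients; under $\mathrm{Cov}(\varepsilon) \preceq \sigma^2 I$ its variance is at most $\tfrac{113}{96}\sigma^2 = \Or(\sigma^2)$, and Jensen's inequality gives $\mathbb{E}|\mc{D}[\varepsilon]_i| \le \sqrt{\mathrm{Var}(\mc{D}[\varepsilon]_i)} = \Or(\sigma)$. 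Summing the three estimates with the triangle inequality inside the expectation proves \cref{eq:main-claim}.

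The main obstacle is the Taylor bookkeeping ensuring exact cancellation of $\phi''$ at order $h^2$; a miscounted prefactor would degrade the bias to $\Or(\beta^2 h^2)$ and break the $L^{-2}$ interior scaling of \cref{thm:end-to-end-algorithm}. For the boundary cells $m_0, m_{L-1}$ where the central stencil is unavailable, $\Delta^2$ is replaced by a one-sided three-point second-difference with $\Or(1)$ coefficients, tuned so that the $\tfrac{1}{24}$-weighted correction still annihilates the $h^2\phi''$ term in the $\bar\phi$ expansion; such stencils exist by standard finite-difference theory, and the same three-part argument transfers with updated numerical constants.
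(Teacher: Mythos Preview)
Your proof is correct and follows essentially the same route as the paper: the same midpoint Taylor expansion of $\bar\phi_j$, the same $\Delta^2\bar\phi_i = h^2\phi''(m_i)+\tfrac{h^4}{8}\phi^{(4)}(m_i)+\Or(h^6)$ computation, the same $\tfrac{7}{6}$ bound on the $b$-term, and even the same $\tfrac{113}{96}$ constant for the noise variance. Your use of Jensen to pass from $\mathrm{Var}(\mc{D}[\varepsilon]_i)=\Or(\sigma^2)$ to $\mathbb{E}|\mc{D}[\varepsilon]_i|=\Or(\sigma)$ is in fact more directly aligned with the stated expectation bound than the paper's ``with high probability'' phrasing.

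One small slip in your boundary remark: a \emph{three}-point one-sided second difference only approximates $h^2\phi''(m_i)$ to $\Or(h^3)$ (an odd $h^3\phi'''$ term survives), so the resulting de-averaged bias would be $\Or(\beta^3 h^3)$, not $\Or(\beta^4 h^4)$. To match the interior accuracy you need at least a four-point one-sided stencil, which is what the lemma's phrase ``one-sided fourth-order stencil'' is pointing at.
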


\begin{proof}
The averaging operator over the symmetric cell about $m_i$ has the even-derivative expansion
\begin{equation}\label{eq:avg-op}
\overline{\phi}_i \;=\; \phi(m_i)\;+\;\frac{h^2}{24}\,\phi''(m_i)\;+\;\frac{h^4}{1920}\,\phi^{(4)}(m_i)\;+\;\mc{O}(\beta^6 h^6),
\end{equation}
obtained by integrating the Taylor series of $\phi$ over $[t_i,t_{i+1}]$. Applying the central second difference operator $\Delta^2$, a discrete Taylor expansion about $m_i$ gives
\begin{equation}\label{eq:second-diff-op-new}
\Delta^2 \overline{\phi}_i
= h^2\,\phi''(m_i)\;+\;\frac{h^4}{8}\,\phi^{(4)}(m_i)\;+\;\mc{O}(\beta^6 h^6).
\end{equation}
Therefore, we have
\begin{equation}\label{eq:avg-minus-second}
\overline{\phi}_i - \frac{1}{24}\Delta^2\overline{\phi}_i
= \Big(\phi(m_i)+\frac{h^2}{24}\phi''(m_i)+\frac{h^4}{1920}\phi^{(4)}(m_i)+\mc{O}(\beta^6 h^6)\Big)
- \frac{1}{24}\Big(h^2\phi''(m_i)+\frac{h^4}{8}\phi^{(4)}(m_i)+\mc{O}(\beta^6 h^6)\Big).
\end{equation}
Because the $h^2\phi''$ terms cancel, we obtain
\begin{equation}\label{eq:avg-cancel}
\overline{\phi}_i - \frac{1}{24}\Delta^2\overline{\phi}_i
= \phi(m_i) - \frac{3}{640}h^4\,\phi^{(4)}(m_i) + \mc{O}(\beta^6 h^6)
= \phi(m_i) + \mc{O}(\beta^4 h^4).
\end{equation}

We apply the triangle equality to the systematic error terms
\begin{equation}
    \abs{b_j - \frac{1}{24} \Delta^2 b_j} \le \abs{\frac{13}{12} b_j - \frac{1}{24} b_{j - 1} - \frac{1}{24} b_{j + 1}} \le \frac{7}{6} \delta.
\end{equation}

The matrix representation of the de-averaging operation is a tridiagonal matrix $Q := \mathrm{TriDiag}(-1/24, 13/12, -1/24)$ up to some small modification according to the choice of boundary stencil. Then, the variance of the de-averaged data is
\begin{equation}
    \mathrm{Cov}(\vec{\phi}^\mathrm{mpt}) = Q^\top \mathbb{E}(\vec{\varepsilon} \vec{\varepsilon}^\top) Q \le \sigma^2 Q^\top Q, \ \text{where } G := Q^\top Q = \mathrm{PentaDiag}(1/576, -13/144, 113/96, -13/144, 1/576).
\end{equation}
Note that $\norm{G}_2 = \Or(1)$. Hence, the random error in the raw data contributes a $\Or(\sigma)$ uncertainty to the de-averaged data with high probability. The proof is complete.
\end{proof}

As a remark, we may include more neighboring data in \cref{eqn:de-averaging-stencil} to construct a higher-order stencil, which is adaptive to the order of the smooth error $\delta$. Furthermore, in a special case when $\delta = \Or(\beta^2 h^2)$, we do not need to keep finer structure in the de-averaging operation. This result is summarized in the following corollary.

\begin{cor}[Second-order de-averaging]\label{cor:second_order_deavg}
    Under the assumptions of \cref{lem:deavg}, when using the raw data as the midpoint function value $\phi^\mathrm{mpt}(m_i) := \psi_i$, we have the following holds in the interior area:
    \begin{equation}
        \max_{m_i \in \mc{I}^\circ} \mathbb{E} \abs{\psi_i - \phi(m_i)} = \Or(\beta^2 h^2) + \Or(\delta) + \Or(\sigma).
    \end{equation}
\end{cor}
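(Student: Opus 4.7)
The plan is to directly decompose the pointwise error into three contributions, each of which has already been controlled in the preceding analysis, and then bound each term separately. Write
\begin{equation*}
\psi_i - \phi(m_i) \;=\; \bigl(\bar\phi_i - \phi(m_i)\bigr) \;+\; b_i \;+\; \varepsilon_i,
\end{equation*}
using the assumed decomposition $\psi_i = \bar\phi_i + b_i + \varepsilon_i$ from \cref{lem:deavg}.

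For the first term, I would invoke the averaging-operator Taylor expansion \cref{eq:avg-op} about the midpoint $m_i$, which gives
\begin{equation*}
\bar\phi_i - \phi(m_i) \;=\; \frac{h^2}{24}\,\phi''(m_i) \;+\; \Or(\beta^4 h^4).
\end{equation*}
Since $\phi \in \mathscr{P}_\beta$ implies $\|\phi''\|_\infty \le \Or(\beta^2)$, this contributes an $\Or(\beta^2 h^2)$ systematic error. Note that, unlike the fourth-order stencil used in \cref{lem:deavg}, no cancellation of the $\tfrac{h^2}{24}\phi''$ term is attempted here, which is why the result is only second order in $h$.

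For the second term, the hypothesis $|b_j| \le \delta$ immediately gives $|b_i| \le \delta$. For the stochastic term, since $\varepsilon_i \sim N(0, \Sigma_{ii})$ with $\Sigma_{ii} \le \sigma^2$, a standard half-normal moment bound yields $\mathbb{E}|\varepsilon_i| \le \sqrt{2/\pi}\,\sigma = \Or(\sigma)$. Combining the three bounds via the triangle inequality and taking the maximum over interior midpoints $m_i \in \mc{I}^\circ$ gives the claim. The argument is essentially routine once \cref{eq:avg-op} is in hand; the only mild subtlety is that the bound on $\mathbb{E}|\psi_i - \phi(m_i)|$ must combine deterministic truncation/bias terms with the absolute first moment of a Gaussian, and I would simply apply linearity of expectation together with $\mathbb{E}|X+Y| \le |X| + \mathbb{E}|Y|$ for a deterministic $X$ and mean-zero $Y$, so no concentration machinery is required.
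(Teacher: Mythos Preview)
Your proposal is correct and matches the paper's approach: the corollary is stated without a separate proof, as it follows immediately from the averaging-operator expansion \cref{eq:avg-op} established in the proof of \cref{lem:deavg} by simply omitting the $-\tfrac{1}{24}\Delta^2$ correction, exactly as you do. The only cosmetic remark is that the bound $\mathbb{E}|X+Y| \le |X| + \mathbb{E}|Y|$ is just the triangle inequality and does not require $Y$ to be mean-zero, so that qualification is unnecessary.
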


The de-averaged midpoint data $\{\phi^{\mathrm{mpt}}(m_i)\}$ from Lemma~\ref{lem:deavg} enables us to reconstruct the pulse function by applying normal spline interpolants. By using de-averaged data, the pulse function is estimated from the spline interpolation itself, rather than its derivative. Consequently, the data locality of spline methods ensures that the pointwise estimation error is not amplified too much. We summarize this result in the following theorem.

\begin{thm}\label{thm:mpt_spline}
Let $\phi \in \mathscr{P}_\beta$ be a Bernstein-type pulse, $\mc{S}$ be a $q$-th order spline interpolant. Suppose each data point $\psi_j = \bar\phi_j + b_j + \varepsilon_j$ is subjected to a bounded error $\abs{b_j} \le \delta$ and a random normal-distributed error (not necessarily iid) $\vec{\varepsilon} \sim N(0, \Sigma), \Sigma \preceq \sigma^2 I$. Let $\Phi^\mathrm{mpt}$ be the midpoint function values by applying the de-averaging techniques in \cref{lem:deavg} or \cref{cor:second_order_deavg}. Let the interior area of the interpolation be $\mc{I}^\circ_c := [ch, T - ch]$ where $h := T / L$ and $c \ge q$ is a constant. Then, there exist $(h, \sigma)$-independent constants $C_1, C_2, C_3, C_4 > 0$, so that the pointwise interpolation error in the interior is bounded as follows:
    \begin{equation}
        \hat{\phi}(t) := \mc{S}_{[\Phi^\mathrm{mpt}]}(t),\quad \sup_{t \in \mc{I}^\circ_c} \mathbb{E} \abs{\hat{\phi}(t) - \phi(t)} \le C_1 \beta^{q+1} h^{q+1} + C_2 \delta + C_3 \sigma + C_4 \beta^\eta h^\eta.
    \end{equation}
    Here, $\eta$ is a constant depending on the choice of de-averaging method. We have $\eta = 2$ when using \cref{cor:second_order_deavg} and $\eta = 4$ when using \cref{lem:deavg}.
\end{thm}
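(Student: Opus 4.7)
The plan is to mirror the bias--variance decomposition used in the proof of \cref{thm:DS_spline_estimation_error}, but to route the raw data through the de-averaging map first. Because the de-averaging stencil $\phi^{\mathrm{mpt}}(m_i)=\sum_j Q_{ij}\psi_j$ from \cref{lem:deavg} (or \cref{cor:second_order_deavg}) is banded, and the $q$-th order spline evaluation $t\mapsto \mc{S}_{[\Phi^{\mathrm{mpt}}]}(t)$ is linear in its interpolation data, the composition writes the estimator as $\hat{\phi}(t)=\zeta^\top(t)\vec{\psi}$ with $\zeta_j(t)=\sum_i w_i(t)Q_{ij}$, where $w_i(t)$ denote the spline basis weights at $t$.

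The key technical step is to bound $\norm{\zeta(t)}_1$ and $\norm{\zeta(t)}_2$ by constants independent of $h$ and $\sigma$ on the interior region $\mc{I}^\circ_c$. I would follow the same argument used in \cref{thm:DS_spline_estimation_error}: the banded positive-definite coefficient matrix determining the spline moments has an inverse with exponential off-diagonal decay by \cite[Proposition 2.2]{demko1984decay}, and convolution with the banded operator $Q$ preserves that decay. Together with the fact that both $Q$ and the spline reproduce constants ($\Delta^2$ annihilates constants in \cref{lem:deavg}; trivial for \cref{cor:second_order_deavg}), this gives $\abs{\zeta_j(t)}\lesssim \rho^{|j-j_\star(t)|}$ for some $\rho<1$, hence $\norm{\zeta(t)}_1, \norm{\zeta(t)}_2=\Or(1)$ and $\sum_j \zeta_j(t) = 1$.

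With these norm bounds in hand, I would decompose
\begin{equation*}
\hat{\phi}(t)-\phi(t) \;=\; \bigl[\zeta^\top(t)\vec{\bar\phi}-\phi(t)\bigr] \;+\; \zeta^\top(t)\vec{b} \;+\; \zeta^\top(t)\vec{\varepsilon}.
\end{equation*}
Introducing the exact de-averaged midpoints $\wt{\phi}^{\mathrm{mpt}}_\star(m_i):=\sum_j Q_{ij}\bar\phi_j$, the first bracket splits as $[\mc{S}_{[\wt{\Phi}^{\mathrm{mpt}}_\star]}(t)-\mc{S}_{[\phi(\{m_i\})]}(t)]+[\mc{S}_{[\phi(\{m_i\})]}(t)-\phi(t)]$. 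By \cref{lem:deavg} (resp.\ \cref{cor:second_order_deavg}) and $\norm{w(t)}_1=\Or(1)$, the first of these is $\Or(\beta^\eta h^\eta)$ with $\eta\in\{2,4\}$, while the second is the standard $q$-th order spline interpolation error $\Or(\beta^{q+1}h^{q+1})$ in the interior (using the Bernstein bound $\norm{\phi^{(q+1)}}_\infty \le C\beta^{q+1}$). The systematic-bias term obeys $\abs{\zeta^\top(t)\vec{b}}\le\norm{\zeta(t)}_1\delta=\Or(\delta)$, while $\mathrm{Var}(\zeta^\top(t)\vec{\varepsilon})\le\sigma^2\norm{\zeta(t)}_2^2=\Or(\sigma^2)$, so Jensen's inequality (applied exactly as in \cref{subsec:app:proof_main_theorem}) gives $\mathbb{E}\abs{\zeta^\top(t)\vec{\varepsilon}}\le C_3\sigma$. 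Summing the four contributions yields the claim.

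The main obstacle will be the boundary bookkeeping: \cref{lem:deavg} only achieves the $\Or(\beta^4 h^4)$ rate at cells whose symmetric second-difference stencil lies in range, and the Demko--Moss--Smith decay constants for the spline inverse system degrade as one approaches the endpoints. Restricting to $\mc{I}^\circ_c$ with $c\ge q$ absorbs both issues into the constants, which is why the statement is formulated only over the interior region; retaining $q$-th order behavior up to the endpoints would require more delicate one-sided stencils and boundary-adapted spline conditions.
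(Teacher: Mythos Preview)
Your proposal is correct and matches the paper's approach: the paper does not give a standalone proof of \cref{thm:mpt_spline} but instead remarks that it follows the same structure as \cref{thm:DS_spline_estimation_error}, with the spline truncation error becoming $(q+1)$-th order (since you evaluate the spline itself rather than its derivative) and an additional $\Or(\beta^\eta h^\eta)$ term coming from the de-averaging step. Your bias--variance decomposition, the Demko--Moss--Smith decay argument for the composed stencil, and the boundary bookkeeping all align with what the paper indicates.
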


We remark that this result differs from \cref{thm:DS_spline_estimation_error} in two error components. First, the systematic error caused by the spline interpolation is $(q+1)$-th order, due to the direct evaluation rather than taking derivative. Second, the fourth term is the systematic error caused by the de-averaging operation, which can be chosen adaptively to tradeoff other errors.

\section{Surrogate Model for Learning Pulse Characteristics}
\label{sec:app:surrogate_model}

In the last sections, we see that the total reconstruction error depends on the choice of spline interpolant, the error strength $\delta$ in the raw data $(\psi_1, \cdots, \psi_L)$, and the random uncertainty in the raw data $\sigma$. The magnitude of $\delta$ depends essentially on the choice of surrogate model and the learning procedure. In this section, we analyze a simple surrogate model which connects closely to the structure of a quantum algorithm called Quantum Signal Processing (QSP).

When keeping the Magnus expansion to the first order, according to \cref{sec:app_analytical_properties}, we have
\begin{equation}
    V(\theta, \bar\phi_j) := e^{\Omega_1(t_j + h, t_j; \omega)} = e^{- \I \bar\phi_j / 2 Z} e^{-\I \theta X} e^{\I \bar\phi_j / 2 Z}, \text{ where } \theta := \omega h \text{ and } \bar\phi_j := \frac{1}{h} \int_{t_j}^{t_j + h} \phi(s) \ud s .
\end{equation}

The truncation error of this surrogate model is
\begin{equation}
    \norm{e^{\Omega(t_j + h, t_j; \omega)} - V(\theta, \bar\phi_j)} \le \Or(\beta \omega^2 h^3) \le \Or(\beta h) \text{ when } \theta = \omega h = \Theta(1).
\end{equation}

The surrogate model matrix $V$ coincides with the primitive used in QSP. Combining these matrices in subsequent time intervals can form a long-time surrogate model. According to the theory of QSP in the literature \cite{Haah2019,WangDongLin2022}, this model has a clean matrix-valued Fourier representation
\begin{equation}\label{eqn:surrogate_model_QSP}
    V(\theta, \bar\phi_L) \cdots V(\theta, \bar\phi_1) = \sum_{j = -L}^L C_j e^{\I j \theta} \text{ where } C_j \in \CC^{2 \times 2}.
\end{equation}
This Fourier structure gives us a very elegant approach to process experimental data in Fourier space. We systematically study this Fourier-space data post-processing method in \cref{sec:app_Fourier_analysis_post_processing}.

Keeping the higher-order Magnus terms can reduce the truncation error of the surrogate model. However, these terms has a nonlinear dependency in $\theta$, i.e., $\Omega_k \sim \theta^k$. Consequently, the short-time surrogate model matrices have more complex exponents. These exponents contain nonlinear $\theta$ terms, which represent higher-order refinement to the accuracy of the surrogate model. When similarly assembling the long-time surrogate model, we no longer have the simple Fourier structure like that in \cref{eqn:surrogate_model_QSP}. Moreover, the $\phi$-dependency also becomes convoluted in these high-order correction terms due to the multi-fold integral involved in Magnus expansion. Consequently, more complex data post-processing methods are required. This might be addressed by using advanced optimization methods to solve surrogate parameters. We leave this as a direction for future work.

Because our simple surrogate model only has a first-order truncation accuracy, we might think the end-to-end pulse reconstruction error is bottlenecked at the first-order error $\Or(h)$. However, because of the smoothness of this truncation error, we are able to lift the overall end-to-end pulse reconstruction accuracy to second order $\Or(h^2)$ using Richardson extrapolation. We discuss this technique in \cref{sec:ho_bias_reduction}.

\section{Enhancing End-to-End Pulse Reconstruction Accuracy via Richardson Extrapolation}
\label{sec:ho_bias_reduction}

As outlined in \cref{sec:app:surrogate_model}, the accuracy of the surrogate model in \cref{eqn:surrogate_model_QSP} is first-order. This gives $\delta = \Or(\beta h)$, which seemingly implies that the end-to-end pulse reconstruction accuracy is also first-order according to the analysis in \cref{sec:app:reconstruct_analog}. In this section, we present a method lifting the end-to-end pulse reconstruction error to second-order via Richardson Extrapolation (RE).

We start from a lemma regarding the performance guarantee of RE.

\begin{lem}\label{lem:RE_performance}
    Suppose $h$ is a parameter of step size, and we have an estimation method $F_h(t)$. Let the performance of the estimator be factored in the following parts:
    \begin{equation}
        F_h(t) = f(t) + b(t) h + z(t, h) h^\gamma + \varepsilon + \Or(h^2), \ \forall t \in \mc{I}^\circ
    \end{equation}
    for some interior area $\mc{I}^\circ$. Here, $b(t)$ is a bounded error coefficient, $z(t, h)$ stands for a non-smooth bounded systematic error, and $\varepsilon \sim N(0, \sigma^2)$ is a random error. Furthermore, there is no neighborhood of $h = 0$ on which $z(t, h)$ extends to a differentiable (or even continuous) function in $h$. That means $z(t, h)$ does not admit a Taylor expansion in $h$. We have the following performance guarantee holds for the RE estimator
    \begin{equation}\label{eqn:RE_estimator_and_performance}
        F^\mathrm{RE}_h(t) = 2 F_{h/2}(t) - F_h(t), \quad \max_{t \in \mc{I}^\circ} \mathbb{E} \abs{F^\mathrm{RE}_h(t) - f(t)} \le \Or(h^2) + \Or(h^\gamma) + \Or(\sigma).
    \end{equation}
\end{lem}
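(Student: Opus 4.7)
\textbf{Proof proposal for Lemma~\ref{lem:RE_performance}.} The plan is to substitute the hypothesized decomposition of $F_h$ directly into the Richardson combination $F_h^{\mathrm{RE}}(t) = 2 F_{h/2}(t) - F_h(t)$ and analyze the resulting four contributions separately: the linear bias, the smooth $\Or(h^2)$ remainder, the non-smooth $z$-term, and the stochastic error. Writing
\begin{equation*}
F_h(t) - f(t) = b(t)\,h + z(t,h)\,h^\gamma + \varepsilon_h + r_h(t),\qquad r_h(t) = \Or(h^2),\ \varepsilon_h \sim N(0,\sigma^2),
\end{equation*}
and linearly combining, I obtain
\begin{equation*}
F_h^{\mathrm{RE}}(t) - f(t) \;=\; \underbrace{\bigl(2\cdot \tfrac{h}{2} - h\bigr) b(t)}_{=\,0} \;+\; \bigl(2\,z(t,h/2)(h/2)^\gamma - z(t,h)\,h^\gamma\bigr) \;+\; \bigl(2\varepsilon_{h/2} - \varepsilon_h\bigr) \;+\; \bigl(2 r_{h/2}(t) - r_h(t)\bigr).
\end{equation*}
The exact cancellation of the linear bias is the central feature of Richardson extrapolation and is what the whole argument hinges on.

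Next I would bound the three surviving pieces. The smooth remainder obeys $|2r_{h/2}(t) - r_h(t)| \le 2\cdot C (h/2)^2 + C h^2 = \tfrac{3}{2} C h^2 = \Or(h^2)$ uniformly on $\mc{I}^\circ$. For the non-smooth term, since $z$ is uniformly bounded by hypothesis and no Taylor cancellation is available in $h$, I would simply use the triangle inequality:
\begin{equation*}
\bigl|2 z(t,h/2)(h/2)^\gamma - z(t,h)\,h^\gamma\bigr| \;\le\; \bigl(2^{1-\gamma} + 1\bigr)\,\|z\|_\infty\, h^\gamma \;=\; \Or(h^\gamma).
\end{equation*}
For the stochastic contribution, assuming the measurement noises $\varepsilon_{h/2}$ and $\varepsilon_h$ come from independent experimental runs (as is natural in the pipeline of Fig.~\ref{fig:main_figure}), the combination $2\varepsilon_{h/2} - \varepsilon_h$ is centered Gaussian with variance at most $5\sigma^2$, so by Jensen's inequality $\mathbb{E}|2\varepsilon_{h/2} - \varepsilon_h| \le \sqrt{5}\,\sigma = \Or(\sigma)$.

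Combining via the triangle inequality and taking the supremum over $t \in \mc{I}^\circ$ yields the advertised bound $\max_{t\in\mc{I}^\circ}\mathbb{E}|F_h^{\mathrm{RE}}(t) - f(t)| \le \Or(h^2) + \Or(h^\gamma) + \Or(\sigma)$. There is no real obstacle in the argument; the only conceptually important point, which should be emphasized explicitly, is that RE eliminates only Taylor-expandable components in $h$. The term $z(t,h)h^\gamma$ is assumed to be non-analytic at $h=0$, so no difference formula can reduce its order, and thus an irreducible $\Or(h^\gamma)$ residue must appear in the final bound. This precisely explains why, in the context of Theorem~\ref{thm:end-to-end-algorithm}, RE upgrades the smooth first-order Magnus bias from $\Or(h)$ to $\Or(h^2)$ but cannot suppress non-smooth contributions arising elsewhere in the pipeline.
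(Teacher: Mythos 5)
Your proof is correct and supplies exactly the details the paper elides: the paper states only that ``the proof of the lemma is straightforward'' and goes on to discuss the consequences, so the expansion $2F_{h/2}-F_h$ with term-by-term cancellation of $b(t)h$, the $\Or(h^2)$ bound on the combined remainder, the triangle-inequality bound $(2^{1-\gamma}+1)\norm{z}_\infty h^\gamma$ on the non-smooth piece, and Jensen for the noise are precisely the ``straightforward'' argument the authors have in mind. One small remark: your independence assumption on $\varepsilon_{h/2}$ and $\varepsilon_h$ (giving the variance $5\sigma^2$) is natural but not needed --- the triangle inequality $\mathbb{E}\abs{2\varepsilon_{h/2}-\varepsilon_h}\le 2\mathbb{E}\abs{\varepsilon_{h/2}}+\mathbb{E}\abs{\varepsilon_h}\le 3\sqrt{2/\pi}\,\sigma$ gives $\Or(\sigma)$ under arbitrary dependence, so the lemma holds without that hypothesis.
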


The proof of the lemma is straightforward. We remark that \cref{lem:RE_performance} implies that systematic error that admits a smooth coefficient can be improved by using RE; however, systematic errors without a smooth coefficient cannot be improved via RE. In our application, these non-smooth errors include truncation errors from spline interpolations. To understand this rigidity, consider the simplest linear interpolation on the interval $[t_0, t_0 + h]$. Taylor expanding, we have
\begin{equation}
    F^\mathrm{lin}_h(t) = f(t_0)\frac{t_0 + h-t}{h} + f(t_0 + h) \frac{t - t_0}{h} = f(t) \underbrace{- \frac{1}{2} f^{\prime\prime}(t) \eta_h(t) (1 - \eta_h(t))}_{z(t, h)} h^2 + \Or(h^3)
\end{equation}
where
\begin{equation}
    \eta_h(t) = \mathrm{frac}\left(\frac{t - t_0}{h}\right) = \frac{t - t_0}{h} - \left[\frac{t - t_0}{h}\right] \in [0, 1).
\end{equation}
When extending to the piecewise linear interpolation to the full interval, we see that $\eta_h(t) (1 - \eta_h(t))$ is not differentiable. The factor $\eta_h(t)$ similarly involves in higher-order spline interpolation method. Consequently, the systematic truncation error due to spline interpolation is not improvable in RE. However, we can avoid this issue by choosing a spline method with sufficiently high error order. In this work, using cubic spline interpolation suffices.

It remains to show that the truncation error caused by the surrogate model is smooth. The Magnus expansion forms a smooth expansion series in $h$. Hence, the truncation error in the exponential generator is smooth. The mappings between Lie group and Lie algebra are smoothly defined by matrix exponential and logarithm \cite{hall2013lie}. Consequently, the truncation error in the surrogate model propagates to a smooth error in the raw data. Then, applying \cref{lem:RE_performance} gives the following estimation with improved accuracy.

\begin{thm}[Pulse reconstruction method with improved accuracy via Richardson extrapolation]\label{thm:Richardson_extrapolation_estimation_error_bound}
    Assume $\hat{\phi}_h(t)$ is the reconstructed pulse function derived from \cref{sec:app:reconstruct_analog} with cubic splines and step size $h$. Assume the raw data is learned from the surrogate model in \cref{eqn:surrogate_model_QSP} in \cref{sec:app:surrogate_model}. Let $\sigma$ be the standard deviation of the random error in the raw data. Then, there exist two $(h, \sigma)$-independent constants $C_1, C_2$ so that the following pulse reconstruction error holds:
    \begin{equation}
        \hat{\phi}^\mathrm{RE}_h(t) = 2 \hat{\phi}_{h/2}(t) - \hat{\phi}_h(t), \quad \sup_{t \in \mc{I}^\circ} \mathbb{E} \abs{\hat{\phi}^\mathrm{RE}_h(t) - \phi(t)} \le C_1 \beta^2 h^2 + C_2 \sigma.
    \end{equation}
\end{thm}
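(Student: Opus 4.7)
The plan is to verify the hypotheses of \cref{lem:RE_performance} for the reconstructed pulse $\hat{\phi}_h(t)$ and then apply it directly. Concretely, I need to exhibit a decomposition
\begin{equation*}
\hat{\phi}_h(t) = \phi(t) + b(t)\, h + z(t, h)\, h^\gamma + \varepsilon(t) + \Or(h^2),
\end{equation*}
valid for all $t$ in the interior region $\mc{I}^\circ$, in which $b(t)$ is a bounded function that is \emph{smooth in $h$} (i.e., independent of $h$ or admitting a Taylor expansion in $h$), $z(t, h)$ absorbs all non-smooth systematic contributions with $\gamma \ge 2$, and $\varepsilon(t)$ is mean-zero with standard deviation $\Or(\sigma)$. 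Once this form is established, the RE identity in \cref{eqn:RE_estimator_and_performance} cancels the linear-in-$h$ bias, reduces the non-smooth contribution to $(h/2)^\gamma$ and $h^\gamma$ (both $\Or(h^2)$ when $\gamma \ge 2$), and a linear combination of two Gaussian error terms remains Gaussian with standard deviation $\Or(\sigma)$.

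The first step is to track smoothness through the surrogate-model step. By construction (\cref{sec:app:surrogate_model}), the truncation of the Magnus expansion at first order produces a per-segment discrepancy $e^{\Omega(t_j+h,t_j;\omega)} - V(\theta,\bar\phi_j)$ whose generator is the remaining Magnus tail $\sum_{k\ge 2}\Omega_k(h)$. Each $\Omega_k(h)$ is given by nested integrals of $A(s;\omega)$ (see \cref{eqn:recursive_definition_Omega_n}); hence it is real-analytic in $h$. Since the Lie-group/Lie-algebra correspondence on $\mathrm{SU}(2)$ is analytic in a neighborhood of the identity, the induced error in the raw data admits a smooth Taylor expansion $\psi_j - \bar\phi_j = a_1(\omega,t_j)\, h + a_2(\omega,t_j)\, h^2 + \Or(h^3)$, in which the coefficients inherit smoothness from $\phi$.

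The second step is to propagate this smooth error structure through the post-processing pipeline. The Fourier-based phase-factor extraction in \cref{eqn:Fourier_learning_P} is built from polynomial operations on the Fourier coefficients $\{C_j\}$ and divisions by quantities bounded away from zero in our regime, hence is smooth (even analytic) in the input data in a neighborhood of the exact values. The de-averaging stencil from \cref{lem:deavg} and the cubic spline interpolation from \cref{thm:mpt_spline} are both \emph{linear} in the data $\{\psi_j\}$, so the smooth linear-in-$h$ bias of $\psi_j$ transports into a smooth linear-in-$h$ coefficient $b(t)$ for the reconstructed pulse. The only non-smooth-in-$h$ contribution comes from the midpoint-versus-continuum interpolation error of cubic splines, which by \cref{thm:mpt_spline} scales as $\Or(\beta^4 h^4)$; this fits the $z(t,h) h^\gamma$ slot with $\gamma = 4 \ge 2$ and is thus absorbed.

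The main obstacle is ensuring that no step in the pipeline secretly introduces a non-smooth-in-$h$ term at order $h$ or lower, which would survive RE and destroy the $\Or(h^2)$ bound. The delicate part is item (ii) above, namely the Fourier inversion that recovers $\{\psi_j\}$ from the experimental unitaries: I would justify smoothness by noting that, in the absence of sampling noise, the mapping from the true unitaries to the QSP phase factors coincides (for sufficiently regular $\phi$) with an analytic branch of the phase-factor solution guaranteed by \cite{Haah2019,WangDongLin2022}, and the perturbation of its input induced by Magnus truncation lies in a neighborhood where this branch is single-valued and analytic. Once smoothness is confirmed along each link, \cref{lem:RE_performance} yields the stated bound with $C_1$ absorbing the Magnus, phase-extraction, and spline constants, and $C_2$ absorbing the $\Or(1)$ norm amplification of RE acting on the Gaussian statistical error.
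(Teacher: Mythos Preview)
Your proposal is correct and follows essentially the same approach as the paper: verify the hypotheses of \cref{lem:RE_performance} by arguing that the Magnus-truncation bias is smooth in $h$ (via analyticity of the Magnus series and of the Lie-group/Lie-algebra correspondence), while the only non-smooth systematic contribution is the spline interpolation error, which for cubic splines sits at order $h^4 \ge h^2$ and is therefore harmless. You go further than the paper in explicitly tracing smoothness through the Fourier-based phase-factor extraction and the linear de-averaging/spline operators; the paper's own argument is terser and simply asserts that the smooth Magnus error ``propagates to a smooth error in the raw data'' without dissecting the intermediate Fourier step.
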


In \cref{fig:pulse_learning_RE_pulse_shape}, we visualize the reconstruction error in the absence of measurement noise and quantum errors. In this setting, the error reflects only the systematic bias. The results show that increasing $L$ substantially mitigates the bias in the interior region, while the improvement near the boundaries is more limited. This behavior is consistent with our theoretical analysis.

\begin{figure}[htbp]
    \centering
    \includegraphics[width=0.7\linewidth]{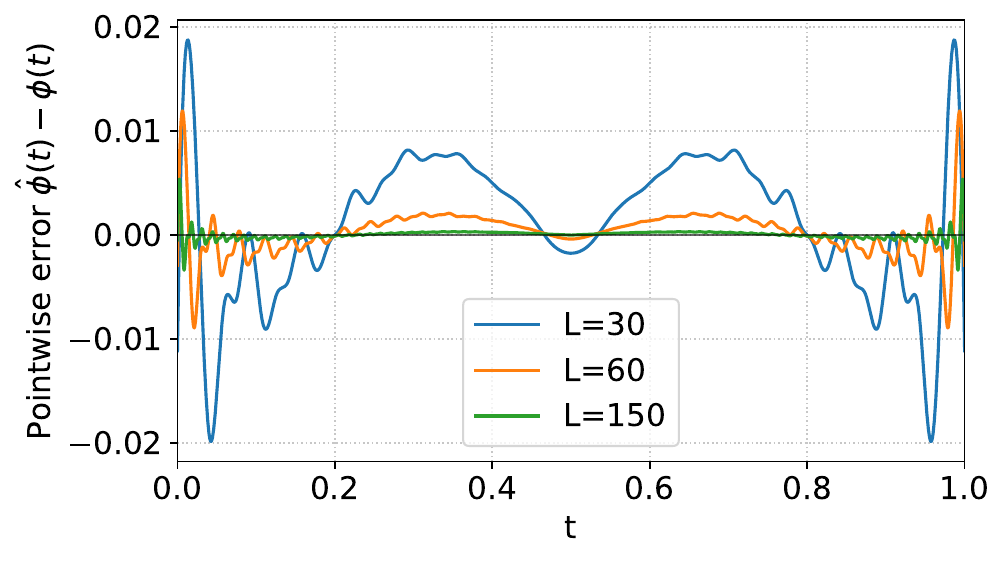}
    \caption{Pointwise reconstruction error of the pulse. The error profiles correspond to the example in \cref{fig:scaling_L} with $\phi(t) = \sin(3\pi t)$. The results obtained using the ``direct'' reconstruction method based on \cref{cor:second_order_deavg} are visualized.}
    \label{fig:pulse_learning_RE_pulse_shape}
\end{figure}

\section{Analysis of Fisher Information}\label{sec:fisher_info}

In the previous sections, our analysis mainly focuses on the order of systematic errors in terms of the scaling of the step size parameter $h = T / L$. Another major factor involved in the error bounds is the random uncertainty parameter $\sigma$. Because each quantum experiment is measured with only finitely many measurement samples, the experimentally derived unitary data are subject to random fluctuations rather than coinciding with the exact values from direct calculation. Such randomness propagates into the uncertainty of the pulse function estimation. To quantify this uncertainty, we comprehensively analyze the Fisher information of the estimation problem in this section. We begin with the structure of the Fisher information. Then, we analyze an interesting phase transition in its singularity when the sampling range in $\theta := \omega h$ does not fully cover the first quadrant.

\subsection{Fisher information matrix}
Fisher information can be used to lower bound the best estimation variance through the Cram\'er-Rao lower bound. In particular, it provides a proxy for how much information each measurement has. In this subsection, we derive its expression and structure in our setup.

Recall that $\theta := \omega T / L$ and our building block is $V(\theta, \psi) = e^{-\I \theta (\cos(\psi) X + \sin(\psi) Y)}$. Let $\sigma_\psi := \cos(\psi) X + \sin(\psi) Y$.
First, the following identity holds
\begin{equation}
    V(\theta, \psi) = e^{- \I \theta \sigma_\psi} = \cos \theta - \I \sin \theta \cdot \sigma_\psi = \begin{pmatrix}
        \cos \theta & - \I e^{- \I \psi} \sin \theta\\
        - \I e^{\I \psi} \sin \theta & \cos \theta
    \end{pmatrix} = e^{- \I \psi / 2 Z} e^{- \I \theta X} e^{\I \psi / 2 Z}.
\end{equation}

Given a sequence of digital phase factors $\Psi := (\psi_1, \cdots, \psi_L) \in \RR^L$, the combined time evolution matrix is
\begin{equation}
    W(\theta; \Psi) = \prod_{j = L}^1 V(\theta, \psi_j) = e^{- \psi_L / 2 Z} \prod_{j = L - 1}^{1} \left( e^{- \I \theta X} e^{\I (\psi_{j+1} - \psi_j) / 2 Z} \right) e^{- \I \theta X} e^{\I \psi_1 / 2 Z} = \prod_{j = L}^1 (\cos\theta - \I \sin\theta \sigma_\psi).
\end{equation}
Expanding the right-most expression, we see that the output of such a matrix-product sequence is a parametric family of trigonometric polynomials
\begin{equation}
    W(\theta ; \Psi) = \begin{pmatrix}
        a(\theta; \Psi) + \I b(\theta; \Psi) & c(\theta; \Psi) + \I d(\theta; \Psi) \\
        - c(\theta; \Psi) + \I d(\theta; \Psi) & a(\theta; \Psi) - \I b(\theta; \Psi)
    \end{pmatrix}.
\end{equation}
Here, $a, b, c, d$ are real trigonometric polynomials in $\theta$ which are parametrized by $\Psi$. 

Due to experimental fluctuation, each entry of the unitary is subjected to a complex normal distributed error $\delta W_{ij} \sim CN(0, 1/M)$ where $M$ is the number of measurement samples per experiment. Equivalently, the experimental unitary can be identified with a multivariate normal distribution $N((a, b, c, d)^\top, 1/M I_4)$. Following the standard Fisher information expression of the multivariate normal distribution, we have
\begin{equation}\label{eqn:Fisher-info-element}
    {F}_{ij}(\theta; \Psi) := M \sum_{f \in \{a, b, c, d\}} \frac{\partial f(\theta; \Psi)}{\partial \psi_i} \frac{\partial f(\theta; \Psi)}{\partial \psi_j} = M \Re\left(\braket{0 | \frac{\partial W(\theta; \Psi)}{\partial \psi_i} \frac{\partial W^\dagger(\theta; \Psi)}{\partial \psi_j} | 0}\right) .
\end{equation}

The last equality can be derived from direct calculation. The symmetry of $(i, j)$ indices can be seen from the fact that the right-hand side is real. Hence, we only need to consider the case where $i \le j$ as ${F}_{ij} = {F}_{ji}$. 

The exact form of the Fisher information involves convoluted relations, which are hard to analyze. To understand the structure of Fisher information, let us consider a special case where all phase parameters coincide as $\Psi^\psi_j = \psi_j = \psi \forall j$ and this equality is not known a priori. Direct calculation gives the following:
when $i = j$, we have
\begin{equation}
    {F}_{ii}(\theta; \Psi) = M \sin^2(\theta),
\end{equation}
and the off-diagonal elements are
\begin{equation}
    \begin{split}
       {F}_{ij}(\theta; \Psi^\psi) &= - \frac{M}{4} \Re\left( \braket{0 | e^{2 \I (i - j + 1) \theta \sigma_\psi} + e^{2 \I (i - j - 1) \theta \sigma_\psi} - 2 e^{2 \I (i - j) \theta \sigma_\psi} | 0} \right)\\
        &\stackrel{\star}{=} - \frac{M}{4} \left( \cos(2(i-j+1)\theta) + \cos(2(i-j-1)\theta) - 2 \cos(2(i-j)\theta) \right)\\
        &= \frac{M}{2} \cos(2(i-j)\theta) \left( 1 - \cos(2 \theta) \right)\\
        &= M \sin^2(\theta) \cos(2(i-j)\theta).
    \end{split}
\end{equation}

Equality $\star$ uses relations $e^{\I \theta \sigma_\psi} = \cos(\theta) + \I \sin(\theta) \sigma_\psi$ and $\braket{0 | \sigma_\psi | 0} = 0$. This structure indicates that $F$ and their linear combinations Toeplitz matrices. 

In the next subsection, we analyze the property of the Fisher information for this special class of phase factors $\Psi^\psi$ and show that there is an interesting phase transition in terms of the singularity of the Fisher information matrix.

\subsection{A phase transition of Fisher information for a special class of phase factors}\label{subsec:fisher_info_phase_transition}

Let $\nu$ be the maximum $\theta$ value in a set of experiments. When a set of equidistant $\theta$-values $\{\theta_n := \nu n / N : n = 1, \cdots, N\}$ is used and each $\theta$-experiment is measured $M$ times, the total Fisher information matrix (FIM) is
\begin{equation}\label{eq:Fisher_info_finite_diff_K}
\begin{split}
    \mc{F}_{ij}(\Psi^\psi) &= M \sum_{n = 1}^N F_{ij}(\theta_n; \Psi^\psi) = M \sum_{n = 1}^N \sin^2(\theta_n) \cos(2(i-j)\theta_n)\\
    &\stackrel{d = \abs{i - j}}{=} M \sum_{n = 1}^N \frac{1}{2} \cos(2d\theta_n) - \frac{1}{4} \cos(2(d + 1)\theta_n) - \frac{1}{4} \cos(2(d-1)\theta_n) \\
    &= \frac{M}{4} \left(2 K(d) - K(d + 1) - K(d - 1)\right) =: s_d
\end{split}
\end{equation}
where
\begin{equation}
    K(d) = \sum_{n = 1}^N \cos(2 d \theta_n) = \cos\left(\frac{(N+1) d \nu}{N}\right) \frac{\sin(d\nu)}{\sin(d\nu / N)}.
\end{equation}
Hence, the Fisher Information is a Toeplitz matrix whose first row is $(s_0, s_1, \cdots, s_{L-1})$. 

To investigate the spectral properties of the FIM, we numerically compute its determinant. The determinant of the FIM (DFI) quantifies the information volume on the statistical manifold. As shown in \cref{fig:phase_transition_DFI}, the DFI exhibits a sharp transition. When the maximum sampling-range parameter $\nu$ is below a certain threshold, the DFI remains vanishingly small. 
However, it suddenly increases to a significantly nonzero value and stabilizes once $\nu \ge \pi/2$. A vanishing DFI implies that the information volume represented by the FIM is nearly zero, meaning that little information can be effectively extracted from the experiments. To understand this sharp phase transition, we analyze the upper bounds of the FIM in the remainder of this subsection.

\begin{figure}[htbp]
    \centering
    \includegraphics[width=0.5\linewidth]{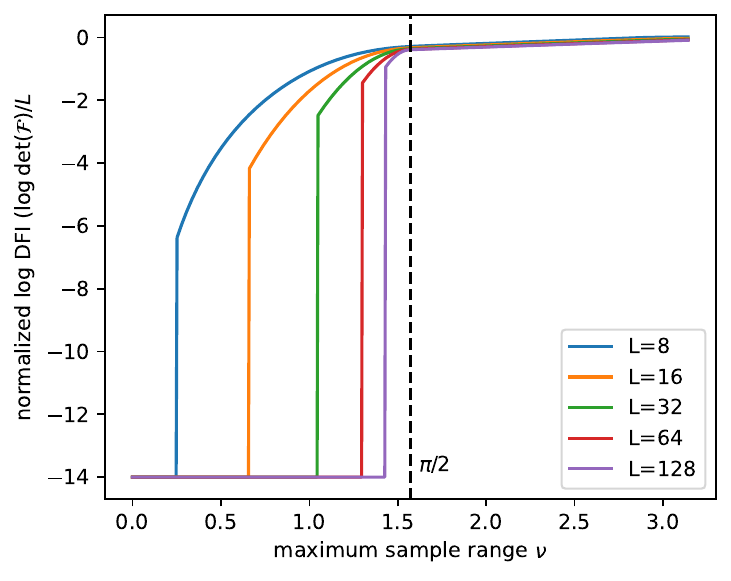}
    \caption{Phase transition of the DFI. The y-axis represents a normalized DFI metric with the inflation effect due to $L$ mitigated. Values below $10^{-14}$ are truncated for clarity. The parameter $M$ is set to $1$ for the simplicity of visualization.}
    \label{fig:phase_transition_DFI}
\end{figure}

The eigenspectrum of the FIM can be estimated by embedding it into a larger circulant matrix. Let $\mc{C}$ be a $(2L-2)\times(2L-2)$ circulant matrix whose first row is $(s_0, \cdots, s_{L - 2}, s_{L-1}, s_{L - 2}, \cdots, s_1)$. Due to the periodicity, the circulant matrix $\mc{C}$ can be diagonalized explicitly using Discrete Fourier Transformation (DFT). So its eigenvalue is
\begin{equation}\label{eq:circulant_eigenvalue}
    \Lambda_k(\mc{C}) = s_0 + 2 \sum_{d = 1}^{L - 2} s_d \cos\left( d \frac{2 \pi k}{2L - 2} \right) + s_{L - 1} (-1)^k, \ k = 0, 1, \cdots, 2L-3.
\end{equation}
Direct computation shows that $\mc{F}(\Psi)$ is the upper-left $L\times L$ submatrix of $\mc{C}$. Let $x \in \RR^L$ be any normalized vector and $\wt{x} := (x, 0_{L-2})^\top \in \RR^{2L-2}$ be its extension. We have
\begin{equation}
    x^\top \mc{F} x = \wt{x}^\top \mc{C} \wt{x} \le \sup_{y \in \RR^{2L-2}: \norm{y} = 1} y^\top \mc{C} y.
\end{equation}
Consequently, the maximum eigenvalues of these two matrices are bounded as follows
\begin{equation}
    \max_{k = 0, \cdots, L - 1} \Lambda_k(\mc{F}) \le \max_{k = 0, \cdots, 2L-3} \Lambda_k(\mc{C}).
\end{equation}
Hence, we upper bound the Fisher Information. Using Cram\'{e}r-Rao lower bound, we lower bound the variance of the phase-factor estimator as follows. Suppose the bias of the phase-factor estimator is sufficiently small, which holds in our case. For any unit coefficient vector $c \in \RR^L$ and the estimator of the phase factors $\hat{\Psi}^\psi$, we have
\begin{equation}
    \mathrm{Var}(c^\top \hat{\Psi}^\psi) \ge c^\top \mc{F}^{-1} c \ge \frac{1}{\max_{k = 0, \cdots, 2L-3} \Lambda_k(\mc{C})}. 
\end{equation}

We discuss two cases.
\begin{enumerate}
    \item \emph{Case 1: $\nu \ll 1$.} Applying Taylor expansion with respect to $\nu$ around zero, we have
    \begin{equation}
        K(d) = N -\frac{2N^2+3N+1}{3N}\,d^{2}\nu^{2} + \Or(\nu^4).
    \end{equation}
    Then
    \begin{equation}
        s_d =\frac{M}{4} \frac{2(2N^2+3N+1)}{3N}\,\nu^2 + \Or(\nu^4) \sim \frac{1}{3} M N \nu^2.
    \end{equation}
    The maximum Fisher Information is bounded as 
    \begin{equation}
        \mathrm{FI}_{\max} \le \max_{k = 0, \cdots, 2L-3} \Lambda_k(\mc{C}) \le \abs{s_0} + \abs{s_{L-1}} + 2 \sum_{d = 1}^{L - 2} \abs{s_d} \sim \frac{2}{3} M N L \nu^2.
    \end{equation}
    Specifically, when $N = \Theta(L)$ and $\nu = \omega_{\max} / L \sim 1 / L$, we have $\mathrm{FI}_{\max} \le \Or(M)$. Because it is $L$-independent, more eigenvalues will lie within the band $[0, \mathrm{FI}_{\max}]$. Hence, it is more likely that zero eigenvalues may occur, leading to a vanishing DFI.

    \item \emph{Case 2: $\nu = \Theta(1)$ is constantly large, and $N \gg L$.} In this case, we can approximate the expression as 
    \begin{equation}
        K(d) = \frac{d \nu / N}{\sin(d \nu / N)} N \frac{\cos(\sin(d \nu) \cos((1 + 1/ N) d \nu)}{d \nu} \approx N \frac{\sin(2 d \nu)}{2 d \nu}. 
    \end{equation}
    This function decays as $d$ increases.
    Extending the DFT in \cref{eq:circulant_eigenvalue} to be continuous, we have
    \begin{equation}
    \begin{split}
        \mc{K}(\omega) &= K(0) + 2 \sum_{d = 1}^{L-2} K(d) \cos(d \omega) + K(L - 1) \cos((L - 1) \omega)\\
        &\approx \int_0^{L - 1} K(x) \cos(\omega x) \ud x \approx \frac{N}{2\nu}\int_0^\infty \frac{\sin x}{x} \cos\left(\frac{\omega}{2\nu} x\right) \ud x = \frac{N}{2 \nu} \left( \pi \mathbb{I}_{0 \le \omega < 2 \nu} + \frac{\pi}{2} \mathbb{I}_{\omega = 2 \nu}  \right).
    \end{split}
    \end{equation}
    To form $\{s_d\}$, the defining equation \cref{eq:Fisher_info_finite_diff_K} applies a discrete central second-order difference stencil to the sequence $\{K(d)\}$. Hence, under DFT, the stencil operator becomes a multiplicative factor. Consequently, when extended to continuous regime, the eigenvalue in \cref{eq:circulant_eigenvalue} becomes
    \begin{equation}
        \Lambda_\omega(\mc{C}) = M \sin^2\left( \frac{\omega}{2} \right) \mc{K}(\omega).
    \end{equation}
    Then, the maximum Fisher Information is bounded as
    \begin{equation}\label{eqn:max_FI_large_nu}
        \mathrm{FI}_{\max} \le \sup_\omega \Lambda_\omega(\mc{C}) \le \frac{\pi}{2 \nu} \sin^2(\nu) M N = \Or(M N).
    \end{equation}
    Unlike Case 1, $\mathrm{FI}_{\max}$ grows linearly with $L$, providing more room for an increasing number of eigenvalues. Consequently, a vanishing DFI is less likely to occur.
\end{enumerate}

Although in case 2, we assume that $N$ is much larger than $L$, the derived bound in \cref{eqn:max_FI_large_nu} remains very tight when $N=\Theta(L)$. The tightness is observed numerically in \cref{fig:Fisher_info_upper_bound}(b). In \cref{fig:Fisher_info_upper_bound}(a), we numerically justify the bound in case 1 when a small $\nu$ value is assumed. The exact maximum Fisher Information value is half of the derived bound. The reason for this factor of two is that we embed the Toeplitz FIM into a large circulant matrix whose size nearly doubles. Alternatively, we may use techniques from Ref. \cite{gray2006toeplitz} to approximately calculate the Toeplitz eigenvalues with a circulant matrix of the same size. Although this alternative mapping agrees only in the asymptotic regime, it would bridge the constant factor-of-two gap.

\begin{figure}[htbp]
  \centering
  \subfloat[]{
    \includegraphics[width=0.45\linewidth]{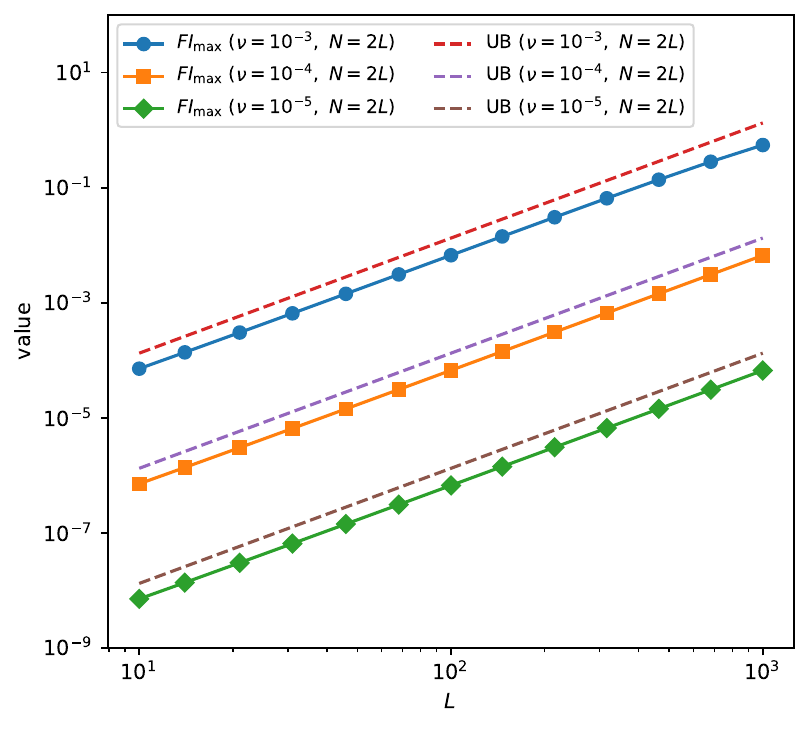}
  }\hfill
  \subfloat[]{
    \includegraphics[width=0.45\linewidth]{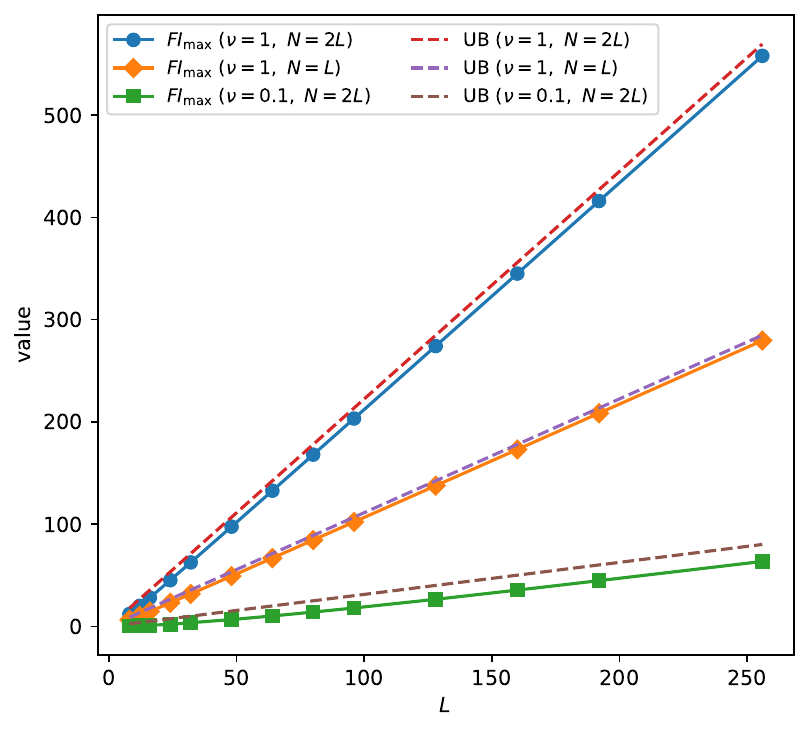}
  }
  \caption{Numerical validation of the derived upper bounds on maximum Fisher Information. (a) Case 1 with small $\nu$ values. The upper bound is $\mathrm{UB} = (2/3) MNL\nu^2$ (b) Case 2 with constantly large $\nu$ values. The upper bound is $\mathrm{UB} = (\pi / (2 \nu)) \sin^2(\nu) MN$. The parameter $M$ is set to one. }
  \label{fig:Fisher_info_upper_bound}
\end{figure}

It is worth noting that when $\nu$ is small, the tight upper bound indicates that the FIM may approach zero, as all eigenvalues are compressed into a narrow interval $[0, \mathrm{UB}]$. According to the theoretical bounds and numerical results in \cref{fig:Fisher_info_upper_bound}, this ill-conditioning can be alleviated as $\nu$ increases. These observations explain the phase transition observed in the DFI behavior.

\subsection{An exactly solvable case when $\nu$ is an integer multiple of $\pi/2$}\label{subsec:Fisher_info_exact_solvable}
The case considered in the previous subsection is exactly solvable when $\nu = r \pi / 2$ with $r \ge 1 \in \ZZ$. Let sample points be the midpoint values on the interval $[0, \nu]$, namely, $\theta \in \{(2n+1)\nu / (2N) : n = 0, \cdots, N - 1\}$. We can also write these points be the set of midpoints on each subinterval of length $\pi / 2$
\begin{equation}\label{eq: point distance}
    \theta_{k, n} = \frac{(2n+1) r \pi}{4 N} + \frac{k \pi}{2},\text{where } k = 0, \cdots, r - 1, \text{ and } n = 0, \cdots, \frac{N}{r} - 1.
\end{equation}
This corresponds to the choice of Chebyshev-Gauss quadrature on each subinterval. When $N = r \lceil L + 1/2 \rceil$, following discrete orthogonality, the following identity holds
\begin{equation}
\begin{split}
    \mc{F}_{ij}(\Psi^\psi) &= M \sum_{k, n} \sin^2(\theta_{k, n}) \cos(2(i-j)\theta_{k, n}) = \frac{2 MN}{r \pi} \int_0^{r \pi / 2} \sin^2(\theta) \cos(2(i-j)\theta) \ud \theta \\
    &= \left\{ \begin{array}{ll}
        M N / 2 & \text{ if } i = j, \\
        - M N / 4 & \text{ if } \abs{i - j} = 1,\\
        0 & \text{otherwise}.
    \end{array}\right.
\end{split}
\end{equation}

Thus, the FIM is a tridiagonal Toeplitz matrix. The inverse of such a discrete Laplacian matrix is well-studied, such as in Ref. \cite{Kay1989}. The element of the inverse FIM is
\begin{equation}
    [\mc{F}^{-1}(\Psi^\psi)]_{ij} = \frac{4}{MN} \left( \min(i, j) - \frac{ij}{L + 1} \right), \ \forall 1 \le i, j \le L.
\end{equation}

We discuss some interesting consequences of this exactly solvable model.
\begin{enumerate}
    \item Single parameter estimation variances. In the large-$L$ limit, the estimation bias of our problem is small enough. Then, the best estimation variance can be characterized by the Cram\'er-Rao lower bound. To evaluate the performance of each single-parameter estimator, we consider the diagonal element of the inverse FIM, which corresponds to the best single-parameter estimation variance:
    \begin{equation}
        \mathrm{Var}(\hat{\psi_i}) \ge [\mc{F}^{-1}(\Psi^\psi)]_{ii} = \frac{4}{MN} \left( i - \frac{i^2}{L + 1} \right), \forall 1 \le i \le L.
    \end{equation}
    It shows that the smallest best variance may be achieved at the boundary parameters:
    \begin{equation}
        \mathrm{Var}(\hat{\psi_1}), \mathrm{Var}(\hat{\psi_L}) \gtrsim \frac{4}{MN}.
    \end{equation}
    When the parameter index gets close to the center, i.e., $i^\star = \lceil (L + 1) / 2 \rceil$, the variance lower bound is maximized:
    \begin{equation}
        \mathrm{Var}(\hat{\psi}_{i^\star}) \gtrsim \frac{L + 1}{MN} = \Omega\left( \frac{1}{M} \right) \text{ if } N = \Theta(L).
    \end{equation}
    \item Correlations among estimators. Let $\sigma_i^2 := \mathrm{Var}(\hat{\psi}_i), \rho_{ij} := \mathrm{Cov}(\hat{\psi}_i \hat{\psi}_j) / (\sigma_i \sigma_j)$. Suppose the variance lower bound is attainable by some best estimation. We first compute the following results:
    \begin{equation}
    \begin{split}
        & \sum_{i = 1}^L \sigma_i^2 = \Tr(\mc{F}^{-1}(\Psi^\psi)) = \sum_{i = 1}^L \frac{4}{MN} \left( i - \frac{i^2}{L + 1} \right) = \frac{2L(L+2)}{3MN},\\
        & \sum_{i = 1}^L \sigma_i^2 + 2 \sum_{i < j} \rho_{ij} \sigma_i \sigma_j = \sum_{i, j} \mc{F}^{-1}_{ij}(\Psi^\psi) = \sum_{i, j} \left( \min(i, j) - \frac{ij}{L + 1} \right) = \frac{L (L+1) (L+2)}{3 MN}.
    \end{split}
    \end{equation}
    Hence, we have
    \begin{equation}
        \sum_{i < j} \rho_{ij} \sigma_i \sigma_j = \frac{L - 1}{4} \sum_{i = 1}^L \sigma_i^2.
    \end{equation}
    Note that Cauchy-Schwarz implies that
    \begin{equation}
        \sum_{i<j}\sigma_i\sigma_j=\frac{1}{2}\left(\left(\sum_{i=1}^{L}\sigma_i\right)^{2}-\sum_{i=1}^{L}\sigma_i^{2}\right) \le \frac{L - 1}{2} \sum_{i=1}^{L}\sigma_i^2.
    \end{equation}
    Combining them, we have the following inequality that holds for the average correlation:
    \begin{equation}\label{eqn:avg_corr_phase_factor}
        \bar{\rho} := \sum_{i < j} \frac{\sigma_i \sigma_j}{\sum_{i < j} \sigma_i \sigma_j} \rho_{ij} = \frac{(L - 1) \sum_{i = 1}^L \sigma_i^2}{4\sum_{i < j} \sigma_i \sigma_j} \ge \frac{1}{2}.
    \end{equation}
    This means that most parameter pairs in the system have a strong positive correlation. Though the number of segmentations $L$ increases, it does not effectively contribute to the reduction of the variance of estimating each segmented parameter. This also justifies the argument of Heisenberg limit scaling. Though there are more segments of a fixed pulse, each segment is a free degree of freedom in the joint estimation. Hence, there is no amplification effect, and the overall estimation variance does not show a decreasing result.
\end{enumerate}

\subsection{Estimation variance when $\pi/2 \le \nu \le \pi$}\label{subsec:fisher_info_general_nu_case}
In the last subsection, we show some exactly solvable cases of the estimation variance. In this subsection, we use semi-definite order (Loewner order) to analyze the estimation variance when $\pi/2 \le \nu \le \pi$.

Let $x := (x_1, \cdots, x_L) \in \mathbb{R}^L$ be any vector. From the defining equation, we see that
\begin{equation}
    x^\top \mc{F} x = \frac{MN}{\nu} \int_0^\nu \sin^2(\theta) \sum_{i, j} x_i \cos(2(i-j)\theta) x_j \ud \theta =  \frac{MN}{\nu} \int_0^\nu \sin^2(\theta) \abs{S(\theta)}^2 \ud \theta
\end{equation}
where $S(\theta) = \sum_{j = 1}^L x_j e^{2\I j \theta}$. Let $x_0 := 0, x_{L+1} := 0$. We also have the following identity 
\begin{equation}
    (1 - e^{2 \I \theta}) S(\theta) = \sum_{j = 1}^L x_j e^{2\I j \theta} - \sum_{j = 1}^L x_j e^{2\I (j + 1) \theta} = \sum_{j = 1}^{L + 1} (x_j - x_{j-1}) e^{2 \I j \theta} =: \sum_{j = 1}^{L + 1} \Delta x_j e^{2 \I j \theta}
\end{equation}
Then, we have
\begin{equation}
    \begin{split}
        & \int_0^\nu \sin^2(\theta) \abs{S(\theta)}^2 \ud \theta = \frac{1}{4} \int_0^\nu \abs{(1 - e^{2 \I \theta})  S(\theta)}^2 \ud \theta \le \frac{1}{4} \int_0^\pi \abs{(1 - e^{2 \I \theta})  S(\theta)}^2 \ud \theta\\
        &= \frac{1}{4} \sum_{i, j} \Delta x_i \Delta x_j \int_0^\pi e^{2 \I (i - j) \theta} \ud \theta = \frac{\pi}{4} \sum_{j = 1}^{L+1} (\Delta x_j)^2 = \frac{\pi}{4} x^\top \mf{D} x.
    \end{split}
\end{equation}
Here, $\mathfrak{D}\in\mathbb{R}^{L\times L}$ denotes the second-difference
matrix. Its entries are
\begin{equation}
    (\mathfrak{D})_{ij}=
\begin{cases}
2, & i=j,\\
-1, & |i-j|=1,\\
0, & \text{otherwise},
\end{cases}
\qquad i,j=1,\dots,L.
\end{equation}
On the other hand, since $\nu \ge \pi / 2$, we have
\begin{equation}
    \begin{split}
        & \int_0^\nu \sin^2(\theta) \abs{S(\theta)}^2 \ud \theta = \frac{1}{4} \int_0^\nu \abs{(1 - e^{2 \I \theta})  S(\theta)}^2 \ud \theta \ge \frac{1}{4} \int_0^{\pi / 2} \abs{(1 - e^{2 \I \theta})  S(\theta)}^2 \ud \theta\\
        &= \frac{1}{8} \int_0^\pi \abs{(1 - e^{2 \I \theta})  S(\theta)}^2 \ud \theta = \frac{\pi}{8} x^\top \mf{D} x.
    \end{split}
\end{equation}
Because these inequalities hold for any vector, the following Loewner order holds
\begin{equation}
    \frac{\pi MN}{8 \nu} \mf{D} \preceq \mc{F} \preceq \frac{\pi MN}{4 \nu} \mf{D}.
\end{equation}
This means the Fisher Information in the general case when $\pi/2 \le \nu \le \pi$ is equivalent to that of the exactly solvable case derived in the previous subsection.
Hence, the maximum estimation variance among these phase-factor estimators is bounded as
\begin{equation}
    \frac{\nu (L+1)}{\pi MN} \le \max_i \mc{F}^{-1}_{ii} \le \frac{2 \nu (L+1)}{\pi MN}.
\end{equation}

Moreover, the similarity in the sense of Loewner order also implies that the conclusions derived in the previous subsection also apply to the case when $\pi / 2 \le \nu \le \pi$. For example, in this case, the average correlation among phase-factor estimators is lower bounded by a constant, which indicates a strong positive correlation of estimation errors.

\section{Learning Surrogate Models via a Quantum-Signal-Processing-based Approach}\label{sec:app_Fourier_analysis_post_processing}

In the previous section, we analyzed the Fisher information of the estimation problem and justified the best estimation accuracy we may expect. The surrogate model in \cref{sec:app:surrogate_model} leads to a structure that coincides with a powerful quantum algorithm, Quantum Signal Processing (QSP). Leveraging this QSP structure, we propose a direct method for estimating digitized pulse values without using any black-box iterative methods such as optimization. We first provide a data augmentation approach so that the full unit circle $\theta \in [0, 2\pi]$ can be covered using the data sampled from the first quadrant. Then, we outline how to extract matrix-valued Fourier coefficients using the Fast Fourier Transform (FFT). Finally, we present the estimation algorithm and analyze its estimation variance.

\subsection{Extending samples from the first quadrant to the full unit circle}

Given a set of data with $\theta$ sampled from the first quadrant, in this subsection, we will show how to extend the data to cover the full unit circle by using the parity and symmetry.

Recall that for some phase factors $\Psi = (\psi_1, \cdots, \psi_L)$, we have
\begin{equation}
    W(\theta; \Psi) = V(\theta, \psi_L) \cdots V(\theta, \psi_1),\text{ where } V(\theta, \psi) = e^{- \I \psi/2 Z} e^{- \I \theta X} e^{\I \psi/2 Z}.
\end{equation}
For the simplicity of notation, we drop the explicit dependency of phase factors, and let
\begin{equation}
    U(\theta) := W(\theta; \Psi)
\end{equation}
denote the matrix $W(\theta; \Psi)$ at angle $\theta$. 

This construction enforces a parity pattern. Let us consider the data point in the second quadrant $\wt{\theta} = \pi - \theta$. We have
\begin{equation}
    V(\pi - \theta, \psi) = e^{- \I \psi/2 Z} e^{- \I \pi X} e^{\I \theta X} e^{\I \psi/2 Z} = - e^{- \I \psi/2 Z} Z e^{- \I \theta X} Z e^{\I \psi/2 Z} = - Z V(\theta, \psi) Z.
\end{equation}
Hence, we have the following mapping
\begin{equation}
    U(\pi-\theta)=(-1)^{L}  Z U(\theta) Z
\end{equation}
which extends first-quadrant data points to cover second quadrant.

Meanwhile, we have
\begin{equation}
    V(2 \pi - \theta, \psi) = e^{- \I \psi/2 Z} e^{- \I 2 \pi X} e^{\I \theta X} e^{\I \psi/2 Z} = e^{- \I \psi/2 Z} Z e^{- \I \theta X} Z e^{\I \psi/2 Z} = Z V(\theta, \psi) Z.
\end{equation}
Consequently, we have
\begin{equation}
    U(2 \pi-\theta)=(-1)^{L}  Z U(\theta) Z
\end{equation}
which mirrors data points in $[0, \pi]$ to cover $[\pi, 2\pi]$.

Thus midpoint samples on the first quadrant uniquely determine midpoint samples on the full circle. The data argumentation procedure is given in \cref{alg:extend-q1-to-full}.

\begin{algorithm}[H]
\caption{Extend first-quadrant midpoint samples to the full unit circle}
\label{alg:extend-q1-to-full}
\begin{algorithmic}
\Require First-quadrant samples $A_j$ for $j=0,\cdots,N-1$ at midpoints $\theta_j\in[0,\frac{\pi}{2}]$ (ascending), degree $L$
\Ensure Full set $B_j$ for $j=0,\cdots,4N-1$ on $[0,2\pi)$ (ascending)
\State Define $Z=\mathrm{diag}(1,-1)$ and $\mathrm{Flip}(O)\gets Z O Z$
\State \textit{First quadrant:} $B_j \gets A_j$ \textbf{for} $j=0,\cdots,N-1$
\State \textit{Second quadrant (mirror $[0,\frac{\pi}{2}]$ to $[\frac{\pi}{2},\pi]$):}
\For{$j=0$ \textbf{to} $N-1$}
  \State $B_{N+j} \gets (-1)^L \mathrm{Flip} \bigl(A_{N-1-j}\bigr)$
\EndFor
\State \textit{Third and fourth quadrants (mirror $[0,\pi]$ to $[\pi,2\pi]$):}
\For{$j=0$ \textbf{to} $2N-1$}
  \State $B_{2N+j} \gets \mathrm{Flip} \bigl(B_{2N-1-j}\bigr)$
\EndFor
\State \Return $B_0,\cdots,B_{4N-1}$
\end{algorithmic}
\end{algorithm}

\subsection{Fourier series structure and analysis}
Note that
\begin{equation}
    V(\theta, \psi) = e^{- \I \psi/2 Z} e^{- \I \theta X} e^{\I \psi/2 Z} = \cos\theta I - \I \sin\theta X e^{\I \psi Z} = \frac{1}{2}(I - X e^{\I \psi Z}) e^{\I \theta} + \frac{1}{2}(I + X e^{\I \psi Z}) e^{-\I \theta}.
\end{equation}
Consequently, directly expanding all $\theta$-valued $X$-rotations shows that the product matrix $U(\theta)$ is a  Fourier series whose degree is within $\pm L$ and coefficients are matrix-valued:
\begin{equation}
U(\theta)=V(\theta, \psi_L) \cdots V(\theta, \psi_1) = \sum_{k=-L}^{L} C_k e^{\I k\theta},\qquad C_k\in\mathbb{C}^{2\times 2}.
\end{equation}
Moreover, the parity condition derived in the previous subsection implies that $C_{-d + k} = 0$ for any $k \in 2 \ZZ + 1$.
Note that our data are sampled on the midpoint grid of the full circle
\begin{equation}
\theta_j=\frac{(2j+1)\pi}{\wt{N}},\quad j=0,\cdots,\wt{N}-1,\text{ where } \wt{N} := 4 N.
\end{equation}
Hence, the $j$-th sample admits the following form
\begin{equation}
    A_j := U(\theta_j) = \sum_{k = - L}^L \underbrace{C_k e^{\I k \pi / \wt{N}}}_{\wt{C}_k} e^{\I 2 \pi j k / \wt{N}} = \sum_{p = 0}^{\wt{N} - 1} \mf{F}_{jp} \wt{C}_p.
\end{equation}
Here, $\mf{F}_{kp} = e^{\I 2 \pi kp / \wt{N}}$ is the $\wt{N} \times \wt{N}$ DFT matrix element, and we define the following relation
\begin{equation}
    \wt{C}_p = \left\{
    \begin{array}{ll}
        C_p e^{\I p \pi / \wt{N}} & \text{ if } 0 \le p \le L, \\
        C_{p - \wt{N}} e^{\I (p - \wt{N}) \pi / \wt{N}} & \text{ if } \wt{N} - L \le p \le \wt{N} - 1,\\
        0 & \text{ otherwise}.
    \end{array}
    \right.
\end{equation}
Assembling $\mf{A} := (A_j : j = 0, \cdots, \wt{N}-1)$ as a $\wt{N} \times 2 \times 2$ tensor, the tensor $\wt{\mf{C}} := (\wt{C}_p : p = 0, \cdots, \wt{N} - 1)$ of the same size can be computed through applying FFT to the first axis
\begin{equation}
    \wt{\mf{C}} = \mf{F}^{-1} \mf{A} = \frac{1}{\wt{N}} \mf{F}^\dagger \mf{A}.
\end{equation}
It is worth noting that in \textsf{numpy}, the action of $\mf{F}^\dagger$ is implemented through \textsf{numpy.fft.fft}. With $\wt{N}=4N$, the condition $N > L$ guarantees no aliasing in Fourier modes.

\subsection{An iterative algorithm for estimating phase factors}

In this subsection, we will show how to compute phase factors through an iterative method in the Fourier domain, which is adapted from \cite{Haah2019}.

For the simplicity of presentation, we redefine the phase factors in a reversed index order:
\begin{equation}
    (\varphi_1, \varphi_2, \cdots, \varphi_L) = (\psi_L, \psi_{L-1}, \cdots, \psi_1),\quad W(\theta, \Psi) = V(\theta, \psi_L) \cdots V(\theta, \psi_1) = V(\theta, \varphi_1) \cdots V(\theta, \varphi_L).
\end{equation}

We first note the following expansion
\begin{equation}
    V(\theta, \varphi) = e^{-\I \varphi / 2 Z} e^{- \I \theta X} e^{\I \varphi / 2 Z} = e^{\I \theta} \underbrace{\left(e^{-\I \varphi / 2 Z} \ket{-}\bra{-} e^{\I \varphi / 2 Z}\right)}_{P_\varphi} + e^{- \I \theta} \left(e^{-\I \varphi / 2 Z} \ket{+}\bra{+} e^{\I \varphi / 2 Z}\right) = e^{\I \theta} P_\varphi + e^{- \I \theta} Q_\varphi.
\end{equation}
Here, $P_\varphi$ and $Q_\varphi = I - P_\varphi$ are two projections. Moreover, we have $V^\dagger(\theta, \varphi) = V(-\theta, \varphi)$. When the phase factor is chosen to be equal to the right-most one, we have
\begin{equation}\label{eqn:recursive_qsp_reduction_3}
    W(\theta, \Phi) V(-\theta, \varphi_L) = V(\theta, \varphi_1) \cdots V(\theta, \varphi_{L-1}) = \sum_{k = -L+1}^{L-1} C^\prime_k e^{\I k \theta}. 
\end{equation}
On the other hand, we have
\begin{equation}
    \begin{split}
        W(\theta, \Phi) V(-\theta, \varphi_L) &= \left(\sum_{k = -L}^L C_k e^{\I k\theta}\right) \left(e^{-\I\theta} P_{\varphi_L} + e^{\I \theta} Q_{\varphi_L}\right)\\
        &= C_L Q_{\varphi_L} e^{\I (L+1) \theta} + C_{-L} P_{\varphi_L} e^{-\I(L+1)\theta} + \sum_{k = -L+1}^{L-1} (C_{k-1} Q_{\varphi_L} + C_{k+1} P_{\varphi_L}) e^{\I k \theta},
    \end{split}
\end{equation}
where the parity condition is used to eliminate the terms, $C_{L - 1} = C_{-L + 1} = 0$. Thus, we have
\begin{equation}\label{eqn:recursive_qsp_reduction_1}
    C_L Q_{\varphi_L} = C_{-L} P_{\varphi_L} = 0 \quad \text{ and } \quad C_k^\prime = C_{k-1} Q_{\varphi_L} + C_{k+1} P_{\varphi_L}, \forall k = -L+1, \cdots, L-1.
\end{equation}
According to \cite[Theorem 2]{Haah2019}, these equations can be solved by choosing
\begin{equation}\label{eqn:recursive_qsp_reduction_2}
    P_{\varphi_L} = \frac{C_L^\dagger C_L}{\Tr(C_L^\dagger C_L)} \quad \text{ and } \quad Q_{\varphi_L} = \frac{C_{-L}^\dagger C_{-L}}{\Tr(C_{-L}^\dagger C_{-L})}.
\end{equation}
With the projector derived from \cref{eqn:recursive_qsp_reduction_2}, we can compute the phase factor by the following relation
\begin{equation}\label{eqn:phase-factor-compute}
    \begin{split}
        & (P_{\varphi_L})_{01} - (P_{\varphi_L})_{10} = \I \sin \varphi_L, \ (P_{\varphi_L})_{01} + (P_{\varphi_L})_{10} = - \cos\varphi_L\\
        & \Rightarrow \varphi_L = \mathrm{atan2}\left( \Im((P_{\varphi_L})_{01} - (P_{\varphi_L})_{10}), - \Re((P_{\varphi_L})_{01} + (P_{\varphi_L})_{10}) \right)
    \end{split}
\end{equation}
\cref{eqn:recursive_qsp_reduction_1,eqn:recursive_qsp_reduction_2,eqn:recursive_qsp_reduction_3} form an iterative reduction of the problem size, which allows us to compute the phase factors sequentially.

\begin{algorithm}[H]
\caption{Iterative phase-factor estimation in the Fourier domain}
\label{alg:fourier-reduction}
\begin{algorithmic}
\Require Fourier coefficients $\{C_k\}_{k=-L}^{L}\subset\mathbb{C}^{2\times 2}$ of $U(\theta)=\sum_{k=-L}^{L}C_k e^{ik\theta}$.
\Ensure Phase factors $\hat\Phi=(\hat\varphi_1,\cdots,\hat\varphi_L)$

\State $\hat\Phi \gets$ empty list
\State $\{C^{(L)}_k\} \gets \{C_k\}$ \Comment{initialize the working copy of coefficients}

\For{$j=L, L-1, \cdots, 1$}
  \State Choose projectors $P_{\varphi_j}, Q_{\varphi_j}$ from the extreme modes (see \cref{eqn:recursive_qsp_reduction_2}).
  \State Extract the phase estimator $\hat\varphi_j$ from $P_{\varphi_j}$ (see \cref{eqn:phase-factor-compute}).
  \State Append $\hat \varphi_j$ to $\hat\Phi$
  \State Reduce the Fourier degree by one on each side $\{C_k^{(j-1)}\}_{k = -j+1}^{j-1}$ (see \cref{eqn:recursive_qsp_reduction_1}).
\EndFor
\State \Return $\hat\Phi$
\end{algorithmic}
\end{algorithm}

A single right-to-left reduction accumulates error from the band edge $k=\pm L$ inward, so the last phase factor in the estimation procedure suffers most. To mitigate this, we also run a left-to-right reduction via transpose symmetry. Then, we stitch the two estimates at a midpoint. This two-sided stitching exploits the most reliable portion of each pass and noticeably reduces boundary-driven error accumulation.

\subsection{Analysis of estimation variance}

Note that the phase $\varphi$ is the intrinsic coordinate of the rank-one projector obtained by conjugating the reference state $\ket{-} \bra{-}$ by a $Z$-rotation. This point of view yields a simple differential description of the tangent space, which can be leveraged to analyze the propagation of estimation variance.

Recall the defining equality of the projector
\begin{equation}
  P_\varphi  =  e^{- \I\varphi Z/2} \ket{-} \bra{-} e^{ \I \varphi Z/2}
   =  \frac12 \left(I-\cos\varphi X-\sin\varphi Y\right).
\end{equation}
Differentiating with respect to $\varphi$ gives the tangent direction along the one-dimensional manifold of projectors generated by the $Z$-orbit:
\begin{equation}\label{eqn:relation_projection_norm_hermitian}
  P'_\varphi  =  \frac{\partial P_\varphi}{\partial\varphi}
   =  \frac12 \left(\sin\varphi X-\cos\varphi Y\right)
   =  \frac{\I}{2} [P_\varphi, Z], 
  \Rightarrow \norm{P'_\varphi}_F^2=\frac12 \text{ and } (P_\varphi^\prime)^\dagger = P_\varphi^\prime.
\end{equation}
Then, recall the estimation of the projector through the top Fourier coefficient matrix, e.g., $C := C_L$
\begin{equation}
  P_\varphi  =  \mc{P}(C) := \frac{C^\dagger C}{\Tr(C^\dagger C)}  =  \frac{C^\dagger C}{s},
 \quad s:=\Tr(C^\dagger C)=\|C\|_F^2.
\end{equation}
For a perturbation $\delta C$, the first-order variation of the normalization map is obtained by quotient differentiation:
\begin{equation}\label{eq:dP}
  \delta P_\varphi  =  \frac{C^\dagger \delta C + (\delta C)^\dagger C}{s}
   -  \frac{2 \Re \Tr(C^\dagger\delta C)}{s} P_\varphi.
\end{equation}

Note that $\delta P_\varphi = P^\prime_\varphi \delta \varphi$ and consequently, $(\delta P_\varphi)^\dagger = (P^\prime_\varphi)^\dagger \delta \varphi = P^\prime_\varphi \delta \varphi = \delta P_\varphi$. Projecting both sides onto the tangent direction $P'_\varphi$ in the Frobenius inner product ($\braket{A, B}_F := \Tr(A^\dagger B)$) gives 
\begin{equation}\label{eq:dphi-proj}
  \delta\varphi  =  \frac{\langle \delta P_\varphi, P'_\varphi\rangle_F}{\|P'_\varphi\|_F^2} = 2 \Tr( (\delta P_\varphi)^\dagger \delta P_\varphi^\prime) = 2 \Tr(\delta P_\varphi P_\varphi^\prime) 
   =  2 \Re \Tr(\delta P_\varphi P'_\varphi).
\end{equation}
Here, \cref{eqn:relation_projection_norm_hermitian} and the Hermicity of $\delta P_\varphi$ are used.

Another useful identity here can be derived by differentiating the identity:
\begin{equation}
    \frac{\partial}{\partial \varphi} \Tr(P_\varphi^2) = \frac{\partial}{\partial \varphi} \Tr(P_\varphi) = \frac{\partial 1}{\partial \varphi} = 0 \Rightarrow \Tr(P_\varphi P_\varphi^\prime) = 0.
\end{equation}

Using this identity and substituting \cref{eq:dP} into \cref{eq:dphi-proj}, we have
\begin{equation}\label{eqn:delta_phi_relation_1}
    \delta \varphi = \frac{4}{s} \Re \Tr \big((\delta C)^\dagger C P'_\varphi\big).
\end{equation}

To simplify the resulting expression, we use the following identities
\begin{equation}\label{eqn:P_derivative}
  P'_\varphi=\frac{\I}{2}(P_\varphi Z-ZP_\varphi), 
  \qquad ZP_\varphi=(I-P_\varphi)Z=:Q_\varphi Z, 
  \qquad P_\varphi Z=ZQ_\varphi,
\end{equation}
and the reduction relations $C Q_\varphi=0$ and $C P_\varphi = C(I - Q_\varphi) =C$. Left-multiplying $P'_\varphi$ by $C$ gives
\begin{equation}
  C P'_\varphi  =  \frac{\I}{2}\big(CP_\varphi Z-CZP_\varphi\big)
   =  \frac{\I}{2}\big(CZ - C(Q_\varphi Z)\big)
   =  \frac{\I}{2} C Z.
\end{equation}
Substituting it into \cref{eqn:delta_phi_relation_1}, we have
\begin{equation}\label{eq:dphi-compact}
  \delta\varphi
   =  \frac{4}{s} \Re \Tr \big((\delta C)^\dagger C P'_\varphi\big)
   =  - \frac{2}{s} \Im \Tr \big((\delta C)^\dagger C Z\big)
   =  \frac{2}{s} \Im \langle \delta C,  C Z\rangle_F,
  \qquad s=\|C\|_F^2.
\end{equation}

Now, we can analyze the propagation of the estimation variance based on this relation and the recurrence used in the estimation algorithm. Suppose we are at the stage when estimating the \(j\)-th phase factor.
We start from the exact reduction at this stage:
\begin{equation}\label{eq:red}
C'_{k}  =  C_{k-1} Q_{\varphi_j}  +  C_{k+1} P_{\varphi_j},
\qquad k=-(j-1),\dots,(j-1),
\end{equation}
where the boundary coefficients determine the projectors
\[
P_{\varphi_j}=\frac{C_{j}^\dagger C_{j}}{\Tr(C_{j}^\dagger C_{j})},
\qquad
Q_{\varphi_j}=\frac{C_{-j}^\dagger C_{-j}}{\Tr(C_{-j}^\dagger C_{-j})}=I-P_{\varphi_j}.
\]
According to \cref{eq:dphi-compact}, the error of the next phase is read out by the normalized functional
\begin{equation}\label{eq:ell}
\ell_{j-1}(X) = \frac{2}{s_{j-1}} \Im \big\langle X,  C'_{j-1} Z\big\rangle_F,
\qquad
s_{j-1}:=\|C'_{j-1}\|_F^2,
\end{equation}
so that \(\delta\varphi_{j-1}=\ell_{j-1}(\delta C'_{j-1})\) in the first-order linearization.

We linearize the recurrence relation \cref{eqn:recursive_qsp_reduction_1} by perturbing \(C_{\pm j}\mapsto C_{\pm j}+\delta C_{\pm j}\) and \(\varphi_j\mapsto \varphi_j+\delta\varphi_j\). Note that the Fr\'echet derivative of the projector map \(\mathcal P(C)=\frac{C^\dagger C}{\Tr(C^\dagger C)}\) at $P$ is 
\begin{equation}\label{eq:DP}
D_P[\Delta]  =  P \Delta  +  \Delta^\dagger P  -  2 \Re\Tr(P\Delta) P.
\end{equation}
Using \cref{eqn:P_derivative} and \cref{eq:red}, the first-order variation at \(j-1\) after the reduction is
\begin{equation}\label{eq:deltaCprime}
\delta C'_{ j-1}
= (\delta C_{ j-2}) Q_{\varphi_j}  +  (\delta C_{ j}) P_{\varphi_j}
 +  (C_{j}-C_{j-2}) P'_{\varphi_j} \delta\varphi_j
 +  C_{j} D_{P_{\varphi_j}}[\delta C_{j}]  +  C_{j-2} D_{Q_{\varphi_j}}[\delta C_{-j}] .
\end{equation}
Applying \cref{eq:ell} to \cref{eq:deltaCprime} yields the decomposition of the one-step phase variation
\begin{equation}\label{eq:decomp}
\delta\varphi_{j-1}
= \underbrace{\ell_{j-1} \big((\delta C_{ j})P_{\varphi_j}\big)}_{\text{right branch}}
 +  \underbrace{\ell_{j-1} \big((\delta C_{ j-2})Q_{\varphi_j}\big)}_{\text{left branch}}
 +  \underbrace{g_j \delta\varphi_j}_{\text{phase channel}}
 +  \underbrace{\ell_{j-1} \big(C_{j} D_{P_{\varphi_j}}[\delta C_{j}] + C_{j-2} D_{Q_{\varphi_j}}[\delta C_{-j}]\big)}_{\text{projector-estimation noise}} ,
\end{equation}
where the phase-channel gain is
\begin{equation}\label{eq:gain}
g_j  =  \ell_{j-1} \big((C_{j}-C_{j-2}) P'_{\varphi_j}\big)
 =  \frac{2}{s_{j-1}} \Im\Big\langle (C_{j}-C_{j-2})P'_{\varphi_j},  C'_{j-1} Z\Big\rangle_F .
\end{equation}

These terms are interpreted as follows.
\begin{enumerate}
  \item Right branch \(\ell_{j-1}((\delta C_{j})P_{\varphi_j})\): coefficient noise at the right boundary mode \(j\), filtered by \(P_{\varphi_j}\) and the readout geometry.
  \item Left branch \(\ell_{j-1}((\delta C_{j-2})Q_{\varphi_j})\): symmetric contribution from the left neighbor \(j-2\). 
  \item Phase channel \(g_j \delta\varphi_j\): multiplicative propagation of the previous-stage phase error through the tangent direction \(P'_{\varphi_j}\). 
  \item Projector-estimation noise \(\ell_{j-1}(C_{j}D_{P_{\varphi_j}}[\delta C_{j}] + C_{j-2}D_{Q_{\varphi_j}}[\delta C_{-j}])\): phase error propagated through the projectors due to the imperfectly estimated phase at the previous stage.
\end{enumerate}
These components can be written equivalently as a linear functional of the coefficient noise. For example, we have
\begin{equation}\label{eqn:Riesz_rep_1}
    \ell_{j-1}((\delta C_{j})P_{\varphi_j}) = \mathscr{L}_j^{(R)}(\delta C_j) = \Im \braket{\delta C_j, A_j^{(R)}} \text{ with } A^{(R)}_j = \frac{2}{s_{j-1}} C_{j-1}^\prime Z P_{\varphi_j}.
\end{equation}
Similarly, we have
\begin{equation}\label{eqn:Riesz_rep_2}
    \begin{split}
        & \mathscr{L}^{(L)}_j \leftrightarrow A^{(L)} = \frac{2}{s_{j-1}} C_{j-1}^\prime Z Q_{\varphi_j},\\
        & \mathscr{L}^{(P)}_j \leftrightarrow A^{(P)} = \frac{2}{s_{j-1}} \left( P_{\varphi_j} (G_{P, j} - G_{P, j}^\dagger) - 2 \I \Im \Tr(P_{\varphi_j} G_{P, j}) P_{\varphi_j} \right) \text{ where } G_{P, j} := C_j^\dagger C_{j-1}^\prime Z,\\
        & \mathscr{L}^{(Q)}_j \leftrightarrow A^{(Q)} = \frac{2}{s_{j-1}} \left( Q_{\varphi_j} (G_{Q, j} - G_{Q, j}^\dagger) - 2 \I \Im \Tr(Q_{\varphi_j} G_{Q, j}) Q_{\varphi_j} \right) \text{ where } G_{Q, j} := C_{j-2}^\dagger C_{j-1}^\prime Z.
    \end{split}
\end{equation}

Now, we have the recursive relation of the phase noise
\begin{equation}\label{eqn:recurrence_delta_phi}
    \delta \varphi_{j - 1} = g_j \delta \varphi_j + \sum_{k \in \{R, L, P, Q\}} \mathscr{L}_j^{(k)}(\delta C_j) = g_j \delta \varphi_j + \mathscr{L}_j(\delta C_j).
\end{equation}
Suppose the coefficient noise is proper complex normal distributed $(\delta C_j)_{k, l} \stackrel{\mathrm{IID}}{\sim} \mc{CN}(0, \sigma_j^2)$. Due to the linearity of the functional, $\mathscr{L}(\delta C_j)$ is also proper complex normal distributed. We have
\begin{equation}
    \mathrm{Var}(\mathscr{L}(\delta C_j)) = \mathrm{Var}(\Im \braket{\delta C_j, A}_F) = \frac{1}{2} \mathrm{Var}(\braket{\delta C_j, A}_F) = \frac{\sigma^2_j}{2} \norm{A}_F^2.
\end{equation}
Due to the linearity, the summation term in \cref{eqn:recurrence_delta_phi} can be written compact linear functional whose Riesz representative is the sum of individual ones, $\mathscr{L} \leftrightarrow A_j := A_j^{(R)} + A_j^{(L)} + A_j^{(P)} + A_j^{(Q)}$. To quantify variance propagation, we aggregate the additive contributions into a single parameter
\begin{equation}
\mathrm{Var}\left(\mathscr{L}_j(\delta C_j)\right) \le \frac{\sigma_j^2}{2} \norm{\sum_{k \in \{R, L, P, Q\}} A_j^{(k)}}_F^2 \le  \frac{\sigma_j^2}{2} \left(\norm{A_j^{(R)}}_F^2 + \norm{A_j^{(L)}}_F^2 + \norm{A_j^{(P)}}_F^2 + \norm{A_j^{(Q)}}_F^2\right).
\end{equation}
The cross-term covariance is
\begin{equation}
    2 \mathbb{E}\left(g_j \delta \varphi_j \mathscr{L}_j(\delta C_j)\right) = \sigma_j^2 \frac{2 g_j}{s_j} \Re\langle A_j, C_{j-1}^\prime Z P_{\varphi_j}\rangle_F =: \sigma_j^2 B_j.
\end{equation}
Let
\begin{equation}
    \rho_j := g_j^2, \quad \alpha_j := B_j + \frac{1}{2} \left(\norm{A_j^{(R)}}_F^2 + \norm{A_j^{(L)}}_F^2 + \norm{A_j^{(P)}}_F^2 + \norm{A_j^{(Q)}}_F^2\right).
\end{equation}
Then, the one-step variance recursion is
\begin{equation}\label{eq:var-rec}
\mathrm{Var}(\hat\varphi_{j-1})  \le  \rho_j \mathrm{Var}(\hat\varphi_j) + \alpha_j \sigma_j^2.
\end{equation}
We may assume the noise in the estimation system is stable and approximate $\sigma_j^2$ by the initial data noise level $\sigma^2$.

\begin{figure}[htbp]
  \centering
  \subfloat[]{
    \includegraphics[width=0.32\linewidth]{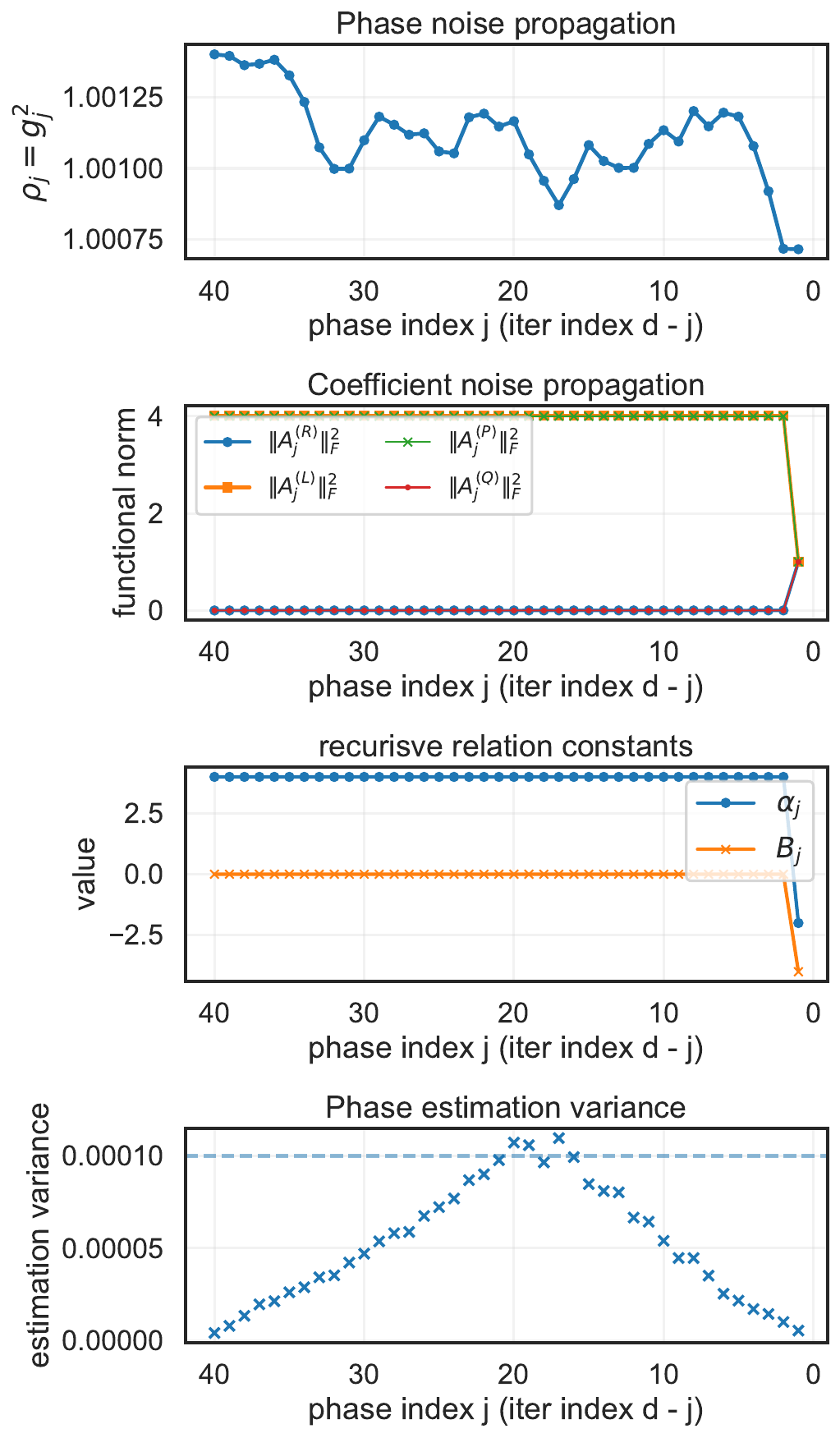}
  }\hfill
  \subfloat[]{
    \includegraphics[width=0.32\linewidth]{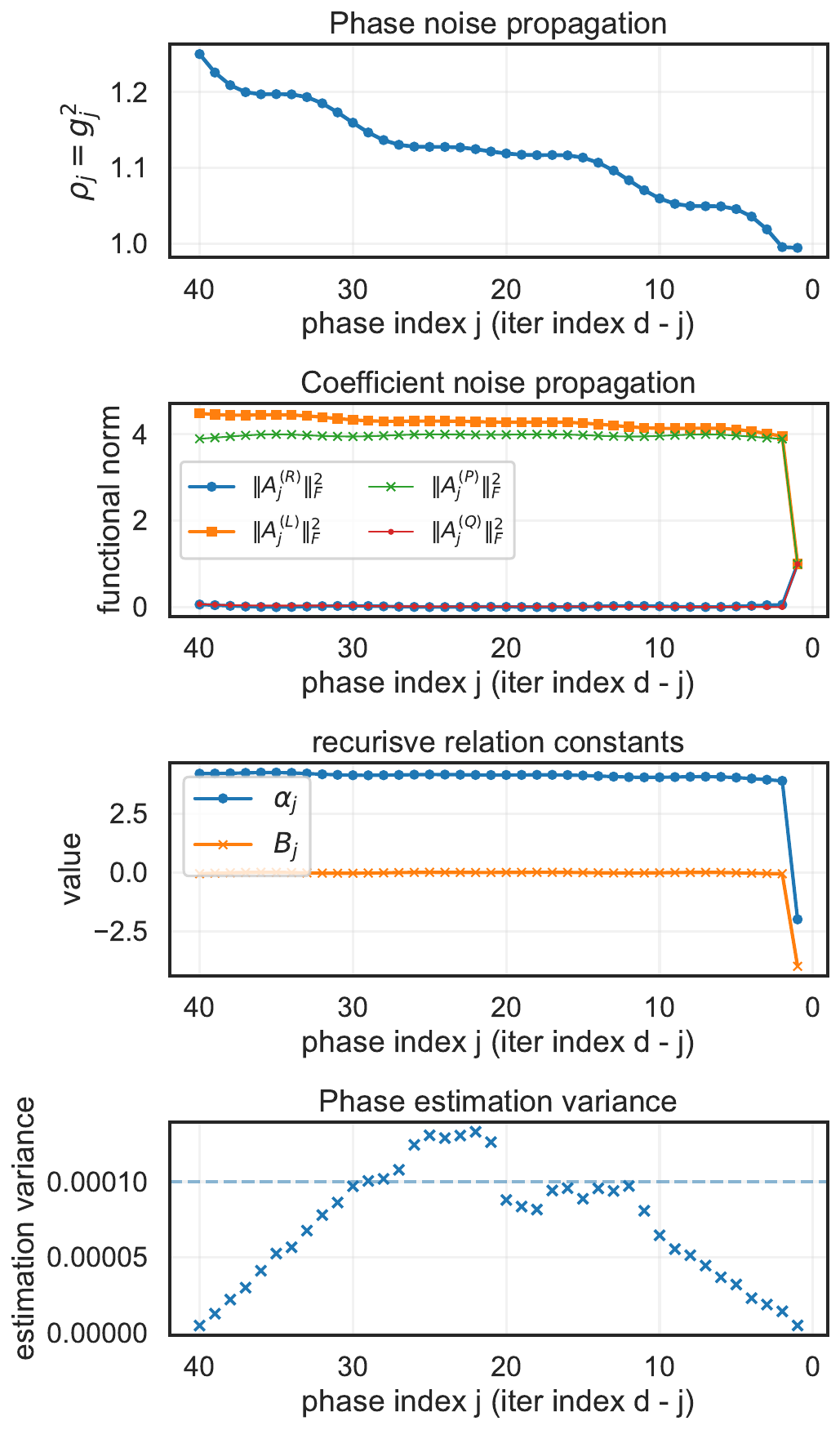}
  }\hfill
  \subfloat[]{
    \includegraphics[width=0.32\linewidth]{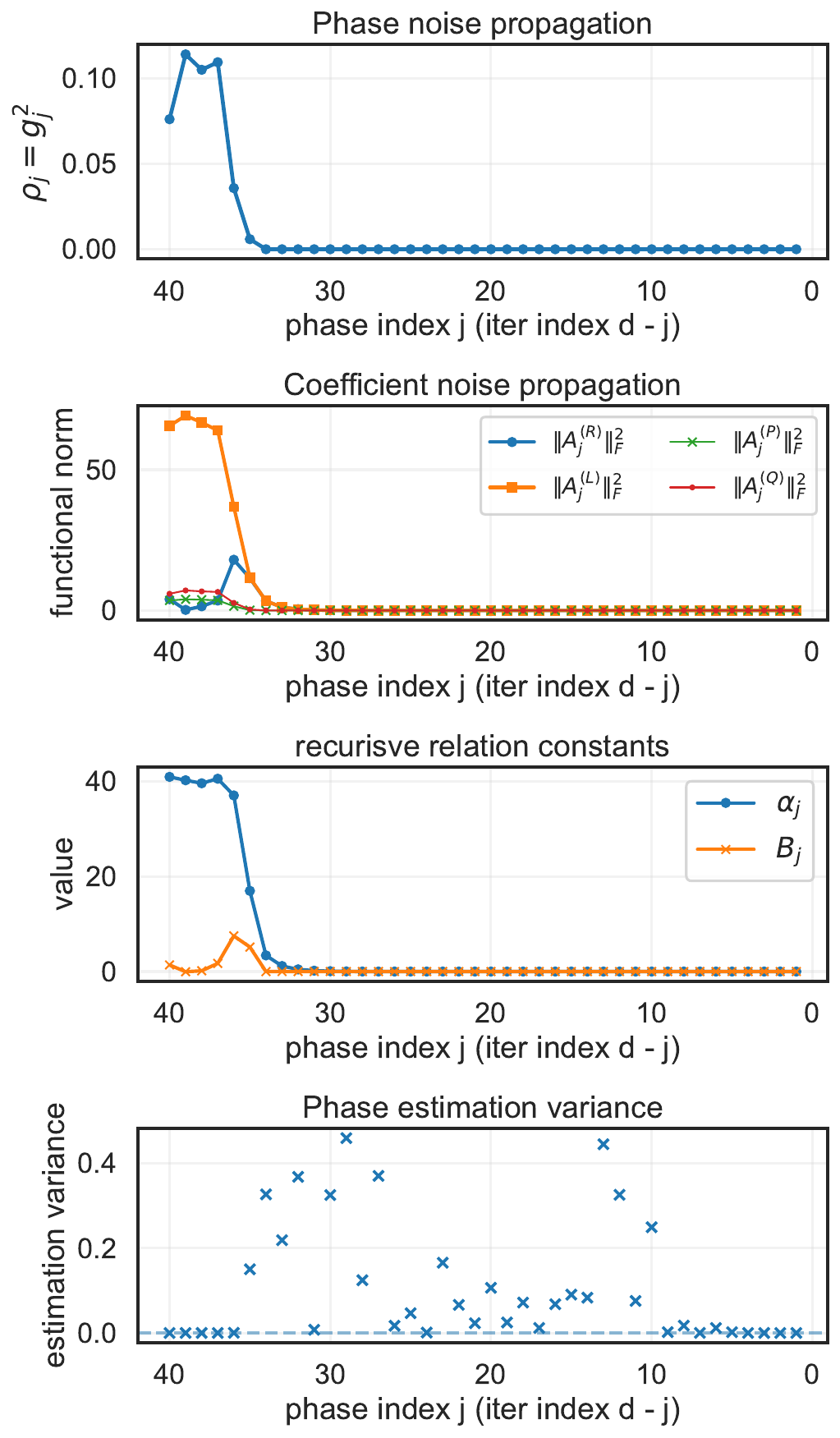}
  }\hfill
  \caption{Propagation of estimation errors. (a) Constant phase factors $\Phi = (1, \cdots, 1)$. (b) Phase factors derived from discretizing the bi-harmonic pulse. (c) Random phase factors $\Phi_j \stackrel{\mathrm{IID}}{\sim} \text{Unif}(-0.7, 0.7)$. Parameters are set to $M = 10,000$ and $d = 40$. For each case, $50$ repetitions of the double-sided estimation algorithm are used to estimate variance. In the bottom panel, the dashed line stands for the data noise level $1/M$.}
  \label{fig:noise_propagation_analysis}
\end{figure}

In \cref{fig:noise_propagation_analysis}, we perform numerical calculations to understand the constants in the recursive relation. To get some theoretical intuition, let us consider a simple case where $\Phi = \Phi^\phi = (\phi, \cdots, \phi)$ contains the same phase factors. The problem structure is drastically simplified
\begin{equation}
    U(\theta) = V^L(\theta, \phi) = P_\phi e^{\I L \theta} + Q_\phi e^{-\I L \theta}.
\end{equation}
Plugging this relation into the defining equations derived above, when $j \ge 1$, we have
\begin{equation}
    \begin{split}
        & g_j = \frac{2}{\norm{P_\phi}_F^2} \Im\braket{P_\phi P_\phi P^\prime_\phi, Z}_F = \Tr(P_\phi) = -1,\\
        & A_j^{(L)} = 2 P_\phi Z Q_\phi = 2 P_\phi Z \Rightarrow \norm{A_j^{(L)}}_F^2 = 4 \norm{P_\phi}_F^2 = 4,\\
        & A_j^{(P)} = 2 P_\phi Z Q_\phi = 2 P_\phi Z \rightarrow \norm{A_j^{(P)}}_F^2 = 4 \norm{P_\phi}_F^2 = 4,\\
        & A_j^{(R)} = 2 P_\phi Z P_\phi = 0, A_j^{(Q)} = 0 \text{ and } B_j = 0.
    \end{split}
\end{equation}

In the final round when $j = 1$, the left coefficient $C_{-1}^{(1)}$ and the right coefficient $C_{1}^{(1)}$ are all nonzero. Consequently, it gives $C_0^{(0)} = I$ and $s_0 = \norm{C_0^{(0)}}_F^2 = 2$. Hence, the structure differs from that in the case when $j > 1$, which has $C_{j - 2}^{(j)} = 0$ and $s_{j - 1} = 1$. When $j = 1$, similar calculations give
\begin{equation}
    \norm{A_1^{(L)}}_F^2 = \norm{P_\phi Z}_F^2 = 1,\ \norm{A_1^{(R)}}_F^2 = \norm{Z P_\phi}_F^2 = 1,\ \norm{A_1^{(P)}}_F^2 = \norm{P_\phi Z}_F^2 = 1,\ \norm{A_1^{(Q)}}_F^2 = \norm{Z P_\phi}_F^2 = 1,\ B_1 = -4.
\end{equation}

These exactly matches the observation in \cref{fig:noise_propagation_analysis} (a).

In \cref{fig:noise_propagation_analysis} (b), it shows that the pattern is almost preserved when the phase-factor sequence is derived from a smooth pulse function. In these cases, $\rho_j \approx 1$ and $\alpha_j \approx c \sigma^2$ where $c$ is a constant and $\sigma^2$ is the noise magnitude in the unitary data. Thus, we have $\mathrm{Var}(\hat\phi_j) \approx c (L - j + 1) \sigma^2$ which grows linearly. This is consistent with the estimation variance derived from Monte Carlo methods. 

In contrast, when the phase factors are completely random, \cref{fig:noise_propagation_analysis} (c) indicates that the propagation of the coefficient noise is very strong. Consequently, the estimation variance blows up to a level that is significantly larger than the noise level in the unitary data. The case of random phase factors may arise from the evolution of a highly unstructured and extremely oscillatory pulse function while the segmentation parameter $L$ is not chosen to be sufficiently large. This scenario is rare in real applications and may be resolved by choosing a larger $L$. 

It is also interesting that \cref{fig:noise_propagation_analysis} (a, b) indicates the phase error propagation and the coefficient error propagation are almost uncorrelated when the pulse function is smooth, i.e., $B_j \approx 0$ when $j > 1$. However, when the pulse function is highly unstructured, \cref{fig:noise_propagation_analysis} (c) shows that these two error channels are positively correlated, which significantly amplifies the phase estimation error further.

\subsection{Numerical simulations of the proved results}

In this subsection, we numerically validate the theoretical results established for phase-factor estimation. As shown in the left panel of \cref{fig:phase_factor_estimation_digital_surrogate}, the maximal estimation standard deviation scales as $1/\sqrt{M}$, while the minimal one scales as $1/\sqrt{LM}$, in agreement with our theoretical analysis. The middle panel further confirms that, in the absence of systematic bias introduced by surrogate modeling, the phase-factor estimation method does not incur any additional bias. Finally, the right panel demonstrates the strong average correlation among the estimated phase factors. This correlation limits the simultaneous reduction of estimation variance across all phase factors.

\begin{figure}[htbp]
    \centering
    \includegraphics[width=\linewidth]{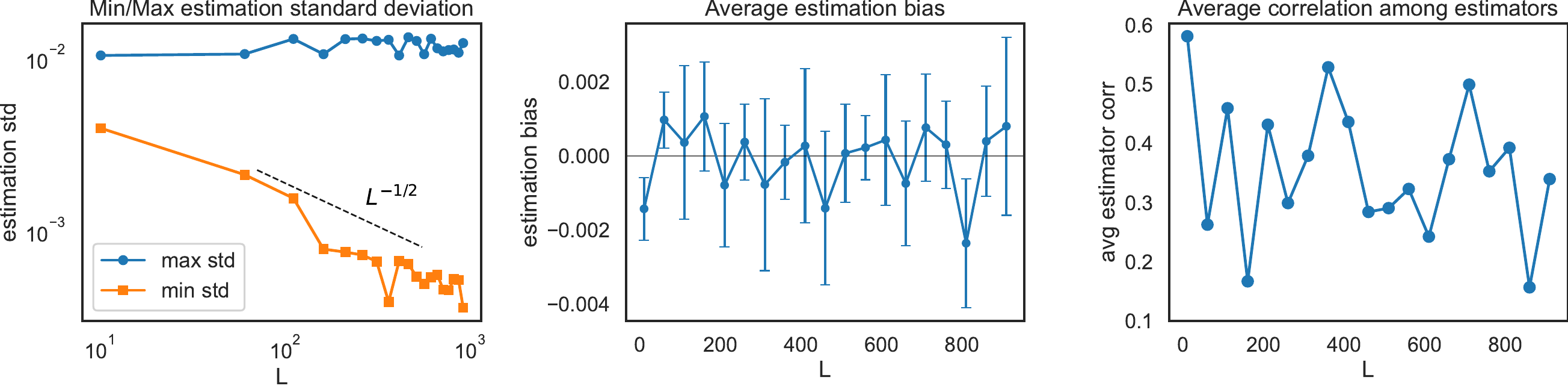}
    \caption{Numerical study of the phase-factor estimation method. The ground-truth phase factors are obtained by discretizing a bi-harmonic pulse. Each query to the time-evolution unitary is corrupted by additive Gaussian noise, and the number of measurement shots is set to $M = 10^4$. The estimation standard deviation, bias, and average correlation (see \cref{eqn:avg_corr_phase_factor}) are computed over 20 independent repetitions.}
    \label{fig:phase_factor_estimation_digital_surrogate}
\end{figure}

\section{Noise-Robust Preprocessing Through a Modified Unitary Tomography}\label{sec:app:robust_preproc_tomography}

In this section, we introduce a preprocessing subroutine for generating the unitary data for learning analog pulse function. This subroutine enhances the estimation robustness against depolarizing error and State Preparation and Measurement (SPAM) error.

\subsection{Unitary tomography on a single-qubit subspace}

Quantum channel tomography \cite{chuang_prescription_1997} is an important technique for understanding the dynamics of a quantum process. To prepare the unitary data for learning analog pulse function, we apply unitary tomography to each experiment with certain $\omega$ value, or equivalently $\theta := \omega T / L$ value. In this subsection, we restate the unitary tomography procedure for the completeness of notations. 

We assume the ability to prepare four pure states as input:
\begin{equation}
    \rho_{z, +} = \ket{0}\bra{0},\ \rho_{z, -} = \ket{1}\bra{1},\ \rho_x = \ket{+}\bra{+},\ \rho_y = \ket{\I}\bra{\I}.
\end{equation}

Let $\mathcal{E}(\rho) = U \rho U^\dagger$ be the unitary dynamics of interest. From experiments, we can measure the following quantities through expectation values
\begin{equation}
    \mathbf{r}_j = (\Tr(\sigma_i \mc{E}(\rho_j)) : i \in \{X, Y, Z\})^\top \in \mathbb{R}^3,\quad j \in \{(z,+), (z, -), x, y\}.
\end{equation}
Our goal is to assemble the Pauli Transfer Matrix (PTM) through experimental data. The PTM is defined as $T_{ij} = \frac{1}{2} \Tr(\sigma_i \mathcal{E}(\sigma_j)), i , j \in \{I, X, Y, Z\}$. Given a trace-preserving channel, we have $T_{00} = 1$ and $T_{0, j} = 0, j = 1, 2, 3$. The PTM be partitioned where each component can be derived from experimental data as follows
\begin{equation}
    T = \begin{pmatrix}
        1 & 0\\ \mathbf{t} & A
    \end{pmatrix},\ \mathbf{t} = \frac{\mathbf{r}_{z, +} + \mathbf{r}_{z, -}}{2},\ (A_{i, 3}) = \frac{\mathbf{r}_{z, +} - \mathbf{r}_{z, -}}{2},\ (A_{i, 1}) = \mathbf{r}_x - \mathbf{t},\ (A_{i, 2}) = \mathbf{r}_y - \mathbf{t}.
\end{equation}

Using Pauli matrices as a matrix basis, the density matrix can be written as a vector and the channel action can be represented as the matrix multiplication $\mathrm{vec}(\mathcal{E}(\rho)) = T \mathrm{vec}(\rho)$. Note that when the channel is unitary action, we have $\mathbf{t} = 0$ and $A \equiv R \in \mathrm{SO}(3)$. The unitary matrix can be recovered from the arithmetic in quaternions
\begin{equation}
    U = (\pm 1) \cdot \begin{pmatrix}
        w - \I z & -\I x - y \\ - \I x + y & w + \I z
    \end{pmatrix},\ w = \frac{1}{2} \sqrt{1 + \Tr(R)},\ x = \frac{R_{32} - R_{23}}{4w}, \ y = \frac{R_{13} - R_{31}}{4 w} , \ z = \frac{R_{21} - R_{12}}{4 w}.
\end{equation}

It is worth noting that the reconstructed operator is determined up to a global sign, i.e., $U$ and $-U$ are indistinguishable in the tomography procedure. 
This ambiguity originates from the fact that the undetermined sign disappears in the adjoint action $U \rho U^\dagger$. This reflects the well-known isomorphism $\mathrm{SO}(3)\cong \mathrm{SU}(2)/\{\pm I\}$, which is also known as the double-covering property of $\mathrm{SU}(2)$. We resolve this issue by introducing a global-phase alignment technique in \cref{sec:app:pm1_phases_continuity}.

\subsection{A modified unitary tomography procedure that is robust against SPAM and depolarizing error}

Note that SPAM error is trace-preserving. Then, the noisy channel is a composite of channels whose PTMs are lower triangular block matrices. Hence, the lower-right submatrix of the composite channel can be written as
\begin{equation}
    T_\text{tot} = T_\text{meas} T_\text{dplz} T_\text{unitary} T_\text{init} = \begin{pmatrix}
        1 & 0\\ * & \wt{A}
    \end{pmatrix},\ \wt{A} := \alpha M R S.
\end{equation}
Here, $M, S$ are the lower-right submatrices of the PTMs of measurement and state preparation errors, and $\alpha$ is the fidelity of the depolarizing channel. 

We can run another experiment with $\omega = 0$ and hence $U = I$ to set as a reference whose lower-right submatrix is $K := \alpha M S$. Let $M = I + \mf{g}_M + \Or(\delta^2)$ and $S = I + \mf{g}_S + \Or(\delta^2)$, where $\delta := \max\{\norm{\mf{g}_M}, \norm{\mf{g}_S}\}$. We have
\begin{equation}
    B := K^{-1/2} \wt{A} K^{-1/2} = R + \frac{1}{2} [\mf{g}_M - \mf{g}_S, R] + \Or(\delta^2),
\end{equation}
 We can perform polar decomposition on this matrix and get an approximation to the $\mathrm{SO}(3)$ rotation
\begin{equation}
    \wt{R} = \mathsf{Polar}(B) := B (B^\top B)^{-1/2} \quad \text{s.t.} \quad \norm{\wt{R} - R} \le \Or(\delta).
\end{equation}
Hence, it gives us a recovered unitary with a first-order SPAM error. The depolarizing error is compensated by introducing the reference $K$ for correction.

Specifically, when the difference of the SPAM error generators is symmetric, namely
\begin{equation}
    \Delta_\text{SPAM} = \frac{1}{2} (\mf{g}_M - \mf{g}_S)\quad \text{s.t.}\ \Delta_\text{SPAM}^\top = \Delta_\text{SPAM},
\end{equation}
this procedure recovers the rotational submatrix and hence the unitary with second-order error. To see it, we first rewrite the previous expression as $B = R(I + E)$ where $E = \frac{1}{2} R^\top[\Delta_\text{SPAM}, R] + \Or(\delta^2) = \frac{1}{2}(R^\top \Delta_\text{SPAM} R - \Delta_\text{SPAM}) + \Or(\delta^2)$. We have $\norm{E} = \Or(\delta)$. Furthermore, because $\Delta_\text{SPAM}$ is symmetric, we have $E$ is symmetric. Then, according to the polar decomposition, we have
\begin{equation}\label{eqn:spam_error_symmetrization}
    \wt{R} = B(B^\top B)^{-1/2} = R(I + E)(I + 2E + \Or(\delta^2))^{-1/2} = R(I + E)(I-E + \Or(\delta^2)) = R + \Or(\delta^2).
\end{equation}
Note that the symmetric property of $E$ is used in the second equality, which is crucial for getting the second-order accuracy in SPAM error. 

In real experiment, the measurement is subjected to sampling noise due to the use of $M$ measurement repetitions per experiment. Then, the assembled PTM is subjected to an additive Gaussian noise whose variance scales as $1/M$. Suppose the depolarizing fidelity is $\alpha$. Note that there are $\Theta(L)$ experiments in total, to ensure that the overall probability is at least constantly large, the measurement sample should be at least
\begin{equation}
    M \sim \Omega(\delta^{-2c}\alpha^{-2} \log(L)).
\end{equation}
Here, $c = 1, 2$ depending on whether the difference-in-generator $\Delta_\text{SPAM}$ is symmetric. Consequently, the total error in the recovered unitary is at the level of the spam error $\Or(\delta^c)$. 

We summarize our results as follows.

\begin{thm}[Robust preprocessing via a modified unitary tomography]\label{thm:app:unitary_tomography}
    Suppose the depolarizing fidelity is $\alpha$, the magnitude of SPAM error is $\delta$, the number of segments $L$, and a reference experiment implementing identity transformation is conducted. The followings hold at a constantly large probability in the presence of SPAM error and depolarizing error.
    \begin{enumerate}
        \item Using $M = \Omega(\delta^{-2} \alpha^{-2} \log(L))$ measurement repetitions per experiment, the unitary matrices can be recovered from the tomography procedure with $2$-norm error at most $\Or(\delta)$.
        \item When the difference-in-generator $\Delta_\mathrm{SPAM}$ is symmetric, using $M = \Omega(\delta^{-4} \alpha^{-2} \log(L))$ measurement repetitions per experiment, the unitary matrices can be recovered from the tomography procedure with $2$-norm error at most $\Or(\delta^2)$.
    \end{enumerate}
\end{thm}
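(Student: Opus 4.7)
The plan is to track how SPAM and depolarizing noise propagate through the Pauli transfer matrix (PTM) representation and to isolate the rotational component via a sandwich-and-polar step. First, I would represent each stage (state preparation, unitary dynamics, depolarizing channel, measurement) as a PTM. Since SPAM channels and depolarization are trace-preserving, their PTMs are block-lower-triangular, and the composite PTM has lower-right block $\widetilde{A} = \alpha\, M R S$, where $R \in \mathrm{SO}(3)$ is the rotation corresponding to the target unitary. A separate reference experiment with $\omega = 0$ gives an estimate of $K = \alpha\, M S$. Expanding $M = I + \mathfrak{g}_M + O(\delta^2)$ and $S = I + \mathfrak{g}_S + O(\delta^2)$, a short computation shows $B := K^{-1/2}\widetilde{A}\, K^{-1/2} = R + \tfrac{1}{2}[\mathfrak{g}_M - \mathfrak{g}_S, R] + O(\delta^2)$, so the commutator structure is what the polar step must resolve.

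Next, I would write $B = R(I+E)$ with $E = \tfrac{1}{2}R^{\top}[\mathfrak{g}_M - \mathfrak{g}_S,R] + O(\delta^2) = R^{\top}\Delta_{\mathrm{SPAM}} R - \Delta_{\mathrm{SPAM}} + O(\delta^2)$ and then apply the polar decomposition $\widetilde{R} = B(B^{\top}B)^{-1/2}$. Since $\|E\| = O(\delta)$, the Neumann expansion $(I + E^{\top} + E + E^{\top}E)^{-1/2} = I - \tfrac{1}{2}(E+E^{\top}) + O(\delta^2)$ yields $\widetilde{R} = R + \tfrac{1}{2}R(E - E^{\top}) + O(\delta^2)$. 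The antisymmetric part of $E$ is the only first-order residual, which is the generic $O(\delta)$ bound. When $\Delta_{\mathrm{SPAM}}$ is symmetric, $R^{\top}\Delta_{\mathrm{SPAM}} R - \Delta_{\mathrm{SPAM}}$ is manifestly symmetric, so $E - E^{\top} = O(\delta^2)$, upgrading the bound to $\|\widetilde{R}-R\| \le O(\delta^2)$.

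Finally, I would incorporate finite-shot statistical noise. Each PTM entry reconstructed from $M$ shots carries additive fluctuations of size $O(1/\sqrt{M})$; propagating these through $K^{-1/2}\widetilde{A}K^{-1/2}$ multiplies the noise by at most $O(1/\alpha)$, because the reference matrix $K$ has smallest singular value $\Omega(\alpha)$. Choosing $M = \Omega(\delta^{-2}\alpha^{-2})$ thus keeps the sampling-induced error at or below the systematic level $O(\delta)$ with high probability, matching the bias order. Mapping $\widetilde{R}\in\mathrm{SO}(3)$ back to $\mathrm{SU}(2)$ via the standard quaternion identification (up to the $\pm I$ ambiguity resolved elsewhere in the paper) preserves the norm bound, which closes the argument.

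The main obstacle will be the second step: establishing that $E$ is genuinely symmetric (not merely symmetric to leading order) requires carefully handling the $O(\delta^2)$ contributions from both the Neumann expansion of $K^{-1/2}$ and the Taylor expansion of $M$ and $S$, and verifying that no asymmetric cross-term sneaks in at that order. A related subtlety is ensuring that the reference experiment commutes with the unitary experiment in the correct sense, i.e.\ that $K$ captures exactly the SPAM-plus-depolarizing sandwich that surrounds $R$, so that $K^{-1/2}$ can be applied symmetrically on both sides without introducing spurious asymmetric generator pieces.
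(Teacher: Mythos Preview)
Your approach matches the paper's almost exactly: the same PTM sandwich $B = K^{-1/2}\widetilde{A}K^{-1/2}$, the same expansion $B = R(I+E)$ with $E = R^\top\Delta_{\mathrm{SPAM}}R - \Delta_{\mathrm{SPAM}} + O(\delta^2)$, and the same polar-decomposition step. Your treatment is in one respect slightly cleaner than the paper's, since you expand $(B^\top B)^{-1/2} = I - \tfrac12(E+E^\top) + O(\delta^2)$ in the general case and thereby exhibit $\widetilde R = R + \tfrac12 R(E-E^\top) + O(\delta^2)$ directly, whereas the paper only writes out the symmetric-$E$ case explicitly.

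The one omission is the $\log(L)$ factor in the shot count. The theorem asks that \emph{all} $\Theta(L)$ reconstructed unitaries be accurate simultaneously with constant probability; a union bound over the $\Theta(L)$ experiments is what produces the $\log(L)$ in $M = \Omega(\delta^{-2c}\alpha^{-2}\log L)$, and you should state this explicitly. Your worry in the last paragraph is largely unnecessary: you only need $E-E^\top = O(\delta^2)$, not that $E$ is exactly symmetric, and your leading-order computation already delivers this since $R^\top\Delta_{\mathrm{SPAM}}R - \Delta_{\mathrm{SPAM}}$ is manifestly symmetric whenever $\Delta_{\mathrm{SPAM}}$ is.
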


We perform numerical simulations to validate our theoretical analysis. 
The top panels in \cref{fig:robust_unitary_tomography} confirm that the error scalings predicted in \cref{thm:app:unitary_tomography} are precisely reproduced when measurement noise is absent. 
However, as in other panels in \cref{fig:robust_unitary_tomography}, when a finite number of measurement repetitions $M$ is introduced, the overall reconstruction error saturates at a precision floor of $1/(\alpha\sqrt{M})$ imposed by measurement noise. Once the SPAM-induced recovery error falls below this threshold, further improvement requires increasing the number of measurement samples.

\begin{figure*}[htbp]
    \centering
    \includegraphics[width=\linewidth]{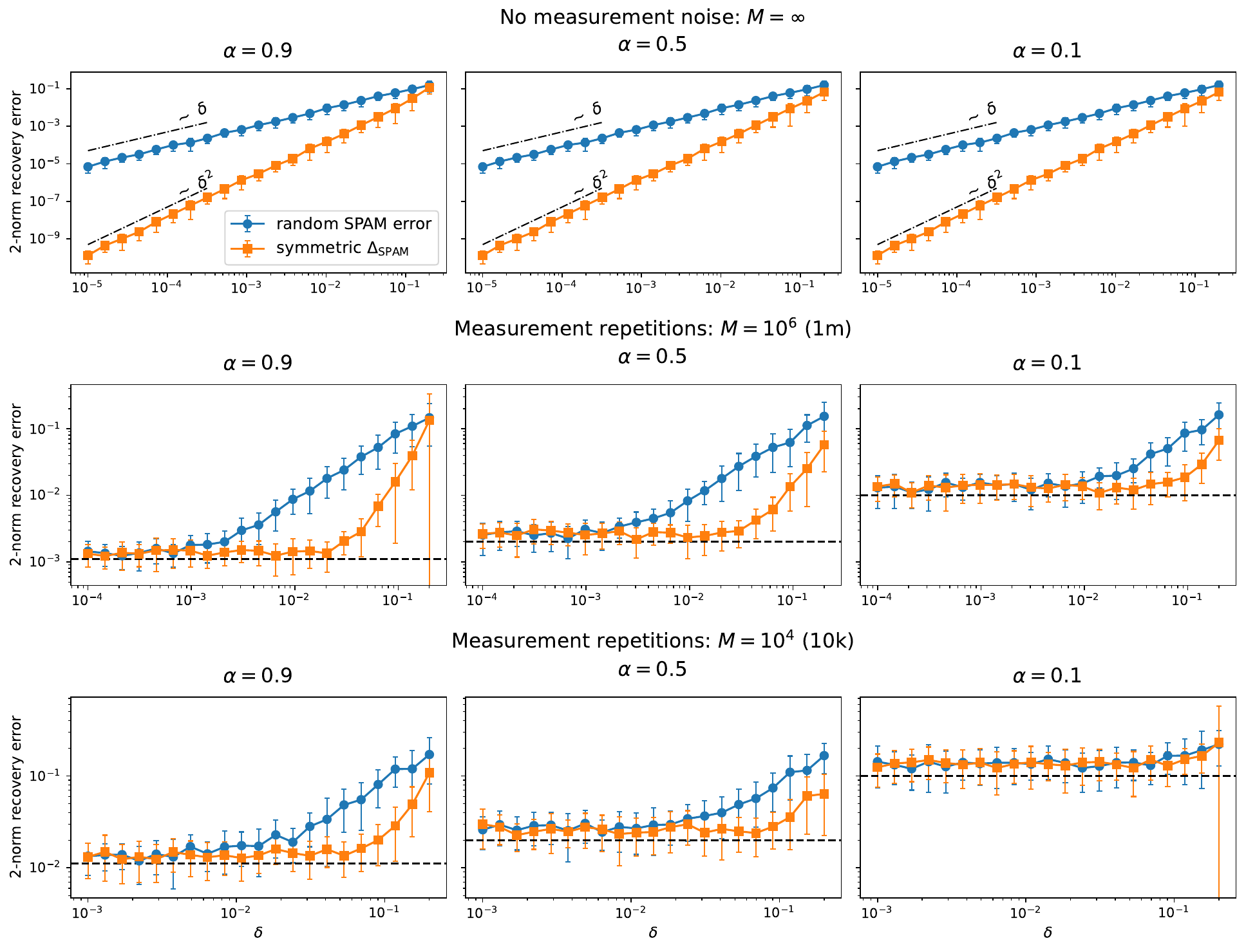}
    \caption{Robust unitary tomography. Measurement noise is modeled by sampling Bernoulli-distributed outcomes in simulated experiments used to reconstruct the Pauli Transfer Matrix. In the middle and bottom rows, the dashed lines indicate the estimated precision floor $1/(\alpha \sqrt{M})$ set by measurement noise. The error bars represent the standard deviation computed from 30 independent random simulations. Each row uses the same axis ranges.}
    \label{fig:robust_unitary_tomography}
\end{figure*}

\subsection{Resolving the sign problem in global phase alignment}\label{sec:app:pm1_phases_continuity}

A practical issue in unitary tomography is the ambiguity of the global phase due to the double-covering property of $\mathrm{SU}(2)$. That means that the tomography result cannot distinguish $U$ and $- U$. Though this does not affect typical applications, this issue may affect the simultaneous processing of a series of unitary matrices corresponding to multiple $\omega$ values unless these signs can be uniformly matched. That is, a mismatch in the sign corresponds to a sharp discontinuity from  some $\omega$ to another $\omega'$.  
We may resolve this by exploiting the continuity of the unitary with respect to $\omega$; i.e., when $|\omega' - \omega|$ is sufficiently small, so should be the corresponding unitaries.
This intuition helps resolve the sign ambiguity in the global phase, providing our sampling mesh has a sufficient number of points. We will quantify this requirement in the remainder of the section.

We first need to quantify the difference between two time evolution matrices.

\begin{lem}
    Let $U(t, 0; \omega)$ be the time evolution matrix defined in \cref{eqn:time_dep_H}. Given $\omega_1, \omega_2 \ge 0$, it holds that
    \begin{equation}
        \norm{U(t, 0; \omega_1) - U(t, 0; \omega_2)} \le \abs{\omega_1 - \omega_2} t.
    \end{equation}
\end{lem}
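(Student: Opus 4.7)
The plan is to use a standard Duhamel (variation of parameters) argument applied to the difference $\Delta(t) := U(t,0;\omega_1) - U(t,0;\omega_2)$, exploiting the unitarity of each propagator and the fact that the Hamiltonian difference $H_{\omega_1}[\phi(s)] - H_{\omega_2}[\phi(s)] = (\omega_1-\omega_2)(\cos\phi(s) X + \sin\phi(s) Y)$ has operator norm exactly $\abs{\omega_1-\omega_2}$.

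First I would write the Schr\"odinger equations for each propagator,
\begin{equation*}
\frac{\ud}{\ud s} U(s,0;\omega_k) = -\I H_{\omega_k}[\phi(s)]  U(s,0;\omega_k), \quad U(0,0;\omega_k)=I, \quad k=1,2,
\end{equation*}
and subtract them to obtain an inhomogeneous linear ODE for $\Delta(s)$:
\begin{equation*}
\frac{\ud}{\ud s}\Delta(s) = -\I H_{\omega_1}[\phi(s)] \Delta(s) - \I \bigl(H_{\omega_1}[\phi(s)]-H_{\omega_2}[\phi(s)]\bigr) U(s,0;\omega_2),
\quad \Delta(0)=0.
\end{equation*}
Next I would apply Duhamel's formula with the two-time propagator $U(t,s;\omega_1)$ as the homogeneous solution, which gives the integral representation
\begin{equation*}
\Delta(t) = -\I \int_0^t U(t,s;\omega_1) \bigl(H_{\omega_1}[\phi(s)]-H_{\omega_2}[\phi(s)]\bigr) U(s,0;\omega_2) \ud s.
\end{equation*}

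Finally, I would take the operator norm and use the triangle inequality together with the facts that $U(t,s;\omega_1)$ and $U(s,0;\omega_2)$ are unitary (hence norm-preserving) and that $\norm{H_{\omega_1}[\phi(s)]-H_{\omega_2}[\phi(s)]} = \abs{\omega_1-\omega_2} \norm{\cos\phi(s) X + \sin\phi(s) Y} = \abs{\omega_1-\omega_2}$. The integrand is thus bounded pointwise by $\abs{\omega_1-\omega_2}$, and integration over $[0,t]$ yields the claimed bound $\norm{\Delta(t)}\le \abs{\omega_1-\omega_2} t$.

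There is essentially no obstacle here; the only subtlety is justifying the use of the two-time propagator $U(t,s;\omega_1)$ in Duhamel's formula, which follows from the standard well-posedness theory for time-dependent linear ODEs with bounded, continuous generator $H_{\omega_1}[\phi(\cdot)]$ (guaranteed by smoothness of $\phi$). One could alternatively bypass Duhamel entirely by a Gr\"onwall-type argument: bound $\tfrac{\ud}{\ud s}\norm{\Delta(s)} \le \abs{\omega_1-\omega_2}$ (the $-\I H_{\omega_1}\Delta$ term contributes nothing to the growth of the norm since it generates a unitary) and integrate, giving the same bound.
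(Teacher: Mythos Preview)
Your proposal is correct and essentially identical to the paper's proof: both define the difference of propagators, derive the same inhomogeneous ODE by subtracting the two Schr\"odinger equations, apply Duhamel's principle to obtain the integral representation, and then bound the integrand pointwise by $\abs{\omega_1-\omega_2}$ using unitarity and $\norm{\cos\phi(s)X+\sin\phi(s)Y}=1$. Your additional remark about the Gr\"onwall alternative is a nice aside not present in the paper, but the main argument matches line for line.
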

\begin{proof}
    Let $E(t) := U(t, 0; \omega_1) - U(t, 0; \omega_2)$. Applying \cref{eqn:time_dep_H} gives
    \begin{equation}
    \begin{split}
        \frac{\ud}{\ud t} E(t) &= A(t; \omega_1) U(t, 0; \omega_1) - A(t; \omega_2) U(t, 0; \omega_2)\\
        &= A(t; \omega_1) E(t) + \left(A(t; \omega_1) - A(t; \omega_2)\right) U(t, 0; \omega_2)\\
        &= A(t; \omega_1) E(t) - \I (\omega_1 - \omega_2) \left(\cos(\phi(t)) X + \sin(\phi(t)) Y\right) U(t, 0; \omega_2).
    \end{split}
    \end{equation}
    Applying Duhamel's principle, we have
    \begin{equation}
        E(t) = - \I (\omega_1 - \omega_2) \int_0^t U(s, t; \omega_1) \left(\cos(\phi(s)) X + \sin(\phi(s)) Y\right) U(0, s; \omega_2) \ud s.
    \end{equation}
    Note that the integrand has a unit matrix $2$-norm. Applying the triangle inequality, we have
    \begin{equation}
        \norm{E(t)} \le \abs{\omega_1 - \omega_2} t.
    \end{equation}
    The proof is complete.
\end{proof}

Suppose the unitary matrices derived from tomography are $U_1$ and $U_2$, respectively. When the experimental error in tomography is ignored, they are the exact matrices up to undetermined signs, i.e.,  $U_i = \iota_i U(T, 0; \omega_i)$ for $i = 1, 2$ and $\iota_i \in \{-1, 1\}$. When evaluating their difference, there are two cases:
\begin{itemize}
    \item Matched signs $\iota_1 = \iota_2$. We have 
    \begin{equation}
        \alpha := \norm{U_1 - U_2} = \norm{E(T)} \le \abs{\omega_1 - \omega_2} T.
    \end{equation}
    \item Unmatched signs $\iota_1 = - \iota_2$. We have 
    \begin{equation}
        \begin{split}
            \beta := \norm{U_1 - U_2} &= \norm{U(T, 0; \omega_1) + U(T, 0; \omega_2)} = \norm{2 U(T, 0; \omega_1) - E(T)}\\
            &\ge 2 - \norm{E(T)} \ge 2 - \abs{\omega_1 - \omega_2} T.
        \end{split}
    \end{equation}
\end{itemize}
When $\abs{\omega_1 - \omega_2} T \ll 1$, the difference in these two cases is either close to zero or two. Thus, selecting $\abs{\omega_1 - \omega_2}$ sufficiently small thereby ensures that we may appropriately resolve the unitary's sign. 
By computing the difference
\begin{equation*}
    q := \norm{U_1 - U_2},
\end{equation*}
we can know whether there exists an erroneous sign flip between the two unitaries.

When the number of measurement shots in each tomography experiment is $M$, the recovered unitary matrix is subjected to experimental noise. Suppose the noise is unbiased and has variation scaling as $\sigma_\alpha, \sigma_\beta = \Or(1/M)$. Then, the spectral norm difference $q$ is either drawn from $q \sim \mc{N}(\alpha, \sigma_\alpha^2)$ if the signs match and $q \sim \mc{N}(\beta, \sigma_\beta^2)$ if the signs differ. Therefore, the simple threshold estimator $q < 1$ suffices to decide whether to flip the sign. For any prior probability $\xi \in [0, 1]$, the error probability is: 
\begin{equation}
    \begin{split}
        P(\text{error}) &:= \xi P(q > 1 | q \sim \mc{N}(\alpha, \sigma_\alpha^2)) + (1-\xi) P(q < 1 | q \sim \mc{N}(\beta, \sigma_\beta^2))\\
        &= \xi P\left(z > \frac{1 - \alpha}{\sigma_\alpha} \right) + (1-\xi) P\left(z > \frac{\beta - 1}{\sigma_\beta}\right) \quad \text{ where } z \sim \mc{N}(0, 1)\\
        &\le \xi \exp\left(- \frac{(1 - \alpha)^2}{2 \sigma_\alpha^2}\right) + (1-\xi) \exp\left(- \frac{(\beta - 1)^2}{2 \sigma_\beta^2}\right)\\
        &\le \exp\left(\Or\left(- M(1 - \abs{\omega_1 - \omega_2}T)^2 \right)\right).
    \end{split}
\end{equation}
Here, the tail bound of normal distribution, $P(z > x) \le e^{- x^2 / 2}$ when $x > 0$, is used.

This algorithm trivially scales to sequences of $\omega_i$. Suppose $N$ different $\omega$ values are used in the estimation. Then, we may set $U(\omega_1)$ to be the reference and align the sign of each $U(\omega_i)$ after with $U(\omega_{i - 1})$. 
The overall algorithm is given in \cref{alg:resolve_global_phase}. To ensure the overall probability of success is at least $1 - \zeta$, we apply the union bound and require each individual sign determination step failing with probability at most $\zeta / (N - 1)$. Each tomography experiment requires measurement shots scaling at least as
\begin{equation}
    M \ge \Omega\left( \frac{1}{(1 - \max_{i = 1, \cdots, N - 1} \abs{\omega_i - \omega_{i + 1}}T)^2} \log\left( \frac{\zeta}{N - 1} \right) \right).
\end{equation}

We summarize it as the following theorem about the performance guarantee of \cref{alg:resolve_global_phase}.

\begin{thm}
    When $\max_{i = 1, \cdots, N - 1} \abs{\omega_i - \omega_{i + 1}}T \le 1/2$, and each tomography experiment uses $M = \Omega(\log(\zeta / N))$ number of measurement shots, the probability that \cref{alg:resolve_global_phase} successfully returns a sequence of unitary matrices with fully aligned signs is at least $1 - \zeta$.
\end{thm}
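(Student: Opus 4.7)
The plan is to reduce the claim to two ingredients: a per-step separation bound between the matched-sign and mismatched-sign cases, and a per-step concentration bound for the noisy test statistic. A union bound over the $N-1$ consecutive sign-alignment decisions then yields the final probability guarantee.

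First, I would invoke the Lipschitz estimate $\norm{U(T,0;\omega_1) - U(T,0;\omega_2)} \le \abs{\omega_1 - \omega_2} T$ already established earlier in \cref{sec:app:pm1_phases_continuity}. Applied to consecutive samples, this shows that the noise-free test statistic satisfies $\alpha_i \le \abs{\omega_i - \omega_{i+1}} T \le 1/2$ when signs match, and $\beta_i \ge 2 - \abs{\omega_i - \omega_{i+1}} T \ge 3/2$ when signs are flipped. Hence the decision threshold $q < 1$ lies at distance at least $1/2$ from either population mean, establishing the clean two-class separation that makes thresholding meaningful.

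Next, I would propagate finite-shot sampling noise through the robust tomography pipeline of \cref{thm:app:unitary_tomography}. Each reconstructed unitary carries an additive operator-norm error that is approximately Gaussian with standard deviation scaling as $\sigma = \Or(1/\sqrt{M})$. Writing the observed statistic as $\wt{q}_i = q_i + \Delta_i$ with $\Delta_i$ subgaussian of scale $\sigma$, the standard Gaussian tail bound gives
\begin{equation}
\mathbb{P}(\text{mis-decide at step } i) \le \exp\left(-\Omega\left(M \left(1 - \max_{i} \abs{\omega_i - \omega_{i+1}} T\right)^2\right)\right) \le \exp(-\Omega(M)).
\end{equation}
Finally, I would apply a union bound across the $N-1$ pairwise decisions: forcing each single-step probability below $\zeta/(N-1)$ guarantees an aggregate failure probability of at most $\zeta$, which inverts to the stated budget $M = \Omega(\log(N/\zeta))$.

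The main obstacle will be rigorously justifying that $\wt{q}_i$ admits sufficiently tight subgaussian concentration, since the tomography-reconstructed unitaries pass through a nonlinear sanitization (reference sandwiching and polar projection) before the spectral norm is taken. Fortunately, the perturbation analysis developed in \cref{sec:app:robust_preproc_tomography}, in particular the expansion leading to \cref{eqn:spam_error_symmetrization}, gives a Lipschitz-type bound on these maps in the small-error regime. This reduces the concentration question to that of the raw PTM entries, which are empirical averages of Bernoulli outcomes and are handled by standard Hoeffding bounds. The remaining constants contribute only to the implicit multiplicative factor inside $\Omega(\cdot)$ and do not affect the logarithmic scaling.
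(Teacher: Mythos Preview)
Your proposal is correct and follows essentially the same route as the paper: the Lipschitz bound on $U(T,0;\omega)$ gives the matched/mismatched separation $\alpha\le 1/2$ versus $\beta\ge 3/2$, a Gaussian tail bound turns the threshold test into a per-step failure probability $\exp(-\Omega(M))$, and a union bound over the $N-1$ comparisons inverts to $M=\Omega(\log(N/\zeta))$. If anything you are slightly more careful than the paper, which simply posits Gaussian noise on $q$ rather than tracing concentration through the nonlinear tomography maps as you do.
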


\begin{algorithm}[H]
\caption{Aligning global phases of unitary matrices recovered from tomography}\label{alg:resolve_global_phase}
\begin{algorithmic}
\State \textbf{Input:} An integer $N$, a sequence of unitary matrices recovered from tomography $\{(\omega_i, U_i) : i = 1, \cdots, N\}$, where $\omega_1 < \omega_2 < \cdots < \omega_N$.
\State \textbf{Output:} A sequence of unitary matrices with aligned signs $\{(\omega_i, V_i) : i = 1, \cdots, N\}$.
\State Initialize $V_1 = U_1$.
\For{$i = 2, \cdots, N$}
    \State Compute $q = \norm{U_i - V_{i - 1}}$
    \If{$q < 1$}
    \State Set $V_i = U_i$.
    \Else
    \State Set $V_i = - U_i$.
    \EndIf
\EndFor
\State \Return $\{(\omega_i, V_i) : i = 1, \cdots, N\}$
\end{algorithmic}
\end{algorithm}

\section{Structural Perturbations to Ideal Pulse Functions in Numerical Simulations}\label{sec:app:numerical_results_perturbation}

In our simulations, the implemented pulse $\tilde{\phi}(t)$ is constructed by adding a physically motivated perturbation to the ideal continuous control waveform $\phi(t)$. The goal is to emulate imperfections from classical control electronics. This implementation preserves the continuous form of the ideal pulse and perturbs it additively through a smoothed, piecewise-defined distortion.

We first generate a coarse perturbation profile by dividing $[0, 1]$ into $L_{\mathrm{perturb}}$ equal subintervals. A random perturbation amplitude $p_j \sim \mathrm{Unif}[-\eta, \eta], \eta = 0.5$ is assigned to each interval, producing a piecewise-constant perturbation function $p_{\mathrm{pc}}(t)$ where $p_{\mathrm{pc}}(t) = p_j$ for $t$ in the $j$-th segment. This represents low-resolution imperfections or drift in classical hardware.

Next, this coarse perturbation is smoothed to emulate bandwidth limits of microwave control. In the implementation, we convolve $p_{\mathrm{pc}}(t)$ with a Gaussian window of width $w = 0.02$, producing a differentiable smoothed perturbation $p_{\mathrm{smooth}}(t)$. The final implemented pulse is the ideal waveform corrupted by this filtered perturbation,
\[
\tilde{\phi}(t) = \phi(t) + p_{\mathrm{smooth}}(t),
\]
which retains the high-resolution structure of the target pulse while incorporating realistic low-frequency distortions.

This method yields perturbations that are (i) coarse in origin, reflecting classical resolution limits, yet (ii) smoothed prior to application, reflecting physical bandwidth constraints. Unlike models based on segment-averaged approximations of $\phi(t)$, our perturbation model preserves the continuous ideal pulse and only distorts it through an independent filtered noise profile.


\end{document}